\newcommand\oprocendsymbol{\hbox{$\square$}}
\newcommand\oprocend{\relax\ifmmode\else\unskip\hfill\fi\oprocendsymbol}
\def\eqoprocend{\tag*{$\square$}}
\DeclareMathOperator{\dom}{dom}
\newcommand{\rank}{\operatornamewithlimits{rank}}
\newcommand{\im}{\operatornamewithlimits{im}}
\newtheorem{theorem}{Theorem}
\newtheorem{lemma}{Lemma}
\newtheorem{remark}{Remark}
\newtheorem{definition}{Definition}
\newcommand{\R}{{\mathbb{R}}} %
\newcommand{\N}{{\mathbb{N}}} %
\newcommand{\hw}{{\hat{w}}} %
\newcommand{\tu}{{\tilde{u}}} %
\newcommand{\tx}{{\tilde{x}}} %
\newcommand{\tw}{{\tilde{w}}} %
\newcommand{\hctrbx}{{\mathbb{L}_{\textup{x}}^{\textup{\tiny C}}}} %
\newcommand{\hctrbxsi}{{\mathbb{L}_{\textup{x, 1}}^{\textup{\tiny C}}}} %
\newcommand{\hctrbxmi}{{\mathbb{L}_{\textup{x}, \textup{m}}^{\textup{\tiny C}}}} %
\newcommand{\hctrbxu}{{\mathbb{L}_{\textup{xu}}^{\textup{\tiny C}}}} %
\newcommand{\hctrbxusi}{{\mathbb{L}_{\textup{xu, 1}}^{\textup{\tiny C}}}} %
\newcommand{\hctrbxumi}{{\mathbb{L}_{\textup{xu}, \textup{m}}^{\textup{\tiny C}}}} %
\newcommand{\sctrbx}{{\mathbb{L}_{\textup{x}}^{\textup{\tiny D}}}} %
\newcommand{\sctrbxsi}{{\mathbb{L}_{\textup{x, 1}}^{\textup{\tiny D}}}} %
\newcommand{\sctrbxmi}{{\mathbb{L}_{\textup{x}, \textup{m}}^{\textup{\tiny D}}}} %
\newcommand{\sctrbxu}{{\mathbb{L}_{\textup{xu}}^{\textup{\tiny D}}}} %
\newcommand{\sctrbxusi}{{\mathbb{L}_{\textup{xu, 1}}^{\textup{\tiny D}}}} %
\newcommand{\sctrbxumi}{{\mathbb{L}_{\textup{xu}, \textup{m}}^{\textup{\tiny D}}}} %
\newcommand{\dw}{{\dot{w}}} %
\newcommand{\dx}{{\dot{x}}} %
\newcommand{\du}{{\dot{u}}} %
\newcommand{\dimx}{{n}} 
\newcommand{\dimu}{{m}}
\newcommand{\PE}{WPE} %
\newcommand{\C}[1]{{\mathcal{C}^\infty_b(\R^{#1 })}}
\newcommand{\D}[1]{{\ell_\infty(\R^{#1})}}
\newcommand{\Cpe}[1]{\Omega^{\textup{\tiny C}}_{#1}}
\newcommand{\Dpe}[1]{\Omega^{\textup{\tiny D}}_{#1}}
\newcommand{\Cppe}[2]{\Omega^{\textup{\tiny C}}_{#1, #2}}
\newcommand{\Dppe}[2]{\Omega^{\textup{\tiny D}}_{#1, #2}}
\newcommand{\Csr}{{\Psi^{\textup{\tiny C}}}}
\newcommand{\Dsr}{{\Psi^{\textup{\tiny D}}}}
\newcommand{\xx}{{{\bm{x}}}} %
\newcommand{\UU}{{{\bm{U}}}} %
\newcommand{\WW}{{{\bm{W}}}} %
\newcommand{\yy}{{{\bm{y}}}} %
\newcommand{\uu}{{{\bm{u}}}} %
\newcommand{\vv}{{{\bm{v}}}} %
\newcommand{\ww}{{{\bm{w}}}} %
\newcommand{\bw}{{\bar{w}}} %
\newcommand{\bx}{{\bar{x}}} %
\newcommand{\qq}{{\mathrm{q}}} %
\newcommand{\QQ}{{\mathrm{Q}}} %
\newcommand{\dd}{{\mathrm{d}}} %
\newcommand{\DD}{{\mathrm{D}}} %
\newcommand{\tet}{t} %
\newcommand{\tU}{\tilde{U}} %
\title{\Large On Sufficient Richness for 
	Linear Time-Invariant Systems}
\begin{document}

\author{Marco Borghesi, Simone Baroncini, Guido Carnevale, Alessandro Bosso, and Giuseppe Notarstefano
	\thanks{
		Funded by the European Union - PRIN 2022 ECODREAM Energy COmmunity management: DistRibutEd AlgorithMs and toolboxes for efficient and sustainable operations code [202228CTKY\_002] - CUP [J53D23000560006].
		M. Borghesi, S. Baroncini, G. Carnevale, A. Bosso, and G. Notarstefano are with the Department of Electrical, Electronic, and Information Engineering, Alma Mater Studiorum - Universit\`a di Bologna, 40136 Bologna, Italy. 
		The corresponding author is M.~Borghesi (e-mail: m.borghesi@unibo.it).}
}

\maketitle

\begin{abstract}
 	Persistent excitation (PE) is a necessary and sufficient condition
	for uniform exponential parameter convergence in several adaptive,
	identification, and learning schemes. In this article, we consider,
	in the context of multi-input linear time-invariant (LTI) systems,
	the problem of guaranteeing PE of commonly-used regressors by
	applying a sufficiently rich (SR) input signal. Exploiting the
	analogies between time shifts and time derivatives, we state simple
	necessary and sufficient PE conditions for the discrete- and
	continuous-time frameworks.  Moreover, we characterize the shape of
	the set of SR input signals for both single-input and multi-input
	systems.  Finally, we show with a numerical example that the derived
	conditions are tight and cannot be improved without including
	additional knowledge of the considered LTI system.
\end{abstract}

\begin{IEEEkeywords}
	\noindent
	Persistent excitation, sufficient richness, adaptive control, data-driven control, linear systems.
\end{IEEEkeywords}

\section{Introduction}\label{sec:introduction}

	\IEEEPARstart{70}{'s} were the years in which persistent excitation
	(PE) was recognized as a fundamental property in adaptive control and
	system identification \cite{morgan1977stability, morgan1977uniform,
		anderson1977exponential, narendra2012stable, sastry2011adaptive,
		ioannou1996robust}. While the original results were given in the
	continuous-time domain, they were soon extended to the discrete time
	\cite{bitmead1980lyapunov} (see \cite{narendra1987persistent} for a
	comprehensive review). In the early $2000$s, the conditions for PE
	were revisited and refined to include a larger class of dynamical
	systems and adaptive control schemes \cite{loria1999new,
		loria2000ugas, panteley2001relaxed, loria2002uniform,
		loria2003persistency, loria2005nested, saoud2024hybrid}.  These
	results guarantee that PE of certain internal signals (``regressors")
	implies uniform robust parameter convergence and thus are important
	for system identification \cite{aastrom1965numerical,
		aastrom1971system, ljung1990adaptation}, filtering
	\cite{sondhi1976new, weiss1979digital, goodwin1980discrete,
		anderson1982exponential, elliott1985global}, adaptive observers
	\cite{kreisselmeier1977adaptive, ioannou1996robust}, model reference
	adaptive control \cite{anderson1977approach, boyd1983parameter,
		boyd1986necessary}, gradient-based methods
	\cite{kudva1974identification, bitmead1984persistence}, and
	reinforcement learning \cite{possieri2022value, carnevale2023onpolicy,
		borghesi2024mr}.
	
	\noindent
	In the aforementioned literature, a signal is considered persistently
	exciting if, within a moving time window of fixed length, the signal
	spans all directions with a certain minimum amount of energy,
	uniformly over time.  A similar (though, different) notion of PE has
	been made famous by Willems et al. \cite{willems2005note} in $2005$.
	According to \cite{willems2005note}, PE is a property involving a
	finite amount of data, namely, a PE signal spans all directions within
	a given finite-time window.  To clarify the exposition, we denote this
	notion of PE as ``\PE{}" (``Willems' PE"). The interpretation of
	the \PE{} condition is different from the ``classical" PE, since \PE{}
	is seen as the ability of gathered data to represent all the
	trajectories of a linear time-invariant (LTI) system.
	\PE{} has been proved to be important for the effectiveness of
	recently developed data-driven controllers and reinforcement learning
	algorithms \cite{de2019formulas, van2020data, berberich2020data,
		van2020willems, de2023learning, lopez2023efficient}.
	Motivated by the above discussion, in this article we consider LTI
	systems and we derive conditions for an input signal to obtain a PE
	output. In accordance with part of the literature, we use the term
	sufficient richness (SR) to denote, for a dynamical system, the
	property of an input signal that guarantees PE of an output. In the
	following, we summarize the existing results pertaining to sufficient
	richness for both PE and \PE{}.
	
	\subsubsection{PE}
		In the context of single-input discrete-time systems, in
		\cite{aastrom1965numerical} a certain property on time shifts of an
		input signal (``PE of order") was proven to guarantee parameter
		convergence in maximum likelihood estimation algorithms. In
		\cite{ljung1971characterization}, the ``PE of order" concept was
		traduced into a condition on the spectrum of the input signal, later
		named as ``SR of order". In \cite{yuan1977probing} the authors
		consider a multi-input system with almost periodic and periodic input
		signal, and derive conditions stating how many sinusoids should be
		placed into the components of the input to obtain PE of the
		state-input pair. A more recent, similar result can be found in
		\cite{ioannou1996robust}. In \cite{anderson1982exponential}, a
		stochastic version of the ``PE of order" condition was found to prove
		convergence in single-input systems. In \cite{boyd1983parameter} and
		later in \cite{boyd1986necessary}, conditions on the spectral lines
		(and later spectrum) of the scalar input signal were found to achieve
		PE of single-input systems.  These spectral conditions were extended
		to the multivariable case in \cite{nordstrom1987persistency}.  In
		\cite{moore1983persistence} (and later in \cite{bai1985persistency,
			green1986persistence}), a sufficient condition on the time shifts of
		the input was found also for the multi-input case.
		In \cite{narendra1987persistent} it was shown that if a signal is SR
		for a single-input stable system, then it is SR for all single-input
		stable systems. In \cite{mareels1988persistency}, LTV continuous-time
		systems are approached, and the authors propose for the first time a
		definition for ``SR of order" involving a low-pass filtering of the
		input signal.  Lastly, we cite \cite{padoan2017geometric}, where the
		authors study when an autonomous system produces PE solutions
		depending on the initial conditions and the properties of the system.

	\subsubsection{\PE}
		For this notion, thanks to the finite amount of data, a matrix
		representation of signals, the Hankel matrix, is considered. In the
		corollaries of Willems' Lemma \cite{willems2005note}, a finite-time
		equivalent of the results in \cite{moore1983persistence,
			bai1985persistency} for PE is provided. Later, in
		\cite{markovsky2023persistency} it was proved for single-input systems
		that these conditions (on input time-shifts) are both necessary and
		sufficient to achieve \PE. In \cite{coulson2022quantitative} and
		\cite{berberich2023quantitative}, the authors robustify the definition
		of \PE{} by introducing a lower bound, and they show how to modify
		existing results to guarantee it through the input. At last, both in
		\cite{rapisarda2022persistency} and in \cite{lopez2022continuous}, a
		``\PE{} of order" notion for continuous-time systems is proposed (the
		first involving time derivatives and the second a ``sampled" Hankel
		matrix), and results on the input derivatives or on piecewise constant
		input signals to obtain \PE{} of state-input signals are given.

	\subsubsection*{Contribution}
	
		The main article contribution is twofold: \textbf{(i)} we find
		necessary and sufficient conditions on the input signal for obtaining
		persistently excited state (state-input) signals in LTI systems;
		\textbf{(ii)} we unify the existing literature by proposing a concept
		of sufficient richness clearly distinct from the one of persistency of
		excitation and by developing all the new results and proofs within a
		common notation for the discrete-time and continuous-time frameworks.
		
		\noindent
		We split the first contribution into three parts.  First, we derive
		new necessary conditions for PE in multi-input systems. To the
		authors' knowledge, the only necessary result in the literature is
		given for \PE{} in discrete-time single-input systems
		\cite[Thm. 3]{markovsky2023persistency}. In order to find these
		conditions, we leverage the concept of partial persistence of
		excitation -namely, signals that persistently span only subspaces of
		the whole space they live in-, which first appeared in
		\cite[Def. $3$]{narendra1987persistent} (under a different name) but,
		to the authors' knowledge, was not used for this purpose.  Second, we
		derive sufficient conditions to obtain PE in LTI systems. While the
		given proof is new, the result is partially known, in the sense that
		the discrete-time part of the result was given in
		\cite[Thm. $1$]{bai1985persistency} and the continuous-time part was
		given in \cite[Prop. $1$]{rapisarda2022persistency} only for the
		finite-time \PE{} notion. In particular, our result is different from
		the one in \cite{rapisarda2022persistency} since we ensure lower
		bounds on the PE condition uniformly in time along infinitely long
		system trajectories.  At last, we combine the obtained necessary and
		sufficient results to derive and discuss the shape of the set of
		sufficiently rich signals for all stable, reachable LTI systems. It
		turns out that, for single-input systems, this set has an explicit
		characterization (in accordance with
		\cite[Thm. $1$]{narendra1987persistent}).  We demonstrate the
		tightness of the theoretical results with numerical examples.

		\noindent
		The second contribution of this work has the following
		implications. First, by separating the concepts of PE and SR, we do
		not need to talk about “PE of order" or “SR of order". This
		distinction can help to merge slightly different definitions into a
		common terminology. Second, by unifying the results and the notation
		for the discrete-time and the continuous-time frameworks, we better
		highlight the structural properties which are intrinsic to linear
		systems. We believe this may be a fundamental step toward the
		extension of these results to nonlinear systems.

		The paper is organized as follows. 
		In Section~\ref{sec:problem_setup}, we introduce some preliminary definitions for the terminology which is used to present the results. Then, we formulate the concepts of PE and SR and we present the problem addressed in this paper.
		In Section~\ref{sec:main_result}, we give the main contribution of the article, namely, we state necessary and sufficient conditions for PE in linear systems. 
		Finally, Section~\ref{sec:example} corroborates our theoretical results with a numerical example. 
		All the proofs are in the Appendix.

	\subsubsection*{Notation} 
		the set of real numbers is denoted by
		$\R$, while $\R_{\geq 0}\coloneqq [0, +\infty)$. $\N$ is the set of natural numbers.
		$(\cdot)^\top$ denotes the transpose of a real vector (matrix). 
		A matrix $M\in \R^{n\times n}$ is positive definite in $\R^n$, denoted $M>0$, if $x^\top M x >0$ for all $x \in \R^n$, $x \neq 0$. 
		We use bold letters $\xx, \uu, \ww$ to denote time signals as elements of infinite-dimensional vector spaces, while we use $x, u, w$ to indicate elements of finite-dimensional vector spaces. $\D{n}$ is the space of all bounded functions $\ww:\N\to \R^n$, and $\C{n}$ is the space of all smooth functions $\ww:\R_{\geq 0}\to \R^n$ with bounded derivatives. We equip these sets with the $\mathcal{L}_\infty$ norm $\|\ww\|_\infty\coloneqq \sup_{t \in \N}|w_t|$ (resp., $\|\ww\|_\infty\coloneqq \sup_{t \in \R_{\geq0}}|w(t)|$ for the continuous time). We use the notation $(x, u)\in \R^{n+m}$ to denote a column vector stacking $x\in \R^n$ and $u\in \R^m$. Respectively, $(\xx, \uu)$ denotes the signal collecting $(x_t, u_t)$ for each time instant.

\section{Preliminaries and Problem Setup}\label{sec:problem_setup}

	\subsection{Preliminaries}
	
		We introduce the main definitions of the paper. We start from the
		notion of bounded-input bounded-output dynamical systems in both
		discrete and continuous time.
		\begin{definition}[BIBO dynamical system - DT] 
			A bounded-input bounded-output (BIBO) discrete-time (DT) dynamical system is a map
			\begin{equation}
				\begin{split}
					\Sigma: \D{m} \times \R^n &\to \D{p} \\
					\uu, x_0 &\mapsto \yy
				\end{split}
			\end{equation}
			that can be written as
			\begin{equation}
				\begin{split}
					x_{t+1} &= f(x_t, u_t), \qquad x_0 \in \R^n,\\
					y_t &= h(x_t, u_t),
				\end{split}
			\end{equation}
			where $u_t\in \R^m, x_t\in \R^n, y_t\in \R^p$ are the input, the state, and the output at time $t\in \N$, while $f, h$ are the dynamic and the output map. 
			\oprocend
		\end{definition}

		\begin{definition}[BIBO dynamical system - CT] 
			A bounded-input bounded-output (BIBO) continuous-time (CT) dynamical system is a map
			\begin{equation}
				\begin{split}
					\Sigma: \C{m} \times \R^n &\to \C{p}\\
					\uu, x(0) &\mapsto \yy
				\end{split}
			\end{equation}
			that can be written as
			\begin{equation}
				\begin{split}
					\dx(t) &= f(x(t), u(t)), \qquad x(0) \in \R^n,\\
					y(t) &= h(x(t), u(t)),
				\end{split}
			\end{equation}
			where $u(t)\in \R^m, x(t)\in \R^n, y(t)\in \R^p$ are the input, the state, and the output at time $t\in \R_{\geq 0}$, while $f, h$ are the dynamic and the output map. 
			\oprocend
		\end{definition}
		
		\noindent
		Next, we introduce the shift and derivative operators, which will be used to state the main results of the article. 
		
		\begin{definition}[Shift operators]\label{def:shift_operator}
			Given $k, d \in \N$, we define the shift operator as
			\begin{equation}
				\begin{split}
					\qq^k: \D{d} &\to \D{d}\\
					w_t & \longmapsto 
					\begin{cases}
						w_{t-k} & t\geq k\\
						0 & t< k,
					\end{cases} \quad\forall t \in \N.\label{eq:shift}
				\end{split}
			\end{equation}	
			Furthermore, we introduce the following multi-shift operator:
			\begin{equation}
				\begin{split}
					\QQ^k: \D{d} &\to \D{dk}\\
					\ww & \longmapsto (\qq^{k-1}\ww, \ldots, \qq^0 \ww).
				\end{split}
			\end{equation}
			\oprocend
		\end{definition}

		\begin{definition}[Derivative operators]
			Given $k, d \in \N$, we define the derivative operator as
			\begin{equation}
				\begin{split}
					\dd^k: \C{d} &\to \C{d}\\
					w(t) & \longmapsto \frac{d^k w}{dt^k}(t), \quad \forall t \in \R_{\geq 0}.
				\end{split}
			\end{equation}	
			Furthermore, we introduce the following multi-derivative operator:
			\begin{equation}
				\begin{split}
					\DD^k: \C{d} &\to \C{dk}\\
					\ww & \longmapsto (\dd^{k-1}\ww, \ldots, \dd^0 \ww).
				\end{split}
			\end{equation}	
			\oprocend
		\end{definition}

	\subsection{Persistency of Excitation and Sufficient Richness}

		We begin by providing the formal definition of a PE signal in both discrete and continuous-time.
		
		\begin{definition}[Discrete-Time PE \cite{bai1985persistency}]
			\label{def:PE_DT}
			A signal $\ww\in \D{d}$ is Persistently Exciting if there exist positive scalars $\alpha$ and $T$ such that
			\begin{align}\label{eq:PE_dt}
				\sum_{\tau=t}^{t + T} w_\tau w_\tau^\top\geq \alpha I,
			\end{align}
			for all $t \in \N$. \oprocend
		\end{definition}
		
		\begin{definition}[Continuous-Time PE \cite{anderson1977exponential}]
			\label{def:PE_CT}
			A signal $\ww \in \C{d}$ is Persistently Exciting if there exist positive scalars $\alpha$ and $T$ such that
			\begin{align}\label{eq:PE_ct}
				\int_{t}^{t + T} w(\tau)w(\tau)^\top \textup{d}\tau \geq \alpha I,
			\end{align}
			for all $t \in \R_{\geq 0}$. \oprocend
		\end{definition}
		\begin{remark}
			In continuous-time, PE is typically defined by assuming some
			degree of smoothness of the involved signals
			\cite{morgan1977uniform,ioannou1996robust,
				anderson1977exponential}. Also boundedness is often
			assumed, in both continuous-time and discrete-time
			\cite{bitmead1984persistence}.  In this work, we restrict to
			the class of signals to $\C{d}$ and $\D{d}$ for simplicity,
			and we leave the analysis of non-smooth or unbounded signals
			for future developments.\oprocend
		\end{remark}	
		Similarly to \cite[Def. $2$]{narendra1987persistent}, define the subsets
		\begin{equation}
			\begin{split}
				\Dpe{d}&\coloneqq \{\ww\in\D{d}: \ww \text{ satisfies \eqref{eq:PE_dt}}\}\\
				\Cpe{d}&\coloneqq \{\ww\in\C{d}: \ww \text{ satisfies \eqref{eq:PE_ct}}\},
			\end{split}
		\end{equation}
		which are the sets of all the PE signals in the discrete- and continuous-time setting.
		The following lemma characterizes these sets.
		\begin{lemma}\label{lemma:PE_cone}
			$\Dpe{d}$ (resp., $\Cpe{d}$) is an open cone in $\D{d}$ (resp., $\C{d}$).\oprocend
		\end{lemma}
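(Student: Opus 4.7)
My plan is to verify the two defining properties separately and in parallel for the discrete- and continuous-time cases, exploiting the formal analogy between sums and integrals that the paper emphasizes.

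\textbf{Cone property.} For $\ww \in \Dpe{d}$ with constants $(\alpha, T)$ and any $\lambda > 0$, I would simply factor the scalar out of the outer product:
\begin{equation*}
    \sum_{\tau=t}^{t+T} (\lambda w_\tau)(\lambda w_\tau)^\top \;=\; \lambda^2 \sum_{\tau=t}^{t+T} w_\tau w_\tau^\top \;\geq\; \lambda^2 \alpha\, I,
\end{equation*}
so $\lambda \ww \in \Dpe{d}$ with constants $(\lambda^2 \alpha, T)$. An identical computation with integrals handles $\Cpe{d}$. Note that the origin is excluded because no PE signal can satisfy $\ww \equiv 0$, so ``cone'' here is understood in the sense of invariance under strictly positive rescaling.

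\textbf{Openness.} Fix $\ww \in \Dpe{d}$ with constants $(\alpha, T)$ and set $B \coloneqq \|\ww\|_\infty$, which is finite since $\ww \in \D{d}$. For an arbitrary perturbation $\boldsymbol{\delta} \in \D{d}$ with $\|\boldsymbol{\delta}\|_\infty \leq \varepsilon$, expand
\begin{equation*}
    (w_\tau + \delta_\tau)(w_\tau + \delta_\tau)^\top = w_\tau w_\tau^\top + w_\tau \delta_\tau^\top + \delta_\tau w_\tau^\top + \delta_\tau \delta_\tau^\top.
\end{equation*}
The last three terms have spectral norm bounded by $2 B \varepsilon + \varepsilon^2$, so summing over $\tau = t, \ldots, t+T$ yields a matrix of spectral norm at most $(T+1)(2B\varepsilon + \varepsilon^2)$, uniformly in $t$. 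Choosing $\varepsilon$ small enough that $(T+1)(2B\varepsilon + \varepsilon^2) < \alpha/2$ gives
\begin{equation*}
    \sum_{\tau=t}^{t+T} (w_\tau + \delta_\tau)(w_\tau + \delta_\tau)^\top \;\geq\; \tfrac{\alpha}{2}\, I \qquad \forall t \in \N,
\end{equation*}
which shows that the open $\varepsilon$-ball around $\ww$ in $(\D{d}, \|\cdot\|_\infty)$ lies in $\Dpe{d}$. The continuous-time argument is literally the same with $\int_t^{t+T}$ in place of $\sum_{\tau=t}^{t+T}$ and $T$ in place of $T+1$ for the length of the integration window.

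\textbf{Potential obstacle.} There is no deep obstacle; the only subtle point is that openness requires $B = \|\ww\|_\infty$ to be finite, so that the cross terms $w_\tau \delta_\tau^\top + \delta_\tau w_\tau^\top$ can be uniformly controlled by the perturbation size. This is precisely why the ambient spaces are taken to be $\D{d}$ and $\C{d}$ (bounded, respectively smooth-with-bounded-derivatives), equipped with the $\mathcal{L}_\infty$ norm; without an a priori bound on $\ww$, the argument would fail for perturbations concentrated where $\ww$ is large.
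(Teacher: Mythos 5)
Your proof is correct and follows essentially the same route as the paper's: positive scaling is verified directly, and openness is obtained by expanding the perturbed outer product and uniformly bounding the cross terms via $\|\ww\|_\infty$, exactly as in Appendix~\ref{proof:PE_cone} (the paper additionally drops the positive-semidefinite term $\Delta w \Delta w^\top$ rather than bounding it, a cosmetic difference). No further comment is needed.
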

		The proof of Lemma~\ref{lemma:PE_cone} is provided in Appendix~\ref{proof:PE_cone}.
		\begin{remark}\label{remark:PE_robust}
			Notice that Lemma~\ref{lemma:PE_cone} implies robustness of the PE property 
			to sufficiently small signal perturbations (in a $\mathcal{L}_\infty$ sense), and it is a reformulation of \cite[Lemma $6.1.2$]{sastry1990adaptive}.
			In other words, being $\Cpe{d}$ an open cone, 
			for any a given signal $\ww \in \Cpe{d}$ we can always find a sufficiently close signal $\ww^\prime$ such that $\ww^\prime\in \Cpe{d}$. 
			However, this robustness is not the same for every PE signal.
			\oprocend
		\end{remark}

		\noindent
		Before addressing the main questions we want to answer, we clarify how the notion of ``sufficient richness" (SR) is intended in this work. 
		Consistently with the adaptive literature, we say an \emph{input} signal is sufficiently rich if, injected into a \emph{dynamical system}, it guarantees persistence of excitation of the \emph{output} signal we are interested in.
		Notice that imposing the PE to some output signal $\yy$ via input $\uu$ is a problem which is not in general guaranteed to be solvable. Furthermore, an input $\uu$ may be sufficiently rich for a specific system but not for a different one. 
		Specifically, the following two aspects determine the solvability (and the solution) of the problem:
		\begin{itemize}
			\item the structural properties of the dynamical system $\Sigma$ we are considering (maps $f$ and $h$);
			\item the set of initial conditions $x(0)$ of the systems which we are interested in.
		\end{itemize}
		These reasonings motivate the following formal definition for SR.	
		\begin{definition}[Sufficient Richness]\label{def:general_SR}
			Given a discrete-time (resp. continuous-time) dynamical system $\Sigma$ and initial condition $x_0$ (resp. $x(0)$)
			we say that the input signal $\uu\in\D{m}$ (resp. $\C{m}$) is sufficiently rich for $(\Sigma, x_0)$ if
			\begin{equation*}
				\yy = \Sigma(\uu, x_0) \in \Dpe{p} (\text{resp. } \Cpe{p}). \eqoprocend
			\end{equation*}
		\end{definition}
		Moreover, we denote the set of all SR input signals for system $\Sigma$ and initial condition $x_0 \in \R^n$  as
		\begin{equation}\label{eq:SR_sets}
			\begin{split}
				\Dsr\left(\Sigma, x_0\right) &\!\coloneqq\! \{\uu \!\in\! \D{m}: \Sigma(\uu, x_0)\in \Dpe{p}\}\\
				\Csr\left(\Sigma, x(0)\right) &\!\coloneqq\! \{\uu \!\in \!\C{m}: \Sigma(\uu, x(0))\in \Cpe{p}\}.
			\end{split}
		\end{equation} 
		\begin{remark}
			Notice that any characterization of the sets $\Csr(\cdot), \Dsr(\cdot)$ for the SR property as per Definition \ref{def:general_SR}, requires to explicitly state the considered system and initial conditions for which the characterization of SR is valid. \oprocend
		\end{remark}

	\subsection{Problem Statement} \label{sec:problem_statement}
	
		Consider the discrete- or continuous-time linear time invariant system of the form
		\begin{equation}\label{eq:plant_dynamics}
			\begin{split}
				x_{t+1} &= Ax_t + Bu_t, \hspace{1cm}  \dx(t)= Ax(t) + Bu(t),\\
				y_t & = Cx_t + Du_t \hspace{1.15cm} y(t) = Cx(t) + D u(t) 
			\end{split}
		\end{equation}
		where $x \in \R^n$ is the state, $u \in \R^m$ is the control input, 
		$y\in \R^p$ is the output, 
		$A\in \R^{\dimx\times \dimx}$ is the state matrix, $B \in \R^{\dimx\times \dimu}$ is the input matrix, while $C \in \R^{p \times \dimx}$ and $D \in \R^{p \times \dimu}$ are the output matrices.
		The aim of this article is twofold, namely, i) given meaningful classes of systems of the form \eqref{eq:plant_dynamics} and initial conditions, find an explicit characterization of the sets of SR inputs $\Csr(\cdot), \Dsr(\cdot)$; ii) obtain a characterization of $\Csr(\cdot), \Dsr(\cdot)$ which underlines the analogies between the discrete-time and the continuous-time framework.
		
		\noindent
		Concerning the first objective, we introduce now the classes of systems which will be investigated. First, since we are not interested in solving a control problem (and having defined PE only for bounded signals), we consider only asymptotically stable systems, i.e., systems such that $A$ is Schur (resp., Hurwitz). Intuitively, this assumption implies that the obtained characterizations for $\Csr(\cdot), \Dsr(\cdot)$ will not depend on the chosen initial condition.
		Second, given a stable system, it is known \cite[Lemma $5$]{narendra1987persistent}, \cite{ioannou1996robust} that a necessary condition for having PE state trajectories is that $(A, B)$ must be reachable. 
		Next, in this article we are interested in considering simple but significant output maps, namely, we consider the cases in which the output of a system is its state trajectory $\xx$ or the state-input stack $(\xx, \uu)$ (notice that PE of $(\xx, \uu)$ ensures PE of any surjective, constant linear map of $(\xx, \uu)$ \cite[Lemma $4.8.3$]{ioannou1996robust}).
		To include all the mentioned properties, define the classes of discrete-time systems
		\begin{equation}\label{eq:output_classes}
			\begin{split}
				\sctrbx &\coloneqq \{\Sigma  \text{ of the form \eqref{eq:plant_dynamics}}: A \text{ Schur}, (A, B) \text{ reachable}, C=I_n, D=0_m\}, \\
				\sctrbxu &\coloneqq \Big\{\Sigma  \text{ of the form \eqref{eq:plant_dynamics}}: A \text{ Schur}, (A, B) \text{ reachable},  C =
				\begin{bmatrix}
					I_n \\ 
					0_{m\times n} 
				\end{bmatrix}, 
				D=\begin{bmatrix}
					0_{n\times m}\\ I_m 
				\end{bmatrix}\Big\},
			\end{split}
		\end{equation}
		and their corresponding continuous-time counterparts $\hctrbx, \hctrbxu$.
		Given these definitions, when $\Sigma \in \sctrbx$ we denote with some abuse of notation $\Sigma(\uu, x_0)$ as $\xx$, and when $\Sigma \in \sctrbxu$ we denote $\Sigma(\uu, x_0)$ as $(\xx, \uu)$.
		Finally, we are interested in separating the results for single-input and multi-input systems, therefore we introduce the notation
		\begin{equation}\label{eq:simi_systems}
			\begin{split}
				&\sctrbxsi \!\!\coloneqq\! \{\Sigma \in\sctrbx : m=1\}, \hspace{0.1cm} \sctrbxusi \!\!\coloneqq\! \{\Sigma \!\in\!\sctrbxu : m=1\},\\
				&\sctrbxmi\! \!\coloneqq\! \{\Sigma \in\sctrbx : m>1\}, \hspace{0.1cm} \sctrbxumi \!\!\coloneqq\! \{\Sigma \!\in\!\sctrbxu : m>1\},
			\end{split}
		\end{equation}
		and their corresponding continuous-time counterparts $\hctrbxsi, \hctrbxmi, \hctrbxusi, \hctrbxumi$.
		\noindent
		Concerning the second objective, namely the problem of unifying the continuous and discrete-time results, our major concern is to avoid incompatible descriptions. 
		An example of such a language is the Hankel matrix notation, since it is not well-suited for the continuous-time framework. The idea is to rely instead on the correspondence between time shifts in discrete-time and derivatives in continuous-time, exploiting the operators defined in Section \ref{sec:problem_setup}.

\section{Main Results}\label{sec:main_result}
	
	\subsection{Discrete-Time Framework}
	
		\noindent
		We start now by giving necessary results for discrete-time systems. To the best of the authors' knowledge, this result is new, since necessary conditions have been recently found only for single-input systems \cite{markovsky2023persistency} (for the framework of finite-time excitation).
		Before presenting the theorem, we introduce a notion to characterize signals that persistently span only subspaces of the space they live in.  
		
		\begin{definition}[Discrete-Time Partial PE]
			\label{def:partial_PE_DT}
			A signal $\ww\in \D{d}$ is Partially Persistently Exciting (PPE) of degree 
			$d^\prime \leq d$ (and we write $\ww \in \Dppe{d}{d^\prime}$) if there exists a linear surjective map $P:\R^{d}\to \R^{d^\prime}$ such that $P\ww\in \Dpe{d^\prime}$.\oprocend
		\end{definition}

		\begin{remark}
			A similar notion of PPE was introduced for the continuous-time in \cite[Def. 3]{narendra1987persistent}, \cite[Def. $6.3$]{narendra2012stable}, from which we also adopt the simplified notation $\Dppe{d}{d^\prime}$. \oprocend
		\end{remark}

		\noindent
		It turns out that if the maximum degree $d^\prime$ of PPE of signals of the form $\QQ^n(\uu)$ (namely, stacks of time shifts of the same signal $\uu$, see Definition \ref{def:shift_operator}) is sufficiently small, then it never increases when increasing the number $n$ of time shifts.
		The intuition behind this property is that shifts of a certain time window of the same signal are not completely independent. Consider, e.g., the sequence of scalars $\{w_0, w_1, \ldots, w_n\}$ and its time shift $\{w_1, w_2, \ldots, w_{n+1}\}$. Notice that the two sequences share $n-1$ elements, and the same holds true also considering the following window $\{w_2, w_3, \ldots, w_{n+2}\}$. This constraint can be shown to bind the number of directions possibly spanned by the moving window $\{w_t, w_{t+1}, \ldots, w_{t+n-1}\}$ depending on the number of directions persistently spanned by the original signal $\ww$.
		This is formalized in the following lemma.

		\begin{lemma}\label{lemma:not_PE_in_DT}
			Let $\ww\in \D{d}$. Let $d^\prime \in \N$ be the biggest natural number such that
			$\QQ^n(\ww)\in \Dppe{nd}{d^\prime}$. If $d^\prime \leq d(n-1)$,  
			then for all $k\geq n$, the largest natural $d^{\prime\prime}$ such that $\QQ^{k} (\ww)\in \Dppe{k d}{d^{\prime\prime}}$ holds satisfies $d^{\prime\prime}\leq d^\prime$.\oprocend
		\end{lemma}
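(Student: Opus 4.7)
The plan is to proceed by induction on $k \geq n$; the base case $k = n$ is immediate. A preliminary observation is that the maximum PPE degree of $\QQ^k(\ww)$ is monotone non-decreasing in $k$: the coordinate projection $\pi_B : \R^{(k+1)d} \to \R^{kd}$ dropping the first $d$ coordinates sends $\QQ^{k+1}(\ww)$ to $\QQ^k(\ww)$, so any PPE witness $P$ for $\QQ^k(\ww)$ lifts to $P\pi_B$ for $\QQ^{k+1}(\ww)$. Hence it suffices to prove the single-step claim: if $\QQ^k(\ww)$ has maximum PPE degree $d^*$ and $d^* \leq d(k-1)$, then $\QQ^{k+1}(\ww) \notin \Dppe{(k+1)d}{d^*+1}$. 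The hypothesis $d^* \leq d(n-1) \leq d(k-1)$ propagates through the induction.

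I would argue the single step by contradiction. Suppose $W \subseteq \R^{(k+1)d}$ is a $(d^*+1)$-dimensional subspace on which $\Psi_t := \sum_{\tau=t}^{t+T}\QQ^{k+1}(\ww)_\tau \QQ^{k+1}(\ww)_\tau^\top$ is uniformly positive definite. The crucial structural fact is the two overlapping block decompositions
\begin{equation*}
\QQ^{k+1}(\ww)_t = \bigl(w_{t-k},\ \QQ^k(\ww)_t\bigr) = \bigl(\QQ^k(\ww)_{t-1},\ w_t\bigr),
\end{equation*}
which give injective embeddings $T_k v := (0_d, v)$ and $B_k v := (v, 0_d)$ of $\R^{kd}$ into $\R^{(k+1)d}$ with $(T_k v)^\top \Psi_t(T_k v) = v^\top \Psi_t^{(k)} v$ and $(B_k v)^\top \Psi_t(B_k v) = v^\top \Psi_{t-1}^{(k)} v$, where $\Psi_t^{(k)}$ is the corresponding moving covariance of $\QQ^k(\ww)$. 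By the inductive hypothesis on $\QQ^k(\ww)$, $\dim(W \cap T_k(\R^{kd})) \leq d^*$ and $\dim(W \cap B_k(\R^{kd})) \leq d^*$; combined with the dimension lower bounds $\geq d^* + 1 - d$ and the block expansion of $\Psi_t$ (with blocks $\Psi_t^{(k)}$, a shifted covariance of $w_{t-k}$, and a cross-covariance), a Schur-complement-type manipulation should then yield a $(d^*+1)$-dimensional subspace $W' \subseteq \R^{kd}$ on which $\Psi_t^{(k)}$ is uniformly positive definite. This contradicts the maximality of $d^*$ for $\QQ^k(\ww)$.

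\textbf{Main obstacle.} The delicate step is the explicit construction of $W'$. A naive projection of $W$ onto $\R^{kd}$ introduces cross-terms mixing the ``new'' block $w_{t-k}$ with $\QQ^k(\ww)_t$, and these must be eliminated by a suitable change of basis inside $W$. The hypothesis $d^* \leq d(k-1)$ is precisely what provides the required geometric slack: $\QQ^k(\ww)$ already misses at least $d$ potential PE directions, so the $d$ new coordinates added in passing to $\QQ^{k+1}(\ww)$ cannot supply a genuinely new PE direction without manufacturing one in $\QQ^k(\ww)$ as well, which is forbidden.
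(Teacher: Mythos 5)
Your reduction to a single-step claim via induction is legitimate (the bookkeeping works: at each step $d_k^{*}\leq d'\leq d(n-1)\leq d(k-1)$, so the hypothesis propagates), and your identification of the two overlapping decompositions $\QQ^{k+1}(\ww)_t=(w_{t-k},\QQ^k(\ww)_t)=(\QQ^k(\ww)_{t-1},w_t)$ is exactly the structural fact that drives the result. However, there is a genuine gap at the step you yourself flag as delicate, and I do not believe the route you sketch closes it. The two facts you extract from the inductive hypothesis, $\dim(W\cap T_k(\R^{kd}))\leq d^{*}$ and $\dim(W\cap B_k(\R^{kd}))\leq d^{*}$, are simply not in tension with $\dim W=d^{*}+1$: the coordinate subspaces $T_k(\R^{kd})$ and $B_k(\R^{kd})$ have codimension $d\geq 1$, so the intersection bounds only pin $\dim(W\cap T_k(\R^{kd}))$ between $d^{*}+1-d$ and $d^{*}$, which is vacuous. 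To manufacture a $(d^{*}+1)$-dimensional $W'\subseteq\R^{kd}$ on which $\Psi^{(k)}_t$ is uniformly positive definite you would need to control the Schur complement of $\Psi_t$ with respect to the new $d\times d$ block, i.e.\ the cross-covariance between $w_{t-k}$ and $\QQ^k(\ww)_t$; but the inductive hypothesis says nothing about this cross term, and in fact $w_{t-k}$ is a coordinate block of $\QQ^k(\ww)_{t-1}$ and hence maximally correlated with the old window, so no uniform bound on that cross-covariance is available. Nothing in your setup has yet used the quantitative hypothesis $d^{*}\leq d(k-1)$ in a way that produces a contradiction.

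What the paper does instead, and what you would need to import to finish, is a different mechanism: from the failure of PPE of degree $d'+1$ it extracts, for every $T,\epsilon$, an arbitrarily long window on which $\QQ^n(\ww)_\tau=E\lambda_\tau+\tilde W_\tau$ with $E\in\R^{nd\times d'}$ fixed and $|\tilde W_\tau|\leq\epsilon$ (an approximate containment in a fixed $d'$-dimensional subspace, not merely a pointwise non-PD statement). Stacking the $k-n+1$ overlapping length-$n$ subwindows of a length-$k$ window then gives a linear constraint system $[M\;\,-\tilde E]\,(\hat w_\tau,\hat\lambda_\tau)=\tilde w_\tau$ with small right-hand side, and the rank--nullity theorem bounds the dimension of its solution set by $d''=(k-n)(d+d'-nd)+d'$, which is $\leq d'$ precisely when $d'\leq d(n-1)$. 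This is where the hypothesis enters, and it is an explicit dimension count on the kernel of the overlap constraints rather than a subspace-intersection or Schur-complement argument on the windowed Gramians. Your single-step version of this count would read $d''=2d^{*}-(k-1)d\leq d^{*}$, so the induction could in principle be salvaged, but only by carrying out essentially the paper's construction at each step; as written, the proposal asserts the existence of $W'$ without a construction, and the ingredients you have assembled do not suffice to produce it.
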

		The proof of Lemma~\ref{lemma:not_PE_in_DT} is provided in Appendix~\ref{proof:not_PE_in_DT}.

		\noindent
		We are now ready to state necessary conditions for PE in discrete-time LTI multivariable systems. 
		
		\begin{theorem}[Necessary condition for DT systems]\label{thm:nec_mi_dt}
			Let $\uu \in \D{m}$, for all $x_0\in \R^n$ the following holds:
			\begin{itemize}
				\item[i)] Consider a discrete-time asymptotically stable, reachable linear system $\Sigma$ with persistently exciting output $\xx$. Then, the signal $\QQ^n(\uu)$ is partially persistently exciting of degree $n$, i.e.,  
				\begin{equation}
					\Sigma \in \sctrbx,\; \xx\in \Dpe{n} \implies \QQ^n(\uu)\in \Dppe{nm}{n}.
				\end{equation}
				\item[ii)] Consider a discrete-time asymptotically stable, reachable linear system $\Sigma$ with persistently exciting output $(\xx, \uu)$. Then, the signal $\QQ^{n+1}(\uu)$ is partially persistently exciting of degree $n+m$, i.e., 
				\begin{equation}
					\Sigma\! \in \!\sctrbxu,\! (\xx, \uu)\!\in \!\Dpe{n+m} \!\!\implies\!\! \QQ^{n+1}\!(\uu)\!\in\!\Dppe{(n+1)m}{n+m}.
				\end{equation}
			\end{itemize}
			\oprocend
		\end{theorem}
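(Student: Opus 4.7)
The plan is to handle both parts by contradiction, combining the state-input decomposition induced by the reachability matrix with the two structural results already at our disposal: openness of $\Dpe{\cdot}$ (Lemma~\ref{lemma:PE_cone}) and the non-growth of the PPE degree under further shifts (Lemma~\ref{lemma:not_PE_in_DT}). The common ingredient will be the identity obtained by iterating $x_{t+1}=Ax_t+Bu_t$ for $k$ steps,
\[
	x_{t+1} - A^k x_{t+1-k} \;=\; R_k\, \QQ^k(\uu)_t,\qquad R_k \coloneqq [A^{k-1}B,\,A^{k-2}B,\,\ldots,\,B]\in\R^{n\times km}.
\]
Reachability of $(A,B)$ ensures $R_k$ has full row rank $n$ for every $k\geq n$, and since $A$ is Schur with $\uu$ bounded, $\xx$ is uniformly bounded with $K\coloneqq\sup_t\|x_t\|<\infty$, so $\|A^k x_{t+1-k}\|\leq \|A^k\|K$ vanishes as $k\to\infty$ uniformly in $t$.

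For item i), I will suppose that the largest $d'$ with $\QQ^n(\uu)\in\Dppe{nm}{d'}$ satisfies $d'<n$. Since $d'\leq n-1\leq m(n-1)$, Lemma~\ref{lemma:not_PE_in_DT} propagates the ceiling: $\QQ^k(\uu)\in\Dppe{km}{d''}$ with $d''\leq d'<n$ for every $k\geq n$. I would then pick $P_k\coloneqq R_k$, a surjective linear map $\R^{km}\to \R^n$. By the identity above, $t\mapsto P_k\QQ^k(\uu)_t$ differs from the forward-shifted trajectory $(x_{t+1})_t$ only by $A^k x_{t+1-k}$, which vanishes in $\|\cdot\|_\infty$ as $k\to\infty$. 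The forward shift of $\xx$ is still PE (sums of the form $\sum_{\tau=t}^{t+T}x_{\tau+1}x_{\tau+1}^\top$ are instances of windows for $\xx$), so openness of $\Dpe{n}$ (Lemma~\ref{lemma:PE_cone}) produces some $k^\star\geq n$ with $P_{k^\star}\QQ^{k^\star}(\uu)\in\Dpe{n}$, forcing $\QQ^{k^\star}(\uu)\in\Dppe{k^\star m}{n}$ and contradicting $d''<n$.

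Item ii) will be handled analogously after adjoining the current input as a separate block. Supposing $\QQ^{n+1}(\uu)$ has maximal PPE degree $d'<n+m$, the bound $d'\leq n+m-1\leq mn$ (from $(m-1)(n-1)\geq 0$) lets Lemma~\ref{lemma:not_PE_in_DT} applied at order $n+1$ cap $\QQ^k(\uu)$ at PPE degree $d'$ for every $k\geq n+1$. The surjection to use is
\[
	P_k' \coloneqq \begin{pmatrix} A^{k-2}B & A^{k-3}B & \cdots & AB & B & 0\\ 0 & 0 & \cdots & 0 & 0 & I_m\end{pmatrix}\in\R^{(n+m)\times km},
\]
whose upper block equals $R_{k-1}$ (full row rank $n$ for $k\geq n+1$) and satisfies $P_k'\QQ^k(\uu)_t = (x_t - A^{k-1}x_{t-k+1},\,u_t)$. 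This converges in $\|\cdot\|_\infty$ to $(\xx,\uu)\in\Dpe{n+m}$, so Lemma~\ref{lemma:PE_cone} delivers $P_k'\QQ^k(\uu)\in\Dpe{n+m}$ for some $k\geq n+1$, contradicting the PPE ceiling.

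The main technical point to watch will be the interplay between the uniform-in-$t$ decay of the tail $A^k x_{t-k}$ and the openness radius of $\Dpe{\cdot}$ around the limiting PE signal; both depend only on fixed data (the spectrum of $A$ and the PE constants of $\xx$ or $(\xx,\uu)$), so choosing $k$ large enough reconciles them and the asymptotic contradiction will go through cleanly.
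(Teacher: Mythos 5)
Your proposal is correct, and it reaches the conclusion by a genuinely different final step than the paper, while sharing the two key ingredients (the reachability decomposition $x_{t+1}=A^kx_{t+1-k}+R_k\QQ^k(\uu)_t$ and Lemma~\ref{lemma:not_PE_in_DT} to propagate the PPE ceiling from $\QQ^n(\uu)$ to $\QQ^k(\uu)$ for all $k\geq n$). The paper argues by contraposition: assuming $\QQ^n(\uu)$ is PPE of degree at most $n-1$, it writes the shifted input stack as $E\lambda_\tau+\tilde W_\tau$ with $E$ having only $n-1$ columns, exhibits an explicit direction $z$ with $z^\top\bar A\bar R E=0$, and concludes $|z^\top x_\tau|$ is arbitrarily small over arbitrarily long windows, so $\xx\notin\Dpe{n}$. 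You instead run a contradiction through Lemma~\ref{lemma:PE_cone}: since $R_k\QQ^k(\uu)$ converges in $\ell_\infty$ to the (still PE) forward shift of $\xx$, openness of $\Dpe{n}$ forces $R_{k^\star}\QQ^{k^\star}(\uu)\in\Dpe{n}$ for some $k^\star$, hence $\QQ^{k^\star}(\uu)\in\Dppe{k^\star m}{n}$, clashing with the Lemma~\ref{lemma:not_PE_in_DT} ceiling; the same scheme with the augmented surjection $P_k'$ handles item ii). Your route buys a cleaner endgame --- no direction-hunting, and it reuses the already-proved robustness of PE --- and it even identifies the witnessing surjection ($R_{k^\star}$) explicitly; the paper's route is more self-contained in that it does not need a quantitative openness radius and transparently produces the ``non-excited direction'' that fails PE. One small technicality you should patch: the identity $R_k\QQ^k(\uu)_t=x_{t+1}-A^kx_{t+1-k}$ holds only for $t\geq k-1$ because of the zero-padding in Definition~\ref{def:shift_operator}; for $t<k-1$ the discrepancy is $A^{t+1}x_0$, which is not uniformly small in $k$ when $x_0\neq 0$. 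This is harmless --- either first take $x_0=0$ and transfer to general $x_0$ via the exponentially vanishing difference of trajectories (as the paper does), or note that the PE inequality over windows $[t,t+T]$ is insensitive to a fixed finite initial segment after enlarging $T$ --- but it should be stated, since otherwise the $\ell_\infty$ convergence you invoke before applying Lemma~\ref{lemma:PE_cone} is not literally true.
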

		The proof of Theorem~\ref{thm:nec_mi_dt} is provided in Appendix~\ref{proof:nec_mi_dt}.
		
		\noindent
		We summarize here its main steps,
		since they provide arguments analogous to those in the continuous-time framework and give insight on why PPE of $\QQ^n(\uu)$ is a necessary condition. In detail, Theorem~\ref{thm:nec_mi_dt}
		is proven by contraposition (cf. Appendix~\ref{proof:nec_mi_dt}), namely, we show that if the input signal $\uu$ is not PPE of a sufficiently high degree then the resulting $\xx$ is not PE. To this end, it uses the following arguments.
		\begin{enumerate}
			\item In asymptotically stable systems, $\xx$ can be approximated arbitrarily well by a linear function of a large enough number $k\in \N$ of time-shifts of the input $\QQ^k(\uu)$, namely, $\xx \approx K\QQ^k(\uu)$.
			\item If $\QQ^n(\uu)$ is PPE of degree at most $n-1$, then for any $k\geq n$, the PPE of $\QQ^k(\uu)$ does not increase (cf. Lemma~\ref{lemma:not_PE_in_DT}).
			\item If $\QQ^k(\uu)$ is PPE of degree at most $n-1$, then $\xx$ cannot be PE, since it is a linear function of a signal persistently spanning at most $n-1$ directions.
		\end{enumerate}
		\begin{remark}
			Theorem~\ref{thm:nec_mi_dt} assumes
			$A$ Schur. 
			This assumption is required both (i) to guarantee boundedness of the state and (ii) to approximate the state as a linear function of a finite number of inputs.\oprocend
		\end{remark}
		\begin{remark}\label{remark:necessary_for_SI_DT}
			Notice that, for single-input systems, Theorem~\ref{thm:nec_mi_dt} collapses to a ``full" persistency of excitation requirement on $\QQ^n(\uu)$ and $\QQ^{n+1}(\uu)$, namely, for all $x_0\in \R^n$ it holds
			\begin{itemize}
				\item[i)] $\Sigma \in \sctrbxsi, \xx\in \Dpe{n} \implies \QQ^n(\uu)\in \Dpe{n}$.
				\item[ii)] $\Sigma \in \sctrbxusi, (\xx, \uu)\!\in \!\Dpe{n+m} \implies \QQ^{n+1}(\uu)\in\Dpe{n+1}$.\oprocend
			\end{itemize}
		\end{remark}

		\noindent 
		We now proceed by finding sufficient conditions for discrete-time systems. While the following results are known in the literature \cite{moore1983persistence, bai1985persistency, green1986persistence}, we note that the proof given here is new, provides insight on the obtained results, and its principles paves the way for obtaining them in the continuous-time framework too.
		
		\begin{theorem}[Sufficient condition for DT systems]\label{thm:suff_mi_dt}
			Let $\uu \in \D{m}$, for all $x_0\in \R^n$ the following holds:
			\begin{itemize}
				\item[i)] Consider a discrete-time asymptotically stable, reachable linear system $\Sigma$ with output $\xx$. If the signal $\QQ^{n}(\uu)$ is persistently exciting, then $\xx$ is persistently exciting, i.e., 
				\begin{equation*}
					\Sigma \in \sctrbx, \QQ^n(\uu) \in \Dpe{nm} \implies \xx\in \Dpe{n}.
				\end{equation*}
				\item[i)] Consider a discrete-time asymptotically stable, reachable linear system $\Sigma$ with output $(\xx, \uu)$. If the signal $\QQ^{n+1}(\uu)$ is persistently exciting, then $(\xx, \uu)$ is persistently exciting, i.e., 
				\begin{equation*}
					\Sigma \in \sctrbxu, \QQ^{n+1}(\uu) \! \in \! \Dpe{(n+1)m} \! \implies \! (\xx, \uu)\in \Dpe{n+m}.\eqoprocend
				\end{equation*}
			\end{itemize}
		\end{theorem}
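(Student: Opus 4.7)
My plan is to prove both parts by linking the Gramian of the state (or state-input) trajectory to a Gramian of shifted inputs, exploiting reachability of $(A,B)$ to transfer the PE lower bound through the map from inputs to outputs, and the Schur property of $A$ to control the residual state transient.

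For part i), unrolling $x_{\sigma+1}=Ax_\sigma + Bu_\sigma$ over $n$ steps yields
$$x_\sigma = A^n x_{\sigma-n} + M\,\QQ^n(\uu)_{\sigma-1},\qquad M:=[A^{n-1}B,\,A^{n-2}B,\,\ldots,\,B]\in\R^{n\times nm},$$
where reachability of $(A,B)$ is equivalent to $M$ having full row rank $n$, so $MM^\top>0$. Rearranging as $M\QQ^n(\uu)_{\sigma-1}=x_\sigma - A^n x_{\sigma-n}$, applying $(a-b)(a-b)^\top \le 2aa^\top + 2bb^\top$, and summing over $\sigma\in\{t,\ldots,t+T-1\}$, gives
$$M\Big(\sum_{\sigma=t}^{t+T-1}\QQ^n(\uu)_{\sigma-1}\QQ^n(\uu)_{\sigma-1}^\top\Big)M^\top \;\le\; 2\,V_x(t,T)+ 2\,A^n V_x(t-n,T)(A^n)^\top,$$
with $V_x(s,T):=\sum_{\sigma=s}^{s+T-1} x_\sigma x_\sigma^\top$. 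A tiling argument (a window of length $KT_u$ contains $K$ disjoint PE subwindows of size $T_u$) lower-bounds the left-hand side by $K\alpha\,MM^\top \ge K\alpha\,\lambda_{\min}(MM^\top)\,I$, while $V_x(t-n,T)$ differs from $V_x(t,T)$ in operator norm by a $T$-independent quantity of order $n\|\xx\|_\infty^2$ due to window overlap.

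The central difficulty, and main obstacle, is extracting a uniform lower bound $V_x(t,T)\ge \alpha' I$ from this matrix inequality in \emph{all} $n$ directions of $\R^n$: the naive step that turns the $V_x(t,T) + A^n V_x(t,T) (A^n)^\top$ term into a lower bound on $V_x$ alone does not work directly since $A^n$ need not be a contraction. I would overcome this by iterating the key identity $K$ times,
$$x_\sigma = A^{Kn}x_{\sigma-Kn} + \sum_{k=0}^{K-1} A^{kn}M\,\QQ^n(\uu)_{\sigma-kn-1},$$
so that, since $A$ is Schur, $\|A^{Kn}\|$ can be made smaller than any prescribed threshold by choosing $K$ large. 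Combining this with the strict positivity of $MM^\top$ (which ensures excitation in every state direction) and taking $T$ large by the tiling argument so that the PE contribution dominates the residual transient, one obtains $V_x(t,T)\ge\alpha' I$ for some $\alpha'>0$ uniformly in $t$, i.e.\ $\xx\in\Dpe{n}$.

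Part ii) follows by the same strategy applied to the augmented output $(\xx,\uu)$. Since $\QQ^{n+1}(\uu)_\sigma = \bigl(\QQ^n(\uu)_{\sigma-1},\,u_\sigma\bigr)$, one has
$$\begin{pmatrix} x_\sigma \\ u_\sigma \end{pmatrix} = \begin{pmatrix} M & 0 \\ 0 & I_m \end{pmatrix}\QQ^{n+1}(\uu)_\sigma + \begin{pmatrix} A^n x_{\sigma-n} \\ 0 \end{pmatrix},$$
and the block matrix on the right has full row rank $n+m$. The same Young-inequality, tiling and iteration chain applied to $(\xx,\uu)$ in place of $\xx$ then delivers $(\xx,\uu)\in\Dpe{n+m}$.
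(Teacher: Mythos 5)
Your one-step identity $x_\sigma = A^n x_{\sigma-n} + M\,\QQ^n(\uu)_{\sigma-1}$ and the resulting bound $M\big(\sum_\sigma \QQ^n(\uu)_{\sigma-1}\QQ^n(\uu)_{\sigma-1}^\top\big)M^\top \geq \alpha\,\lambda_{\min}(MM^\top)\,I$ are correct, and you correctly diagnose that they only control $V_x(t,T)+A^nV_x(t-n,T)(A^n)^\top$, not $\lambda_{\min}(V_x(t,T))$. The problem is that your proposed fix does not close this gap. After iterating $K$ times you need a uniform lower bound on $\sum_\sigma(\tilde M W_\sigma)(\tilde M W_\sigma)^\top$, where $\tilde M := [M,\,A^nM,\,\ldots,\,A^{(K-1)n}M]$ and $W_\sigma$ stacks the $Kn$ consecutive inputs $u_{\sigma-Kn},\ldots,u_{\sigma-1}$, i.e.\ a permutation of $\QQ^{Kn}(\uu)_{\sigma-1}$. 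The hypothesis $\QQ^n(\uu)\in\Dpe{nm}$ only makes the $K$ diagonal $nm\times nm$ blocks of $\sum_\sigma W_\sigma W_\sigma^\top$ positive definite; it does not make the full $Knm\times Knm$ Gramian positive definite (PE of $\QQ^n(\uu)$ does not imply PE of $\QQ^{Kn}(\uu)$: a scalar input with exactly $n$ spectral lines is a counterexample). Since $\tilde M$ mixes the blocks, the cross terms $\sum_\sigma A^{kn}M\QQ^n(\uu)_{\sigma-kn-1}\QQ^n(\uu)_{\sigma-k'n-1}^\top M^\top(A^{k'n})^\top$ with $k\neq k'$ grow linearly in $T$ exactly like the diagonal ones, so the tiling argument cannot make the ``PE contribution dominate'' them. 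Worse, the quantity you actually need, $v^\top\tilde M\big(\sum_\sigma W_\sigma W_\sigma^\top\big)\tilde M^\top v = \sum_\sigma|v^\top(x_\sigma - A^{Kn}x_{\sigma-Kn})|^2$, is (up to the small transient) precisely the PE of $\xx$ you are trying to prove, so the argument becomes circular at its central step. The same gap recurs verbatim in your part ii).

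For contrast, the paper sidesteps this by proving the contrapositive: if $\xx\notin\Dpe{n}$ there is a unit direction $z$ with $|z^\top x_\tau|\leq\epsilon$ over an arbitrarily long window; when $z^\top B\neq 0$ this forces $u_\tau\approx Kx_\tau+v_\tau$ with $v_\tau\in\ker(z^\top B)$, so that $\QQ^n(\uu)$ is, up to $O(\epsilon)$, a linear image of the pair $(x_{\tau-n+1},V_\tau)$, which persistently spans at most $(n-1)+n(m-1)=nm-1$ directions; the degenerate case $z^\top B=0$ is ruled out (or reduced) using reachability. If you want to keep a direct, non-contrapositive route, the cancellation you must exclude is a frequency-domain phenomenon, and you would need the spectral-line machinery of \cite{bai1985persistency} rather than a time-domain Young-inequality estimate.
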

		The proof of Theorem~\ref{thm:suff_mi_dt} is provided in Appendix~\ref{proof:suff_mi_dt}.
		
		\noindent
		We summarize as follows its main principles, starting, for simplicity from the single-input case.
		The proof of Theorem~\ref{thm:suff_mi_dt} (cf. Appendix~\ref{proof:suff_mi_dt}) proceeds by contraposition 
		according to the following points.
		\begin{enumerate}
			\item Any solution $\xx=\Sigma(\uu, x(0))\in \D{n}$ of system $\Sigma$ which is not PE constrains the (scalar) system input to be arbitrarily close to a feedback gain for arbitrarily long intervals, namely, $u_t\approx Kx_t$.
			\item Such an input makes the system an autonomous system for arbitrarily long intervals. Hence, time shifts $x_{t}, x_{t-1}, \ldots, x_{t-n+1}$ of the state can be written as a linear function of the state $x_{t-n+1}$.
			\item Time shifts of inputs are given by state feedbacks of time shifts of the state, namely $(u_t, \ldots, u_{t-n+1}) \approx (Kx_t, \ldots, Kx_{t-n+1})$, which are a linear function of the state $x_{t-n+1}$. This means that if $\xx$ is not PE, $\QQ^n(\uu)$ is not PE. 
		\end{enumerate}
		The principles behind the proof for multi-input systems are the same as those for single-input systems, with the differences explained in the following steps.
		\begin{enumerate}
			\item $\xx$ not being PE does not fully constrain the input signal to be arbitrarily close to a feedback gain of the state. In particular, we obtain an input which can be written as $u_t \approx Kx_t + v_t$.
			\item However, $v_t$ is constrained to span in a space which is at most $m-1$ dimensional.
			\item It can be shown that this means $\QQ^n(\uu)$ is not PE, since it is a function of $\xx$ and $\QQ^n(\vv)$ which, altogether, span persistently a subspace of $\R^{nm}$ which is at most $nm-1$ dimensional. 
		\end{enumerate}
		\begin{remark}
			Theorem~\ref{thm:suff_mi_dt} assumes that the state matrix $A$ is Schur. We require this assumption only to guarantee boundedness of the state (cf. Definition~\ref{def:PE_DT}). Hence, in the case of PE definitions without the boundedness requirement (see, e.g., \cite[Def.~3]{nordstrom1987persistency}), the Schur property of $A$ is not required. \oprocend
		\end{remark}

	\subsection{Continuous-Time Framework}
		
		Before presenting the results for the continuous-time framework, we need to consider some problems that arise when applying Definition \ref{def:PE_CT}, which are not present in the discrete-time framework. With the next result, we leverage the smoothness properties of the considered signals to guarantee that a non-PE signal must be arbitrarily small for arbitrarily long periods along certain directions. 
		\begin{lemma}\label{lemma:CT_not_disagiato}
			Let $\ww \in \C{d}$. If $\ww \notin \Cpe{d}$, then for any $T, \epsilon>0$ there exist $t>0$ unitary $z \in \R^d$ such that
			\begin{equation}
				|z^\top w(\tau)|\leq \epsilon
			\end{equation} 
			for all $\tau \in [t, t+T]$.
			\oprocend
		\end{lemma}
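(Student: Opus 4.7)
The plan is to convert the integral-based failure of PE into the required pointwise smallness via a Lipschitz argument enabled by the smoothness of $\ww\in\C{d}$. Let $L$ be a uniform bound on the first derivative of $\ww$, finite by the definition of $\C{d}$, so that for any unit vector $z\in\R^d$, the scalar signal $\tau \mapsto z^\top w(\tau)$ is $L$-Lipschitz. The degenerate case $L=0$ forces $w\equiv w_0$; if $w_0=0$ any unit $z$ trivially works, while if $w_0\neq 0$ then $\ww\notin\Cpe{d}$ forces $d\geq 2$ and any unit $z$ orthogonal to $w_0$ satisfies the bound with equality zero. Hence the nontrivial case is $L>0$, assumed below.

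Next, I would unpack $\ww \notin \Cpe{d}$ from Definition~\ref{def:PE_CT}: for every $\alpha>0$ and every $T'>0$, there exists $t_0\in\R_{\geq 0}$ such that $\int_{t_0}^{t_0+T'} w(\tau)w(\tau)^\top\,\mathrm{d}\tau \not\geq \alpha I$. Selecting a unit eigenvector $z\in\R^d$ associated with the smallest eigenvalue of this symmetric matrix yields $\int_{t_0}^{t_0+T'}(z^\top w(\tau))^2\,\mathrm{d}\tau < \alpha$. Given the target $T,\epsilon>0$, I would then set $\delta \coloneqq \epsilon/(2L)$, $T'\coloneqq T+2\delta$, and $\alpha \coloneqq \epsilon^3/(8L)$, and apply this observation to extract the corresponding $t_0$ and $z$. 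Defining $t\coloneqq t_0+\delta > 0$, I would argue by contradiction: if some $\tau^*\in[t,t+T]$ satisfied $|z^\top w(\tau^*)|>\epsilon$, the $L$-Lipschitz property would yield $|z^\top w(\tau)|>\epsilon/2$ for every $\tau\in[\tau^*-\delta,\tau^*+\delta]\subseteq[t_0,t_0+T']$, so the integral of $(z^\top w)^2$ on this sub-interval alone would contribute at least $2\delta(\epsilon/2)^2=\epsilon^3/(4L)>\alpha$, contradicting the integral bound above.

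The main obstacle is precisely this passage from integral smallness to pointwise smallness: without regularity no such passage exists, so the Lipschitz constant from the $\C{d}$ assumption does all the work. Once $\delta$ and $\alpha$ are calibrated to $L$ and the window is padded by $\delta$ on each side (so that the ball around any $\tau^*\in[t,t+T]$ remains inside $[t_0,t_0+T']$ regardless of whether $\tau^*$ lies near the boundary), the argument reduces to a clean contradiction, with the condition $t>0$ being automatic from $\delta>0$ and $t_0\geq 0$.
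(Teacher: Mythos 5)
Your proof is correct, and it is tighter than the one in the paper in a couple of places. Both arguments hinge on the same key fact -- the uniform bound on $\dot{\ww}$ guaranteed by $\ww\in\C{d}$ is what bridges integral smallness and pointwise smallness -- but the executions differ. The paper partitions $[t,t+T]$ into the level sets $T_{\leq\alpha}$ and $T_{>\alpha}$, bounds the measure of $T_{>\alpha}$ by a Chebyshev-type estimate on the $L^1$ quantity $\bw(t,T,z)=\int|z^\top w|$, and then integrates the derivative bound from the boundary of $T_{\leq\alpha}$ to control the excursion on $T_{>\alpha}$; this requires an extra (and somewhat informally justified) reduction to the ``worst case'' where $T_{>\alpha}$ is a single interval, plus an implicit Cauchy--Schwarz step to pass from the $L^2$-type quantity in the negation of Definition~\ref{def:PE_CT} to the $L^1$ quantity $\bw$. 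You instead argue by contradiction with explicit constants: a single point where $|z^\top w(\tau^*)|>\epsilon$ forces, by $L$-Lipschitzness of $z^\top w$, a sub-interval of length $2\delta=\epsilon/L$ on which $|z^\top w|>\epsilon/2$, contributing at least $\epsilon^3/(4L)$ to $\int(z^\top w)^2$ and contradicting the choice $\alpha=\epsilon^3/(8L)$. Your route works directly with the quadratic form from the PE definition (eigenvector of the smallest eigenvalue), avoids the level-set measure argument and the single-interval reduction entirely, and handles the boundary issue cleanly by padding the window to $T'=T+2\delta$; you also treat the degenerate case $L=0$ explicitly, which the paper ignores. The constants check out: $2\delta(\epsilon/2)^2=\epsilon^3/(4L)>\epsilon^3/(8L)=\alpha$, and $t=t_0+\delta>0$ as required.
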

		The proof of Lemma~\ref{lemma:CT_not_disagiato} is provided in Appendix~\ref{proof:CT_not_disagiato}.
		
		\noindent 
		Notice that while the above result is immediate in the discrete-time framework, in continuous-time it is not necessarily true without assuming some degrees of smoothness of the signal.
		Next, we introduce the notion of PPE also for this framework.
		
		\begin{definition}[Continuous-Time Partial PE]
			\label{def:partial_PE_CT}
			A signal $\ww\in C^d$ is Partially Persistently Exciting (PPE) of degree $d^\prime \leq d$ (and we write $\ww \in \Cppe{d}{d^\prime}$) if there exists a linear surjective map $P:\R^{d}\to \R^{d^\prime}$ such that $P\ww\in \Cpe{d^\prime}$.\oprocend
		\end{definition}
		
		\noindent To get the necessary conditions of PE in the continuous-time framework by following the same steps of the discrete-time one, we need a result analogous to Lemma \ref{lemma:not_PE_in_DT}.

		\begin{lemma}\label{lemma:not_PE_in_CT}
			Let $\ww \in \C{d}$. Let $d^\prime \in \N$ be the largest natural number for which $\WW\coloneqq\DD^n(\ww)\in \Cppe{nd}{d^\prime}$.
			If $d^\prime \leq d(n-1)$, then
			for any $T, \bar{T}, \epsilon>0$ such that $\bar{T}\leq  T$, there exist $t>0, \bar{N}\in \N$, $d^{\prime \prime}\leq d^\prime$, and $G\in \R^{nd(N+1)\times d^{\prime\prime}}$ such that, for all $N\geq \bar{N}$,
			\begin{equation}
				\begin{bmatrix}
					W(\tau)\\
					W\left(\tau-\frac{\bar{T}}{N}\right)\\
					\vdots\\
					W(\tau-\bar{T})
				\end{bmatrix}
				= G\lambda(\tau) + \tilde{W}(\tau), \;\;\; \forall \tau \in [t, t+T],
			\end{equation}
			for some 
			$\lambda(\tau)\in \R^{d^{\prime\prime}}$ and $\tilde{W}(\tau)\in \R^{nd(N+1)}$ such that $|\tilde{W}(\tau)|\leq \epsilon$. \oprocend
		\end{lemma}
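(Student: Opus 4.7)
The plan is to adapt the discrete-time argument of Lemma~\ref{lemma:not_PE_in_DT} to the continuous-time setting, with the intuition that a non-persistently exciting $\WW = \DD^n \ww$ forces $\ww$ to approximately satisfy a linear ODE on long intervals, so its state lies on a finite-dimensional manifold regardless of how finely we sample. Concretely, the first step is to use Lemma~\ref{lemma:CT_not_disagiato}, iterated over a basis of the orthogonal complement of the maximal PE subspace of $\WW$, to produce a time $t>0$ and an $(nd-d')$-dimensional subspace $V^\perp \subseteq \R^{nd}$ on which $|P_{V^\perp} W(\tau)| \leq \epsilon'$ uniformly over $\tau \in [t-\bar{T},\, t+T]$, for any pre-chosen $\epsilon'>0$. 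Upgrading Lemma~\ref{lemma:CT_not_disagiato} from a single direction to a full subspace is handled by shrinking $\epsilon'$ across the finitely many basis vectors of $V^\perp$ and intersecting the resulting intervals; maximality of $d'$ ensures every direction outside the PE subspace admits such a reduction.

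I would then carry out the discrete-time dimension count in continuous time. Using the elementary estimate $\|\dot f\|_\infty \leq 2\sqrt{\|\ddot f\|_\infty\,\|f\|_\infty}$, available because $\ww \in \C{d}$ has bounded derivatives of every order, the constraint $P_{V^\perp} W(\tau) \approx 0$ differentiates into analogous constraints on shifted derivative blocks of $\ww$. Each additional differentiation yields $nd-d'$ new non-PE directions for $\DD^{K}\ww$, and exactly the arithmetic used in the proof of Lemma~\ref{lemma:not_PE_in_DT}, namely $(K-n+1)(nd-d') \geq Kd - d'$ whenever $d' \leq d(n-1)$, shows that for every $K \geq n$ the signal $\DD^K \ww$ has maximal PPE degree at most $d'$. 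Fixing $K$ at this stage, there is therefore a subspace of dimension $d''\leq d'$ that contains $\DD^{n+K-1}\ww(\tau)$ within $\epsilon'$ over the whole interval $[t-\bar{T},\,t+T]$.

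The last step is to Taylor-expand each $W(\tau - s\bar{T}/N)$ around $\tau$ to order $K-1$, which yields an identity of the form $H(\tau) = \Phi(K,N)\,\DD^{n+K-1}\ww(\tau) + \tilde R(\tau)$, where $\Phi(K,N) \in \R^{nd(N+1)\times(n+K-1)d}$ collects the Taylor coefficients. Taking $G$ to be a basis for the image under $\Phi(K,N)$ of the $d''$-dimensional approximation subspace from the previous step immediately gives the claimed decomposition with $\lambda(\tau)$ the coordinates in that basis. The main obstacle is uniformly controlling the Taylor remainder in $s\in\{0,\ldots,N\}$: since $\C{d}$ only guarantees pointwise (in the derivative order) boundedness, a single Taylor expansion over $[\tau-\bar{T},\tau]$ need not have small remainder as $K$ grows. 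The fix is to iterate Taylor over the $N$ subintervals of length $h=\bar{T}/N$: with $K$ fixed first, each step contributes $O(h^K)$; the accumulation is propagated through matrices of the form $I+O(h)$ whose iterated operator norm is bounded by $e^{O(\bar{T})}$, giving a total remainder of order $\bar{T}^{K}/N^{K-1}$ that tends to zero as $N\to\infty$. Choosing $\bar{N}$ large enough then bounds this contribution by $\epsilon/2$, while a matching choice of $\epsilon'$ absorbs the subspace-approximation error and closes the estimate $|\tilde W(\tau)|\leq \epsilon$.
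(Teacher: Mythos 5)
Your opening step (upgrading Lemma~\ref{lemma:CT_not_disagiato} to a full $(nd-d')$-dimensional subspace on which $W$ is uniformly small over a long common interval) and your dimension-counting arithmetic $(K-n)(d+d'-nd)+d'\le d'$ are in the same spirit as the paper, which likewise starts from a decomposition $W(\tau)=E\lambda(\tau)+\tilde W(\tau)$ with $|\tilde W(\tau)|\le\epsilon$ on $[t,t+T]$ and applies exactly that count via rank--nullity. The divergence --- and the gap --- lies in how you pass from this pointwise constraint to the stacked vector over the window of length $\bar T$.

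You collapse the whole window onto the single-point jet $\DD^{n+K-1}\ww(\tau)$ and propagate by Taylor expansion, and this cannot meet the required bound $|\tilde W(\tau)|\le\epsilon$ for arbitrary $\epsilon$ with $\bar T$ fixed. A direct expansion of $W(\tau-s\bar T/N)$ to order $K-1$ has remainder of size $\frac{(s\bar T/N)^K}{K!}\|\dd^K(\WW)\|_\infty$, which at $s=N$ equals $\frac{\bar T^K}{K!}\|\dd^K(\WW)\|_\infty$: independent of $N$, and not small for fixed $K$ (raising $K$ does not help, since $\C{d}$ imposes no uniform control of $\|\dd^K(\ww)\|_\infty$ across orders, so the Taylor series need not converge on an interval of length $\bar T$). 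Your proposed fix of iterating over the $N$ subintervals with transition matrices $I+O(h)$ does not close: the one-step update of the extended state $\DD^{n+K-1}\ww$ necessarily involves the next derivative $\dw^{(n+K-1)}$ as an exogenous input entering with coefficient $h$, so the per-step error is $O(h)$ rather than $O(h^K)$, and it accumulates to $O(\bar T)$ rather than $\bar T^K/N^{K-1}$. In short, a finite jet at the endpoint $\tau$ does not, even approximately, determine a merely smooth $\WW$ on an interval of fixed positive length. The paper avoids exactly this obstruction by never collapsing to one point: it keeps the sampled values $\dw^{(n)}(\tau+i\delta)$, $i=0,\dots,N-1$, as free unknowns in the joint linear system $M\hw(\tau)=\tilde E\hat\lambda(\tau)+\tw(\tau)$ built from the first-order relation $W(\tau+\delta)=(I+\delta S)W(\tau)+\delta e_d\dw^{(n)}(\tau)+r(\delta)$, and performs the kernel-dimension count on the joint vector $(\hw,\hat\lambda)$; there the only local errors are the second-order remainders $|r(\delta)|\le c\delta^2/2$, whose accumulation $O(\bar T^2/N)$ genuinely vanishes as $N\to\infty$. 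Your intermediate claim that $\DD^K\ww$ has PPE degree at most $d'$ for all $K\ge n$ is an appealing continuous-time analogue of Lemma~\ref{lemma:not_PE_in_DT}, but it concerns derivatives at a point and does not by itself control the sampled stack over a window of fixed length, which is what the lemma asserts; you would need to restructure the final step along the lines of the paper's joint system to repair the argument.
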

		The proof of Lemma~\ref{lemma:not_PE_in_CT} is provided in Appendix~\ref{proof:not_PE_in_CT}.

		\noindent
		The meaning of Lemma~\ref{lemma:not_PE_in_CT} is analogous to its discrete-time counterpart (i.e., Lemma~\ref{lemma:not_PE_in_DT}): under a certain threshold of PPE, and sampling with a sufficiently small sample time, the directions spanned persistently by a certain stack of time shifts of $\DD^n(\ww)$ do not increase if the time shifts are increased.\\
		Given Lemma~\ref{lemma:not_PE_in_CT}, and following the same steps of the discrete-time case, we obtain the following necessary conditions.

		\begin{theorem}[Necessary condition for CT systems]\label{thm:nec_mi_ct}
			Let $\uu \in \C{m}$. For all initial conditions $x(0)\in \R^n$,
			\begin{itemize}
				\item[i)] $\Sigma\in \hctrbx, \xx \in \Cpe{n} \implies \DD^n(\uu)\in \Cppe{nm}{n}$.
				\item[ii)] $\Sigma\in \hctrbxu, (\xx, \uu)\in \Cpe{n+m} \!\implies \!\DD^{n+1}(\uu)\!\in\! \Cppe{(n+1)m}{n+m}$.
			\end{itemize}
			\oprocend
		\end{theorem}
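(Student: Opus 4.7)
The plan is to prove both parts by contraposition, mirroring the three-step outline given after Theorem~\ref{thm:nec_mi_dt}. I focus on part (i); part (ii) is fully analogous once $\uu$ is appended to the output.

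Assume $\DD^n(\uu) \notin \Cppe{nm}{n}$, i.e., the largest integer $d^\prime$ with $\DD^n(\uu) \in \Cppe{nm}{d^\prime}$ satisfies $d^\prime \leq n-1$. The goal is to show that, for every prescribed $T^\star, \epsilon>0$, there exist an arbitrarily large $t$ and a unit vector $z \in \R^n$ such that $|z^\top x(\tau)| \leq \epsilon$ on $[t,t+T^\star]$; then choosing $\epsilon$ with $T^\star \epsilon^2$ smaller than any candidate PE bound $\alpha$ rules out $\xx \in \Cpe{n}$.

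The first step exploits the Hurwitz property of $A$. Starting from $x(\tau) = e^{A\tau}x(0) + \int_0^\tau e^{A(\tau-s)}Bu(s)\,\dd s$, exponential decay handles both the initial-condition term (for $t$ large) and the tail $\int_0^{\tau - \bar T}$ (for $\bar T$ large). After the change of variables $\sigma = \tau - s$, the remaining piece is $\int_0^{\bar T} e^{A\sigma} Bu(\tau-\sigma)\,\dd\sigma$, which a uniform Riemann sum with nodes $\sigma_j = j\bar T/N$ approximates as $\sum_{j=0}^N C_j\, u(\tau-\sigma_j)$ to arbitrary accuracy, using boundedness of $\dd\uu$ (since $\uu \in \C{m}$) to control the quadrature error. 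Collecting everything,
\begin{equation*}
x(\tau) = \sum_{j=0}^N C_j\, u(\tau-\sigma_j) + r(\tau),
\end{equation*}
with $|r(\tau)|$ made uniformly small by choosing $\bar T, N, t$ large.

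The second step invokes Lemma~\ref{lemma:not_PE_in_CT}. Because $d^\prime \leq n-1 \leq m(n-1)$, the lemma applies to $\DD^n(\uu)$ with window $T^\star$, horizon $\bar T$, and any $\epsilon^\prime>0$: it returns $t_\star, \bar N$ such that for every $N \geq \bar N$ the sampled stack satisfies $[W(\tau);\ldots;W(\tau-\bar T)] = G\lambda(\tau) + \tilde W(\tau)$ with $G\in\R^{nm(N+1)\times d^{\prime\prime}}$, $d^{\prime\prime} \leq d^\prime \leq n-1$, and $|\tilde W(\tau)| \leq \epsilon^\prime$ on $[t_\star, t_\star+T^\star]$. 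Extracting the $\dd^0 u$ component of each $W(\tau-\sigma_j)$ via a constant selector $E$ and substituting into the Riemann sum representation yields
\begin{equation*}
x(\tau) = (CEG)\,\lambda(\tau) + \rho(\tau),
\end{equation*}
with $|\rho(\tau)|$ bounded by the sum of the three small quantities (Hurwitz tail, quadrature, $\epsilon^\prime$). Since $CEG \in \R^{n\times d^{\prime\prime}}$ and $d^{\prime\prime}\leq n-1$, there exists a unit $z \in \R^n$ with $z^\top CEG = 0$, whence $|z^\top x(\tau)| \leq |\rho(\tau)| \leq \epsilon$ for an appropriate parameter selection. Arbitrariness of $T^\star,\epsilon$ and the fact that $t_\star$ can be pushed arbitrarily far in time (since non-PE of $\DD^n(\uu)$ persists on every future half-line) give $\xx \notin \Cpe{n}$. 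Part (ii) repeats the argument with $(\xx,\uu)$ as output: one appends $u(\tau)$, already contained in $W(\tau)$, to $x(\tau)$, applies Lemma~\ref{lemma:not_PE_in_CT} to $\DD^{n+1}(\uu)$ (observing $n+m-1 \leq mn$), and uses $d^{\prime\prime} \leq n+m-1 < n+m$ to locate the null direction in $\R^{n+m}$.

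The main obstacle I foresee is the synchronized ordering of small parameters: $\bar T$ must be large (to kill the Hurwitz tail), $N$ must be large relative to $\bar T$ (for an accurate Riemann sum), $\epsilon^\prime$ must be small relative to $\|CE\|$ (to keep $\rho$ small), and the time $t_\star$ furnished by Lemma~\ref{lemma:not_PE_in_CT} must still be compatible with the bound chosen for the initial-condition transient. Closing this bookkeeping while keeping the quadrature constants $C_j$ independent of $N$ (so that $\|CE\|$ does not blow up) is the delicate technical part.
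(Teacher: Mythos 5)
Your proposal is correct and follows essentially the same route as the paper's proof: contraposition, approximation of $x(\tau)$ by a linear combination of sampled past inputs using the Hurwitz property, Lemma~\ref{lemma:not_PE_in_CT} to collapse the persistently spanned dimension of the sampled stack to $d''\leq n-1$, and selection of a direction $z$ orthogonal to the resulting $n\times d''$ matrix. The only differences are cosmetic: the paper builds its quadrature by matching the Taylor expansion of $x$ over each subinterval (so the coefficients act on the full derivative stacks $U(t+i\delta)$ directly, with no selector needed), and it handles arbitrary initial conditions by first taking $x(0)=0$ and then invoking exponential convergence of solutions rather than pushing $t_\star$ to infinity; your bookkeeping concerns all close in the way you anticipate.
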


		The proof of Theorem~\ref{thm:nec_mi_ct} is provided in Appendix~\ref{proof:nec_mi_ct}.

		\begin{remark}\label{remark:necessary_for_SI_CT}
			As in the case of discrete-time systems, we note that, when $m=1$, Theorem~\ref{thm:nec_mi_ct} collapses into a ``full" persistency of excitation requirement on $\DD^n(\uu)$ and $\DD^{n+1}(\uu)$, namely, for all $x(0)\in \R^n$, it holds
			\begin{itemize}
				\item[i)]$\Sigma\in \hctrbxsi, \xx\in \Cpe{n}\implies\QQ^n(\uu)\in \Cpe{n}$
				\item[ii)]$\Sigma\in \hctrbxusi, (\xx, \uu)\in \Cpe{n+1}\implies\QQ^{n+1}(\uu)\in \ \Cpe{n+1}.$\oprocend
			\end{itemize}
		\end{remark}

		\noindent
		Moving to sufficient conditions in continuous-time systems, 
		in order to repeat the proof given for the sufficient results in the discrete-time domain, we now provide a relation between the PE of a signal and the PE of its time derivative.  
		
		\begin{lemma}\label{lemma:PE_of_derivatives}
			Let $\ww \in \C{d}$. If $\ww \notin \Cpe{d}$, then for any $T, \epsilon>0$ there exist $t>0$ and unitary $z \in \R^d$ such that
			\begin{align}
				|z^\top w(\tau)|\leq \epsilon, &&& |y^\top \dot{w}(\tau)|\leq \epsilon, 
			\end{align} 
			for all $\tau \in [t, t+T]$.\oprocend
		\end{lemma}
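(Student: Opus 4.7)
\emph{Proof strategy.} The plan is to upgrade the conclusion of Lemma~\ref{lemma:CT_not_disagiato} using the uniform boundedness of the second derivative that membership of $\ww$ in $\C{d}$ provides: if $z^\top w$ is tiny over a long enough interval along some unitary direction $z$, then the uniformly bounded second derivative of the same scalar function forces its first derivative to be small as well. This is essentially a Landau--Kolmogorov-type trade-off between a $\mathcal{L}_\infty$ bound on $w$ and a $\mathcal{L}_\infty$ bound on $\ddot w$, which transfers to a $\mathcal{L}_\infty$ bound on $\dot w$.

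Concretely, since $\ww \in \C{d}$, its second derivative is uniformly bounded; set $M \coloneqq \|\ddot\ww\|_\infty < +\infty$, so that $|z^\top \ddot w(\tau)| \leq M$ for every unitary $z \in \R^d$ and every $\tau \geq 0$. Given target parameters $T, \epsilon > 0$, choose auxiliary parameters $h \coloneqq \epsilon/M$, $\epsilon^\prime \coloneqq \min\{\epsilon, \epsilon h/4\}$, and $T^\prime \coloneqq T + h$, and apply Lemma~\ref{lemma:CT_not_disagiato} with tolerance $\epsilon^\prime$ and window length $T^\prime$. This produces some $t > 0$ and a unitary $z$ such that $|z^\top w(\tau)| \leq \epsilon^\prime$ for all $\tau \in [t, t + T^\prime]$; restricted to the sub-window $[t, t+T]$, this already yields the first required estimate $|z^\top w(\tau)| \leq \epsilon$.

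For the derivative estimate, apply a second-order Taylor expansion of the scalar function $s \mapsto z^\top w(s)$ between $\tau$ and $\tau + h$ for each $\tau \in [t, t+T]$: there exists $\xi \in [\tau, \tau + h] \subseteq [t, t + T^\prime]$ such that
\begin{equation*}
z^\top \dot w(\tau) = \frac{z^\top w(\tau + h) - z^\top w(\tau)}{h} - \frac{h}{2} z^\top \ddot w(\xi).
\end{equation*}
Bounding the right-hand side via $|z^\top w| \leq \epsilon^\prime$ on $[t, t + T^\prime]$ and $|z^\top \ddot w| \leq M$ then delivers
\begin{equation*}
|z^\top \dot w(\tau)| \leq \frac{2 \epsilon^\prime}{h} + \frac{h M}{2} \leq \frac{\epsilon}{2} + \frac{\epsilon}{2} = \epsilon,
\end{equation*}
so that the same pair $(t, z)$ witnesses both inequalities on $[t, t+T]$.

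I expect the only delicate point to be the parameter book-keeping: $(h, \epsilon^\prime, T^\prime)$ must be chosen jointly so that the Taylor remainder on a stepsize $h$ is controlled and, at the same time, the working window $[t, t+T]$ remains strictly inside the longer window $[t, t + T^\prime]$ returned by Lemma~\ref{lemma:CT_not_disagiato}. The crucial structural ingredient that makes the argument work is that the bound $M$ on $\ddot \ww$ coming from $\ww \in \C{d}$ is uniform in time and, via $|z|=1$, also uniform in the direction $z$ produced by Lemma~\ref{lemma:CT_not_disagiato}.
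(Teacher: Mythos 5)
Your proposal is correct and follows essentially the same route as the paper's proof: invoke Lemma~\ref{lemma:CT_not_disagiato}, write a first-order Taylor expansion of $z^\top w$ over a small step, and balance the step size against the tolerance using the uniform bound on $\ddot\ww$. Your parameter bookkeeping (shrinking the tolerance to $\epsilon'$ and enlarging the window to $T+h$ up front) is in fact slightly cleaner than the paper's choice $\delta=\sqrt{\epsilon}$ followed by inverting $\gamma(\epsilon)$, and avoids the paper's residual shortening of the interval to $[t,t+T-\delta]$; the only cosmetic caveat is the degenerate case $M=0$, handled by replacing $M$ with $\max\{1,\|\ddot\ww\|_\infty\}$.
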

		The proof of Lemma~\ref{lemma:PE_of_derivatives} is provided in Appendix~\ref{proof:PE_of_derivatives}.
		
		\noindent
		With Lemma~\ref{lemma:CT_not_disagiato} and~\ref{lemma:PE_of_derivatives} at hand, we can adapt the proof of Theorem~\ref{thm:suff_mi_dt} to derive an equivalent result for continuous-time systems.

		\begin{theorem}[Sufficient condition for CT systems]\label{thm:suff_mi_ct}
			Let $\uu \in \C{m}$. For all initial conditions $x(0)\in \R^n$,
			\begin{itemize}
				\item[i)] $\Sigma\in \hctrbx, \DD^{n}(\uu) \in \Cpe{nm} \implies \xx \in \Cpe{n}$.
				\item[ii)] $\Sigma\in \hctrbxu, \DD^{n+1}(\uu) \in \Cpe{(n+1)m} \implies (\xx, \uu)\in \Cpe{n+m}$.\oprocend
			\end{itemize}
		\end{theorem}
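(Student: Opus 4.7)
The plan is to argue by contraposition, adapting the three-step discrete-time proof of Theorem~\ref{thm:suff_mi_dt} to the continuous-time setting by replacing each shift-based manipulation with a derivative-based one. The role of the approximate feedback ``$u_t\approx K x_t$'' in the discrete proof will be played by an analogous continuous-time relation obtained by combining Lemma~\ref{lemma:CT_not_disagiato}, Lemma~\ref{lemma:PE_of_derivatives}, and the LTI dynamics. I sketch part~(i); part~(ii) then follows by replacing $\xx$ with $(\xx,\uu)$ and $n$ with $n+1$, since one extra differentiation of the state-input pair produces the same feedback-like structure.

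I would first assume $\xx\notin\Cpe{n}$ and invoke Lemma~\ref{lemma:CT_not_disagiato} to extract, for any prescribed $\epsilon,T>0$, a unit vector $z\in\R^n$ and a time $t>0$ with $|z^\top x(\tau)|\le\epsilon$ on $[t,t+T]$. I would then invoke Lemma~\ref{lemma:PE_of_derivatives} to also enforce $|z^\top\dot x(\tau)|\le\epsilon$ on the same interval. Substituting the dynamics gives $z^\top B\,u(\tau)=z^\top\dot x(\tau)-z^\top A\,x(\tau)$, which, when $z^\top B\ne 0$, forces the decomposition $u(\tau)=-Kx(\tau)+v(\tau)+O(\epsilon)$ with $K=(z^\top B)^{+}z^\top A$ and $v(\tau)\in\ker(z^\top B)$, a subspace of dimension $m-1$ for multi-input systems and trivial for single-input ones. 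When $z^\top B=0$, reachability of $(A,B)$ together with the Cayley--Hamilton theorem guarantees that iterating this derivative-and-dynamics step at most $n-1$ times produces a direction $(A^\top)^{k}z$ on which $B$ is ``visible'', restoring the previous situation.

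The third step would use this approximate feedback form to write the closed-loop system $\dot x=(A-BK)x+Bv+O(\epsilon)$ and, by differentiating the feedback relation repeatedly while substituting the dynamics, express each entry of $\DD^n(\uu)(\tau)$ as a linear combination of $x(\tau)$ and of $v(\tau),\dot v(\tau),\dots,\dd^{n-1}v(\tau)$ up to an $O(\epsilon)$ perturbation. Since $|z^\top x(\tau)|\le\epsilon$ constrains $x(\tau)$ to the hyperplane $z^\perp$ up to $\epsilon$, and each $\dd^k v(\tau)$ lies in the time-invariant $(m-1)$-dimensional subspace $\ker(z^\top B)$, the stack $\DD^n(\uu)(\tau)\in\R^{nm}$ is confined, modulo $\epsilon$, to a subspace of dimension at most $(n-1)+n(m-1)=nm-1$ throughout $[t,t+T]$; choosing $\epsilon$ small and $T$ large enough relative to the hypothesized PE constants of $\DD^n(\uu)$ contradicts $\DD^n(\uu)\in\Cpe{nm}$. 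I expect the main obstacle to be the quantitative accounting in the case $z^\top B=0$: each of the at most $n$ derivative iterations requires a fresh application of Lemma~\ref{lemma:PE_of_derivatives} to a derived smooth signal and yields an $\epsilon$ that degrades with the powers of $A$, and propagating the bound from pointwise at $\tau=t$ to the whole interval through the closed-loop flow requires a Gr\"onwall-type estimate whose constants must be absorbed into the final choice of $\epsilon$ and $T$.
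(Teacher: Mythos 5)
Your proposal follows essentially the same route as the paper's proof: contraposition, Lemmas~\ref{lemma:CT_not_disagiato} and~\ref{lemma:PE_of_derivatives} to extract the approximate feedback $u(\tau)=Kx(\tau)+v(\tau)+O(\epsilon)$ with $v(\tau)\in\ker(z^\top B)$, the closed-loop rewriting of $\DD^n(\uu)$ as a linear function of $x$ and the derivatives of $v$, the $(n-1)+n(m-1)=nm-1$ dimension count, and an iteration over directions (terminated by reachability) when $z^\top B=0$. The technical obstacles you flag in the degenerate case (fresh applications of Lemma~\ref{lemma:PE_of_derivatives} at each iteration and the propagation of the $\epsilon$-bounds) are genuine, but the paper itself glosses over them by deferring that case to the discrete-time argument.
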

		The proof of Theorem~\ref{thm:suff_mi_ct} is provided in Appendix~\ref{proof:suff_mi_ct}.

	\subsection{A discussion on the sets of SR signals}
		
		Having collected all these necessary and sufficient conditions, we are now interested in explictly characterizing the set of the signals which are SR \emph{for all} stable systems sharing certain structural properties, namely, the input and state dimensions.
		At first, we consider the case of single input systems, i.e., we are interested in characterizations of the sets 
		\begin{align}\label{eq:SR_si}
			\Dsr(\sctrbxsi)\coloneqq \!\!\bigcap_{\substack{x_0\in \R^n\\ \Sigma \in \sctrbxsi}}\!\!\Dsr(\Sigma, x_0), &&& \Csr(\hctrbxsi)\coloneqq\!\! \bigcap_{\substack{x(0)\in \R^n\\ \Sigma \in \hctrbxsi}} \!\! \Csr(\Sigma, x_0),
		\end{align}
		
		\noindent
		where $\Dsr(\Sigma, x_0)$ and $\Csr(\Sigma, x(0))$ are given in \eqref{eq:SR_sets}.
		
		\begin{lemma}
			Given system classes $\sctrbxsi, \hctrbxsi$ in \eqref{eq:simi_systems}, the sets of sufficiently rich input in \eqref{eq:SR_si} are given by
			\begin{equation}
				\begin{split}
					\Dsr(\sctrbxsi)	&= \{\uu \in \D{}: \QQ^n(\uu) \in \Dpe{n}\}, \\
					\Csr(\hctrbxsi)	&= \{\uu \in \C{}: \DD^n(\uu) \in \Cpe{n}\}.
				\end{split}
			\end{equation}
		\end{lemma}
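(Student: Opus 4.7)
The plan is to prove each inclusion separately by directly invoking the necessary and sufficient theorems already established for single-input systems. The key simplification is that when $m=1$ we have $nm=n$, so the partial-PE requirements appearing in the multi-input statements collapse into full PE requirements on $\QQ^n(\uu)$ (resp.\ $\DD^n(\uu)$), as already highlighted in Remarks~\ref{remark:necessary_for_SI_DT} and \ref{remark:necessary_for_SI_CT}.

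For the discrete-time inclusion $\Dsr(\sctrbxsi) \subseteq \{\uu \in \D{} : \QQ^n(\uu) \in \Dpe{n}\}$, I would take an arbitrary $\uu \in \Dsr(\sctrbxsi)$. By Definition~\ref{def:general_SR} and \eqref{eq:SR_si}, for every $\Sigma \in \sctrbxsi$ and every $x_0 \in \R^n$ the corresponding state $\xx = \Sigma(\uu, x_0)$ lies in $\Dpe{n}$. It then suffices to fix one representative system in $\sctrbxsi$ (for any $n$, such systems trivially exist, e.g.\ $(A,B)$ with $A$ Schur and $(A,B)$ reachable) and apply Theorem~\ref{thm:nec_mi_dt}(i) specialized to $m=1$; the conclusion $\QQ^n(\uu) \in \Dppe{n}{n} = \Dpe{n}$ follows immediately. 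For the reverse inclusion, starting from $\uu$ with $\QQ^n(\uu) \in \Dpe{n}$, I would observe that since $nm=n$ the hypothesis of Theorem~\ref{thm:suff_mi_dt}(i) is met, so the theorem applied to every pair $(\Sigma, x_0)$ with $\Sigma \in \sctrbxsi$ yields $\xx \in \Dpe{n}$, i.e.\ $\uu \in \Dsr(\Sigma, x_0)$. Intersecting over all admissible pairs (cf.\ \eqref{eq:SR_si}) gives $\uu \in \Dsr(\sctrbxsi)$.

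The continuous-time identity is obtained verbatim by replacing $\QQ^n$ with $\DD^n$, $\Dpe{n}$ with $\Cpe{n}$, and invoking Theorem~\ref{thm:nec_mi_ct} (through Remark~\ref{remark:necessary_for_SI_CT}) together with Theorem~\ref{thm:suff_mi_ct}. There is no substantive obstacle in the argument: the result is essentially a corollary that packages the previously proven necessary and sufficient conditions into a closed-form description of the SR set. The only point that truly deserves attention, and that explains why the intersection over all stable reachable single-input systems collapses to a system-independent condition, is that in the scalar-input case the necessary and sufficient conditions depend only on the state dimension $n$ and not on the particular matrices $(A,B)$; consequently, the same PE requirement on $\QQ^n(\uu)$ (resp.\ $\DD^n(\uu)$) is simultaneously necessary for every system in the class and sufficient for every system in the class, which is exactly what the lemma asserts.
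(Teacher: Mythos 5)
Your proposal is correct and follows essentially the same route as the paper's own proof: both directions are obtained by specializing Theorems~\ref{thm:nec_mi_dt} and~\ref{thm:suff_mi_dt} (resp.\ \ref{thm:nec_mi_ct} and~\ref{thm:suff_mi_ct}) to $m=1$, where the partial-PE requirement collapses to full PE, yielding the equivalence $\QQ^n(\uu)\in\Dpe{n}\iff\Sigma(\uu,x_0)\in\Dpe{n}$ uniformly over the class. Your closing observation that the condition is independent of $(A,B)$ and $x_0$, so the intersection in \eqref{eq:SR_si} is over identical sets, is exactly the point the paper makes.
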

		\begin{proof}
			It is sufficient to apply Theorems \ref{thm:nec_mi_dt}, \ref{thm:suff_mi_dt}, \ref{thm:nec_mi_ct}, and \ref{thm:suff_mi_ct} to notice that
			\begin{equation}
				\begin{split}
					\QQ^n(\uu)\in \Dpe{n} \iff \Sigma(\uu, x_0)\in \Dpe{n},
				\end{split}
			\end{equation}
			for all $\Sigma \in \sctrbxsi$ and $x_0\in \R^n$.
		\end{proof}
		\begin{remark}
			Notice that the sets $\Dsr(\sctrbxsi), \Csr(\hctrbxsi)$ are open cones in $\D{n}, \C{n}$ (the proof is the same as the one for Lemma~\ref{lemma:PE_cone}). As per Remark \ref{remark:PE_robust}, this means that sufficiently small perturbations of SR signals are still SR signals. \oprocend
		\end{remark}
		This characterization is complete, namely, we have found \emph{all} the input signals which are SR for systems in $\sctrbxsi, \hctrbxsi$. 
		Indeed, the single-input case is simplified by the fact that, as stated in \cite[Thm. $1$]{narendra1987persistent}, any signal which is SR for a certain $\Sigma_1 \in \sctrbx$ must be SR also for $\Sigma_2 \in \sctrbx$ (which means that in \eqref{eq:SR_si} we intersect always the same set). The same is not true for multi-input systems,
		and in this case a complete characterization of the inputs which are SR for \emph{all} the stable systems sharing input and state dimension seems not easy to obtain. Considering the classes $\sctrbxmi, \hctrbxmi$ defined in \eqref{eq:simi_systems}, 
		we are interested in the sets
		\begin{align}\label{eq:SR_mi}
			\Dsr(\sctrbxmi)\coloneqq \!\!\!\bigcap_{\substack{x_0\in \R^n\\ \Sigma \in \sctrbxmi}}\!\!\!\Dsr(\Sigma, x_0), &&& \Csr(\hctrbxmi)\coloneqq \!\!\!\bigcap_{\substack{x(0)\in \R^n\\ \Sigma \in \hctrbxmi}}\!\!\!\Csr(\Sigma, x_0),
		\end{align}
		
		\noindent
		where $\Dsr(\Sigma, x_0)$ and $\Csr(\Sigma, x(0))$ are given in~\eqref{eq:SR_sets}.
		\begin{lemma}
			Given system classes $\sctrbxmi, \hctrbxmi$ in \eqref{eq:simi_systems}, the sets of sufficiently rich input \eqref{eq:SR_mi} satisfy
			\begin{equation}
				\begin{split}
					&\{\uu\!:\!\QQ^n(\uu)\!\in \!\Dpe{nm}\}\! \subset  \! \Dsr(\sctrbxmi) \! \subset\! \{\uu\!:\!\QQ^n(\uu)\!\in\! \Dppe{nm}{n}\}
					\\
					&\{\uu\!:\!\DD^n(\uu)\!\in \!\Cpe{nm}\} \!\subset \!  \Csr(\hctrbxmi) \! \subset \!\{\uu\!:\!\DD^n(\uu)\!\in\! \Cppe{nm}{n}\}.\!\!
				\end{split}
			\end{equation}
		\end{lemma}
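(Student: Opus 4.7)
The proof decomposes into four set inclusions: two for the discrete-time chain and two for the continuous-time chain. My plan is to handle each by a direct appeal to the corresponding theorem from the main results, with no extra machinery required beyond the definitions in \eqref{eq:SR_sets} and \eqref{eq:SR_mi}. Throughout, I will exploit the fact that $\sctrbxmi \subset \sctrbx$ and $\hctrbxmi \subset \hctrbx$, which follows immediately from \eqref{eq:simi_systems} since the only extra condition in the multi-input classes is $m>1$. Consequently, every theorem of the previous subsections applies unchanged to any $\Sigma$ drawn from the multi-input classes.

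For the discrete-time left inclusion, I would fix $\uu \in \D{m}$ with $\QQ^n(\uu) \in \Dpe{nm}$ and pick an arbitrary $\Sigma \in \sctrbxmi$ together with an arbitrary $x_0 \in \R^n$. Theorem~\ref{thm:suff_mi_dt}(i) then yields $\xx = \Sigma(\uu, x_0) \in \Dpe{n}$, which is precisely the statement that $\uu \in \Dsr(\Sigma, x_0)$. Since the pair $(\Sigma, x_0)$ was arbitrary, $\uu$ lies in the intersection defining $\Dsr(\sctrbxmi)$ in \eqref{eq:SR_mi}. For the discrete-time right inclusion, I would start from $\uu \in \Dsr(\sctrbxmi)$, select any single $\Sigma \in \sctrbxmi$ and any $x_0 \in \R^n$, observe that $\Sigma(\uu, x_0) \in \Dpe{n}$ holds by construction of the intersection, and then invoke Theorem~\ref{thm:nec_mi_dt}(i) to obtain $\QQ^n(\uu) \in \Dppe{nm}{n}$.

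The continuous-time chain follows by the identical scheme, with Theorem~\ref{thm:suff_mi_ct}(i) replacing Theorem~\ref{thm:suff_mi_dt}(i), Theorem~\ref{thm:nec_mi_ct}(i) replacing Theorem~\ref{thm:nec_mi_dt}(i), and $\DD^n$ playing the role of $\QQ^n$. No additional arguments about smoothness or boundedness intervene here, because both bounds reduce to membership assertions in PE/PPE sets that are already packaged by the cited theorems; the signal-class hypotheses $\uu \in \C{m}$ and the initial-condition quantifier $x(0)\in \R^n$ are exactly those under which those theorems were proved.

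The only genuinely subtle point is the strictness implicitly carried by the symbol $\subset$: the containments themselves are immediate consequences of the four theorems, but showing that each containment is proper cannot be done by the same abstract argument, since it would require exhibiting an explicit $\uu$ belonging to one side but not the other along with a concrete $\Sigma \in \sctrbxmi$ (resp.\ $\hctrbxmi$) witnessing the gap. This is exactly what the numerical illustration in Section~\ref{sec:example} is designed to provide, so in the lemma's proof I would restrict to the chain of subset inclusions above and defer the discussion of tightness to that section.
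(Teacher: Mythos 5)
Your proposal is correct and matches the paper's own proof, which likewise establishes the left inclusions by Theorems~\ref{thm:suff_mi_dt} and~\ref{thm:suff_mi_ct} and the right inclusions by Theorems~\ref{thm:nec_mi_dt} and~\ref{thm:nec_mi_ct}, quantified over all $\Sigma$ in the class and all initial conditions. Your closing observation that properness of the containments is not established by this argument but by the examples of Section~\ref{sec:example} is also consistent with how the paper handles it.
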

		\begin{proof}
			It is sufficient to apply Theorems \ref{thm:nec_mi_dt}, \ref{thm:suff_mi_dt}, \ref{thm:nec_mi_ct}, \ref{thm:suff_mi_ct} to notice that
			\begin{equation}
				\begin{split}
					\QQ^n(\uu)\in \Dpe{nm} &\implies \Sigma(\uu, x_0)\in \Dpe{n} 
					\\
					\QQ^n(\uu)\in \Dppe{nm}{n} &\impliedby \Sigma(\uu, x_0)\in \Dpe{n},
				\end{split}
			\end{equation}
			for all $\Sigma \in \sctrbxmi$ and $x_0\in \R^n$.
		\end{proof}
		The difficulties of obtaining a complete characterization of the sets $\Dsr(\sctrbxmi),  \Csr(\hctrbxmi)$ are due by the fact that in \eqref{eq:SR_mi} we are intersecting different sets (in other words, there exist inputs which are SR for a certain $\Sigma_1 \in \sctrbxmi$ but not for $\Sigma_2 \in \sctrbxmi$).
		The reason of this needs not to be searched into the initial condition, but into the system matrices $A, B$ and how each input enters into the system. 
		To corroborate this statement notice that, by \cite[Lemma~2.2]{Wonham74}, if $(A, B)$ is reachable, then it is possible to build a feedback gain $F$ such that $(A+BF, b)$ is reachable, with $b\in \im(B)$. With such a reshape of the system, the conditions for single input systems hold, which are independent on $A, b, F$, and $x_0$ (so, it is only the \emph{structure} of the system that influences the ability to obtain PE trajectories).

\section{Counterexamples}\label{sec:example}
	
	We conclude the article by providing examples which show that the obtained results are tight and cannot be improved without considering more specific classes of systems. 
	In both the following examples, we inject periodic inputs and gather sufficiently large data batches to obtain a representative behavior for the whole infinite-dimensional time window.

	\subsection{Sufficient condition}
	
		We pick here a condition about $\uu$ weaker than the one claimed to be sufficient by Theorem~\ref{thm:suff_mi_dt}, and we show it is not enough to guarantee PE of $(\xx, \uu)$. We consider a discrete-time LTI system in form \eqref{eq:plant_dynamics}, with $n=7, m=3$ and matrices 	
		\begin{equation}\label{eq:sufficiency_counterexample}
			\begin{split}
				A&\coloneqq
				\begin{bmatrix}
					0 &   1  &       0   &      0   &      0   &      0     &    0\\
					0 &        0  &  1    &     0    &     0    &     0    &     0\\
					0.024 &  -0.26  &  0.9    &     0     &    0      &   0     &    0\\
					0  &       0   &      0      &   0  &   1     &    0      &   0\\
					0  &       0   &      0     &    0    &     0   &  1      &   0\\
					0  &       0   &      0 &   0.21 &  -1.07 &   1.8   &      0\\
					0  &       0   &      0   &      0    &     0     &    0  &   0.8
				\end{bmatrix},\\
				B& \coloneqq 
				\begin{bmatrix}
					0  &  2 &   1   &      0   &      0     &    0 &   1\\
					2 &   1  &  0.4  &  7  &  4     &    0   &	0  \\
					5  &  2  &  0.9  &  4  &  6  &  2	&	1
				\end{bmatrix}^\top.
			\end{split}
		\end{equation}
		
		\noindent
		It can be verified that	the pair $(A, B)$ is reachable and $A$ is Schur.
		We want to show that $\QQ^{n}(\uu)\in \Dpe{nm}$ (instead of $\QQ^{n+1}(\uu)\in \Dpe{(n+1)m}$) does not ensure $(\xx, \uu)\in \Cpe{n+m}$.
		By choosing initial condition $x_0=0$, and input described by the dynamics $u_{t+1} = K_xx_t + K_u u_t + v_t^1 + v_t^2$, with $u_0=0$ and
		\begin{equation*}
			\begin{split}
				v_t^1 &= \begin{bmatrix}
					-0.4082\\
					0.9082\\
					0.0918
				\end{bmatrix} (\sin(t)+\sin(2t)+\sin(3t)+\sin(4t)),\\
				v_t^2 &= 
				\begin{bmatrix}
					0.4082\\
					0.0918\\
					0.9082
				\end{bmatrix} (\sin(5t)+\sin(6t)+\sin(7t)+\sin(8t)),\\
				K_x &= 10^{-3}
				\setlength{\arraycolsep}{2pt}
				\begin{bmatrix}
					-8  &  8.67  & -300 &  -70  &  356.7  & -600  & -266.7\\
					-4  &  4.33  & -150 &  -350  &  178.3 &  -300  & -133.3\\
					4  & -4.33  &  150 &   350 &  -178.3   & 300 &   133.3
				\end{bmatrix},\\
				\setlength{\arraycolsep}{5pt}
				K_u &= 
				\begin{bmatrix}
					-0.6667 &  -0.1333  & -1.3\\
					-0.3333 &  -0.0667 &  -0.65\\
					0.3333  &  0.0667  &  0.65
				\end{bmatrix},
			\end{split}
		\end{equation*} 
		simulating for $t=1, \ldots, 1000$, it can be checked that the input verifies $\QQ^{n}(\uu) \in \Dpe{nm}$. However the resulting state-input trajectory $(\xx, \uu)$ is not PE, so we have shown that 
		\begin{equation}
			\QQ^{n}(\uu) \in \Dpe{nm}\;\; \not \hspace{-0.25cm}\implies \;\;(\xx, \uu)\in \Dpe{n+m}.
		\end{equation}
		To corroborate these claims, in Figure \ref{fig:rankssufficiency}, we plot the quantities
		\begin{equation}\label{eq:r_definitions}
			\begin{split}
				r_{1}(\xx, \uu, T)\coloneqq& \rank\left(\sum_{t=0}^T (x_t, u_t)(x_t, u_t)^\top\right)
				\\
				r_n(\uu, T)\coloneqq &\rank\left(\sum_{t=0}^T \QQ^n(\uu)_t \QQ^n(\uu)_t^{\top}\right).
			\end{split}
		\end{equation}
		\begin{figure}[h!]
			\centering
			\includegraphics[width=0.7\linewidth]{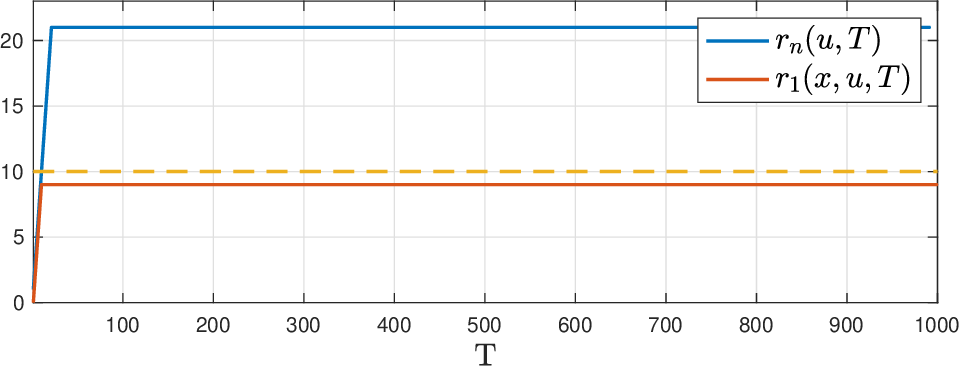}
			\caption{Directions spanned in time by the signals $(\xx, \uu)$ and $\QQ^{n}(\uu)$.}
			\label{fig:rankssufficiency}
		\end{figure}

		\begin{remark}
			This example is a counterexample also for the sufficiency conjecture in \cite[Pag.~4]{markovsky2023persistency}, which we may state (with some abuse of notation) as 
			\begin{equation}
				\QQ^{\nu+1}(\uu) \in \Dpe{(\nu+1)m} \implies (\xx, \uu)\in \Dpe{n+m},
			\end{equation}
			where $\nu$ is the controllability index \cite[Pag.~121]{Wonham74} of the pair $(A, B)$. In this case, the controllability index of the pair \eqref{eq:sufficiency_counterexample} is $\nu=3$, and $\nu + 1 < n$, so we may write $\QQ^{\nu+1}(\uu) = P \QQ^{n}(\uu)$ for some full row rank matrix $P$, and $\QQ^{n}(\uu) \in \Dpe{nm}$ ensures that our input verifies also $\QQ^{\nu+1}(\uu)\in \Dpe{(\nu+1)m}$ \cite[Lemma~4.8.3]{ioannou1996robust}.
			However, we have shown that $(\xx, \uu)\in \Dpe{n+m}$ is not achieved by this input, thus this counterexample demonstrates also that 
			\begin{equation*}
				\QQ^{\nu+1}(\uu) \in \Dpe{(\nu+1)m}\;\; \not \hspace{-0.25cm}\implies \;\;(\xx, \uu)\in \Dpe{n+m}.\eqoprocend
			\end{equation*}
		\end{remark}

	\subsection{Necessary condition}
	
		Here, we focus on a condition about $\uu$ stronger than the one claimed to be necessary by Theorem~\ref{thm:nec_mi_dt}, and we show it is not guaranteed by PE of $(\xx, \uu)$. We consider a discrete-time LTI system in form \eqref{eq:plant_dynamics}, with $n=7, m=3$, matrix $B$ given in \eqref{eq:sufficiency_counterexample}, and
		\begin{equation*}
			\begin{split}
				A&\coloneqq
				\begin{bmatrix}
					0   & 1    &     0    &     0    &     0    &     0   &      0\\
					0   &      0  &  1    &     0    &     0    &     0    &     0\\
					-0.3 &   0.2  &  0.1   &      0   &      0      &   0    &     0\\
					0   &      0    &     0    &     0  &  1    &     0    &     0\\
					0    &     0   &      0   &      0    &     0   & 1    &     0\\
					0     &    0  &       0  & -0.3  &  0.2  &  0.1    &     0\\
					0     &    0 &        0    &     0   &      0    &     0  & -0.7324
				\end{bmatrix}.
			\end{split}
		\end{equation*}
		
		\noindent
		It can be verified that	the pair $(A, B)$ is reachable and $A$ is Schur.
		We want to show that $(\xx, \uu)\in \Dpe{n+m}$ does not ensure $\QQ^{n+1}(\uu) \in \Dppe{(n+1)m}{n+m+1}$.
		By choosing initial condition $x_0=0$, and input 
		\begin{equation}
			u_t = 
			\begin{bmatrix}
				\sin(t)+\sin(2t)\\
				\sin(3t) + \sin(4t)\\
				\sin(5t)
			\end{bmatrix},
		\end{equation}
		simulating for $t=1, \ldots, 1000$, it can be verified that $(\xx, \uu)\in \Dpe{n+m}$. However, it can be verified also that $\QQ^{n+1}(\uu) \notin \Dpe{(n+1)m, n+m+1}$.
		With this example, we have shown that
		\begin{equation}
			(\xx, \uu)\in \Dpe{n+m} \;\; \not \hspace{-0.25cm}\implies \;\;\QQ^{n+1}(\uu) \in \Dppe{(n+1)m}{n+m+1}.
		\end{equation}
		To corroborate these claims, in Figure \ref{fig:rankssufficiency}, we plot the quantities $r_{1}(x, u, T)$ as defined in \eqref{eq:r_definitions} and
		\begin{equation}
			\begin{split}
				r_{n+1}(\uu, T)\coloneqq &\rank\left(\sum_{t=0}^T \QQ^{n+1}(\uu)_t \QQ^{n+1}(\uu)_t^{\top}\right).
			\end{split}
		\end{equation}
		Notice these plots show that the  necessary condition given in Theorem \ref{thm:nec_mi_dt} holds.
		
		\begin{figure}[h!]
			\centering
			\includegraphics[width=0.7\linewidth]{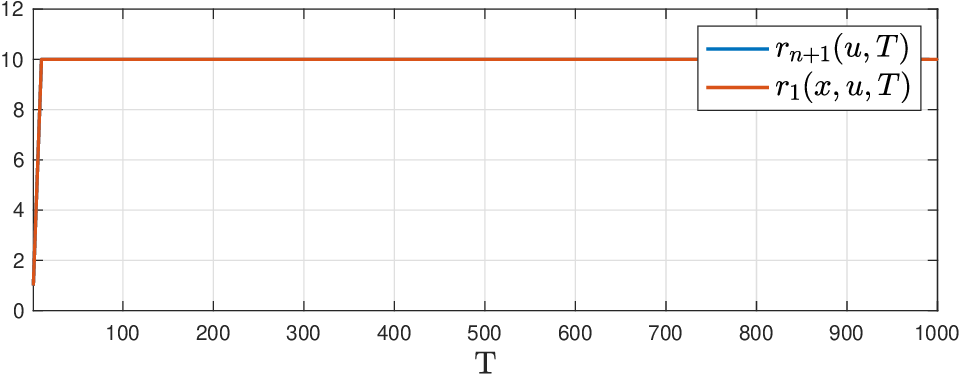}
			\caption{Directions spanned in time by the signals $(\xx, \uu)$ and $\QQ^{n+1}(\uu)$.}
			\label{fig:ranksnecessity}
		\end{figure}

		\begin{remark}
			This example is a counterexample also for the necessary conjecture in \cite[Pag. $4$]{markovsky2023persistency}, which we may state (with some abuse of notation) as 
			\begin{equation}
				(\xx, \uu)\in \Dpe{n+m} \implies \QQ^{\nu+1}(\uu) \in \Dpe{(\nu+1)m},
			\end{equation}
			where $\nu$ is the controllability index \cite[Pag. $121$]{Wonham74} of the pair $(A, B)$. In this case, the controllability index of the pair \eqref{eq:sufficiency_counterexample} is $\nu=3$, 
			and $\nu + 1 < n$, so we may write $\QQ^{\nu+1}(\uu) = P \QQ^{n}(\uu)$ for some full row rank matrix $P$. However, since we have shown that $ \QQ^{n}(\uu)$ spans only $n+m=10$ directions, $\QQ^{\nu+1}(\uu)$ can span at most $10$ directions. Being $\QQ^{\nu+1}(\uu)$ $(\nu+1)m=12$-dimensional, this means it is not PE, and thus this counterexample demonstrates also that 
			\begin{equation*}
				(\xx, \uu)\in \Dpe{n+m} \;\; \not \hspace{-0.25cm}\implies \;\;\QQ^{\nu+1}(\uu) \in \Dpe{(\nu+1)m}.\eqoprocend
			\end{equation*}
		\end{remark}

\section{Conclusions}\label{sec:conclusion}

	In this paper, we have addressed the problem of guaranteeing the persistence of excitation of state and input signals in the context of LTI systems via the application of a sufficiently rich input. 
	
	Exploiting the analogies between time shifts in discrete time and derivatives in continuous time, we are able to develop a unifying notation to state necessary and sufficient conditions to obtain PE of commonly used regressors for both frameworks.
	Leveraging these conditions, we explicitly characterized the set of sufficiently rich signals for asymptotically stable controllable LTI systems.
	Finally, we have shown with a numerical example that the derived conditions are tight and cannot be improved without including more specific knowledge of the considered system.

\section{Appendix}

	\subsection{Proof of Lemma \ref{lemma:PE_cone}}\label{proof:PE_cone}
	
	Since the arguments for discrete-time are analogous but more straightforward, we prove only the continuous-time part.
	Notice that for any $\ww\in \Cpe{d}$ and $\lambda>0$, $\lambda \ww \in \Cpe{d}$, so $\Cpe{d}$ is a cone in $\C{d}$.
	Next, we show it is open. Let $\ww \in \Cpe{d}$. There exist $T, \alpha>0$ such that 
	\begin{equation}
		\int_t^{t+T} w(\tau)w(\tau)^\top d\tau\geq \alpha I, \;\;\; \forall t\geq 0.
	\end{equation}
	Choose any $\alpha^\prime \in (0, \alpha)$ and any 
	$\epsilon \in (0, \frac{\alpha^\prime}{2TM})$
	, where $M\coloneqq \|\ww\|_\infty$. Choose any $\ww^\prime$ such that $\|\ww^\prime-\ww\|_\infty=\|\Delta \ww\|_\infty\leq \epsilon$. We have
	\begin{equation}
		\begin{split}
		\int_t^{t+T} w(\tau)^\prime w(\tau)^{\prime\top} \textup{d}\tau =&\int_t^{t+T} (w(\tau) + \Delta w(\tau ))(w(\tau)+\Delta w(\tau))^\top \textup{d}\tau\\
		\geq& \alpha I + \int_t^{t+T} \Big(w(\tau)\Delta w(\tau)^\top+\Delta w(\tau)w(\tau)^\top
		+ \Delta w(\tau)\Delta w(\tau)^\top \Big)\textup{d}\tau\\
		\geq& \alpha I + \int_t^{t+T} \Big(w(\tau)\Delta w(\tau)^\top+\Delta w(\tau)w(\tau)^\top\Big) \textup{d}\tau\\
		\geq &I\left(\alpha - \int_t^{t+T} 2M \epsilon \textup{d}\tau \right)\\
		\geq &I(\alpha - 2TM \epsilon)\geq(\alpha-\alpha^\prime) I>0,
		\end{split}
	\end{equation} 
	so, $\ww^\prime$ is PE.
	Therefore, for each point $\ww\in \Cpe{d}$, it is always possible to find an open ball about $\ww$ which is still in $\Cpe{d}$.

	\subsection{Proof of Lemma \ref{lemma:not_PE_in_DT}}\label{proof:not_PE_in_DT}
	
	We divide the proof in three steps.
	
	\noindent 
	I) \emph{A useful characterization for PPE signals}
	
	Since $\WW(\ww)\coloneqq\QQ^{n}(\ww)$ is PPE of degree at most $d^\prime\leq d(n-1)$, 
	there exist $i=1, \ldots, nd-d^\prime$ orthonormal directions $z_i\in \R^{nd}$ such that for each $T\in \N, \epsilon>0$ we can find $\tet\in \N$ such that
	\begin{equation}\label{eq:claim}
		\sum_{\tau=\tet}^{\tet+T}|W^\top_\tau z_i|\leq \epsilon,
	\end{equation}
	which means $|W^\top_\tau z_i|\leq \epsilon$, for all $\tau=\tet, \ldots, \tet+T$. 
	Consider an orthonormal basis for $\R^{nd}$, $\{z_1, \ldots, z_{nd}\}$.
	Since $W_\tau = \sum_{i=1}^{nd} z_iW^\top_\tau z_i$, there exists $\lambda_\tau \in \R^{d^\prime}$ such that
	\begin{equation}\label{eq:PPE_implies_linear_dependence}
		W_\tau = E \lambda_\tau + \tilde{W}_\tau
	\end{equation}
	for all $\tau=\tet, \ldots, \tet+T$, where $E=[z_{nd-d^\prime + 1}, \ldots, z_{nd}]\in\R^{nd\times d^\prime}$ stacks the directions in which $\ww$ is PPE, the j-th component of $\lambda_\tau$ is given by $\lambda_\tau^j=W_\tau^\top z_j$, and $|\tilde{W}_\tau|\leq (nd-d^\prime)\epsilon$ by \eqref{eq:claim}.
	
	\noindent
	II) \emph{Signal sequences are constrained by PPE}
	
	Pick any $k\in \N: n\leq k \leq T$ and $\epsilon >0$.
	Consider the signal $\QQ^k(\ww)$ in the interval $\tau =\tet, \ldots, \tet 
	+ T-k+1$, namely, $(w_{\tau}, \ldots, w_{\tau+k-1})\in \R^{kd}$. 
	For each subsequence of $k$ instants in the window $\tau =\tet, \ldots, T-k+1$, we can write $(k-n+1)$ $nd$-dimensional equations of the type \eqref{eq:PPE_implies_linear_dependence}. Compactly, they read
	\begin{equation}\label{eq:system}
		M 
		\underbrace{\begin{bmatrix}
				w_{\tau} \\
				\vdots\\
				w_{\tau+k-1}
		\end{bmatrix}}_{\coloneqq \hat{w}_{\tau}}
		=
		\tilde{E}
		\underbrace{\begin{bmatrix}
				\lambda_{\tau}\\
				\vdots\\
				\lambda_{\tau + k-n+1}
		\end{bmatrix}}_{\coloneqq \hat{\lambda}_{\tau}}
		+
		\underbrace{\begin{bmatrix}
				\tilde{W}_{\tau}\\
				\vdots\\
				\tilde{W}_{\tau + k-n+1}
		\end{bmatrix}}_{\coloneqq \tilde{w}_{\tau}},
	\end{equation}
	with $\hw_\tau \in \R^{kd}, \hat{\lambda}_\tau\in \R^{d^\prime(k-n+1)}$, $\tw_\tau \in \R^{nd(k-n+1)}$, and
	\begin{equation}
		\begin{split}
			M&\coloneqq 
			\begin{bmatrix}
				I_d & 0 & 0 & 0 & 0 & \ldots\\
				0 & \ddots & 0 & 0 & 0 & \ldots\\
				0 & 0 & \ddots & 0 & 0 & \ldots\\
				0 & 0 & 0 & I_d & 0 & \ldots\\
				0&I_d & 0 & 0 & 0& \ldots\\
				0&0 & \ddots & 0 & 0& \ldots\\
				0&0  &0& \ddots & 0 & \ldots\\
				0&0 & 0 & 0 & I_d  & \ldots\\
				\ldots&\ldots & \ldots & \ldots & \ldots & \ldots
			\end{bmatrix}\in \R^{nd(k-n+1)\times kd}\\
			\tilde{E}&\coloneqq I_{k-n+1} \otimes E \in \R^{nd(k-n+1)\times d^\prime(k-n+1)}.
		\end{split}
	\end{equation}
	We are interested in characterizing the solutions $(\hat{w}_{\tau}, \hat{\lambda}_{\tau})$ of \eqref{eq:system}, namely, the possible signals which fulfill the constraints which are imposed by PPE. In other words, we want to solve
	\begin{equation}
		\begin{bmatrix}
			M & -\tilde{E}
		\end{bmatrix}
		\begin{bmatrix}
			\hat{w}_{\tau} \\
			\hat{\lambda}_{\tau}
		\end{bmatrix}
		= \tilde{w}_{\tau}.
	\end{equation}
	
	\noindent
	III) \emph{A small enough degree of PPE overconstrains the signal sequence}
	
	Given the right pseudo-inverse 
	$\begin{bmatrix}
		M & -\tilde{E}
	\end{bmatrix}^\dagger$ of $\begin{bmatrix}
		M & -\tilde{E}
	\end{bmatrix}$, any solution $(\hat{w}_{\tau}, \hat{\lambda}_{\tau})$ can be written as
	\begin{equation}
		\begin{bmatrix}
			\hat{w}_{\tau} \\
			\hat{\lambda}_{\tau}
		\end{bmatrix}=
		\begin{bmatrix}
			M & -\tilde{E}
		\end{bmatrix}^\dagger \tilde{w}_\tau+ 
		v_\tau,
	\end{equation}
	where $v_\tau \in \ker\left(\begin{bmatrix} M & -\tilde{E}\end{bmatrix}\right)$. 
	Notice that, since $\begin{bmatrix} M & -\tilde{E}\end{bmatrix}$ has full row rank, by the rank-nullity theorem it holds that
	\begin{equation} 
		\begin{split}
			d^{\prime\prime}&\coloneqq 
			\dim(\ker\left(
			\begin{bmatrix} 
				M & -\tilde{E}
			\end{bmatrix}\right))\\
			&=\dim(\dom\left(\begin{bmatrix} 
				M & -\tilde{E}
			\end{bmatrix}\right))-\dim(\im\left(
			\begin{bmatrix} 
				M & -\tilde{E}
			\end{bmatrix}\right))\\
			&=
			kd + d^\prime(k-n+1) - nd(k-n+1)
			\\
			&=(k-n)(d+d^\prime-nd) + d^\prime,
		\end{split}
	\end{equation}
	from which it is clear that, if $d^\prime \leq d(n-1)$, then $d^{\prime\prime}\leq d^\prime$.
	Furthermore, notice that $|\tilde{w}_{\tau}|\leq \sqrt{nd(k-n+1)}\epsilon$ for all $\tau=\tet, \ldots, \tet+T-k+1$.
	This means that for all $\tau=\tet, \ldots, \tet+T-k+1$ we can write 
	\begin{equation}\label{eq:lower_dimensionality_dt}
		\hat{w}_{\tau} = 
		\begin{bmatrix}
			I_k & 0
		\end{bmatrix}
		G \nu_\tau + 
		\begin{bmatrix}
			I_k & 0
		\end{bmatrix}
		\begin{bmatrix}
			M & -\tilde{E}
		\end{bmatrix}^\dagger 
		\tilde{w}_\tau,
	\end{equation}
	with $G$ any surjective map
	\begin{equation}
		G: \R^{d^{\prime\prime}} \to \ker\left(
		\begin{bmatrix} 
			M & -\tilde{E}
		\end{bmatrix}\right),
	\end{equation}
	and $\nu_\tau \in \R^{d^{\prime\prime}}$ such that $v_\tau = G\nu_\tau$.
	Since $\nu_\tau\in \R^{d^{\prime\prime}}$, it can span in time at most $d^{\prime\prime}$ directions, and the same holds for $\begin{bmatrix}I_k & 0\end{bmatrix}G\nu_\tau$. Being $\tw_\tau$ arbitrarily small in the arbitrarily long interval $\tau = \tet, \ldots, \tet+T-k+1$, we conclude that $\QQ^k(\ww)=(q^{k-1}\ww, \ldots, q^0\ww)$ is PPE of degree at most $d^{\prime\prime}\leq d^\prime$.

	\subsection{Proof of Lemma \ref{lemma:CT_not_disagiato}}\label{proof:CT_not_disagiato}

	Pick any time interval $[t, t+T]$ of arbitrary length $T\geq 0$ and an unitary direction $z \in \R^d$. Being $\ww$ bounded, the quantity
	\begin{equation}
		\bw(t, T, z)\coloneqq \int_{t}^{t +T}|z^\top w(\tau)| \textup{d}\tau,
	\end{equation}  
	is finite for all $t, T>0$ and $z\in \R^d$. Choose an arbitrarily small $\alpha>0$, and define the sets
	\begin{equation}
		\begin{split}
			T_{>\alpha}(t, T, z)&\coloneqq \{\tau\in [t, t +T]: |z^\top w(\tau)|> \alpha \}\\
			T_{\leq\alpha}(t, T, z)&\coloneqq \{\tau\in [t, t +T]: |z^\top w(\tau)|\leq \alpha \}.
		\end{split}
	\end{equation}
	Notice that, by continuity of $\ww$, $T_{>\alpha}(t, T, z)$ is a union of open sets for any $t, T, z$. 
	Since $\|\ww\|_\infty\leq M$ for some $M>0$, then we can find an upper bound for the measure $\mu(T_{\geq\alpha}(t, T, z))$: 
	\begin{equation}
		\begin{split}
			\int_{t}^{t +T} |z^\top w(\tau)| \textup{d}\tau &= \bw(t, T, z) \\
			\int_{T_{> \alpha}} |z^\top w(\tau)| \textup{d}\tau + \int_{T_{\leq \alpha}} |z^\top w(\tau)| \textup{d}\tau  & = \bw(t, T, z)\\
			\int_{T_{> \alpha}} |z^\top w(\tau)| \textup{d}\tau & \leq \bw(t, T, z)
			\\
			\int_{T_{> \alpha}} \alpha \textup{d}\tau & \leq \bw(t, T, z)
			\\
			\alpha\mu(T_{\geq \alpha})&\leq \bw(t, T, z)
			\\
			\mu(T_{> \alpha})&\leq \alpha^{-1}\bw(t, T, z)
		\end{split}
	\end{equation}
	Overall, we obtained that for every $t, T, \alpha>0$, $z\in \R^d$ may partition the interval $[t, t+T]$ such that
	\begin{equation}
		\begin{split}
			|w(\tau)^\top z| &\leq \alpha, \;\;\; \forall \tau \in T_{\leq\alpha}(t, T, z),\\
			|w(\tau)^\top z| & > \alpha, \;\;\; \forall \tau \in T_{>\alpha}(t, T, z),
		\end{split}
	\end{equation}
	with $\mu(T_{>\alpha}(t, T, z))\leq \alpha^{-1}\bw(t, T, z)$.
	We want now to find an upper bound on $|w(\tau)^\top z|$ in the region $T_{>\alpha}(t, T, z)$. 
	
	\noindent
	Since $T_{>\alpha}(t, T, z)$ is a union of open sets of measure $\mu(T_{>\alpha})\leq \alpha^{-1}\bw(t, T, z)$, we consider the case in which it is a 
	unique interval (namely, the case in which $|w(\tau)|$ can grow more). Denote as $\bar{t}\in [t, t+T]$ the last time instant for which $|w(\bar{t})^\top z|\leq \alpha$, and $M\coloneqq \|\dd(\ww)\|_\infty$. 
	Then, we express any $\tau\in T_{> \alpha}(t, T, z)$ as $\tau=\bar{t}+\delta$, with $\delta \leq \alpha^{-1}\bw(t, T, z)$ (since we have shown that $\mu(T_{>\alpha})\leq \alpha^{-1}\bw(t, T, z)$). 
	It holds that
	\begin{equation}\label{eq:nonsohofinitoinomi}
		\begin{split}
			\left|w(\bar{t}+\delta)^\top z\right| &= \left|w(\bar{t})^\top z +\int_t^{t+\delta} \dw(\tau)^\top z\textup{d}\tau \right|\\
			|w(\bar{t}+\delta)^\top z| &\leq \alpha +M\delta\\
			|w(\bar{t}+\delta)^\top z|&\leq \alpha + M\alpha^{-1}\bw(t, T, z).
		\end{split}
	\end{equation}
	
	\noindent
	Choose any $\alpha>0$. Pick $\epsilon(\alpha)\coloneqq\frac{\alpha^2}{M}$. Then, for all $T>0$, since $\ww \notin \Cpe{d}$, there exists $\tet>0, z \in \R^d$ such that 
	\begin{equation}
		\begin{split}
			\bw(\tet, T, z) = \int_{t}^{t +T}|z^\top w(\tau)| \textup{d}\tau &\leq \epsilon.
		\end{split}
	\end{equation}
	Substituting in \eqref{eq:nonsohofinitoinomi}, we have
	\begin{equation}
		\begin{split}
			|w(\tau)^\top z| &\leq \alpha + M\alpha^{-1} \epsilon\\
			&\leq  \alpha + \alpha,
		\end{split}
	\end{equation}
	for all $\tau \in T_{>\alpha}(\tet, T, z)$, and thus for all $\tau \in [\tet, \tet+T]$.
	
	\noindent 
	To recap, we have proved that for any $\alpha, T>0$, there exists $\epsilon>0, z \in \R^d$ and thus $\tet>0$ for which
	$|w(\tau)^\top z|\leq 2\alpha$ for all $\tau \in [\tet, \tet + T]$, and this concludes the proof.

	\subsection{Proof of Lemma \ref{lemma:not_PE_in_CT}}\label{proof:not_PE_in_CT}
	
	We prove the theorem in three steps.
	
	\noindent
	I) \emph{A useful characterization for PPE signals.}
	
	Since $\WW(\ww)\coloneqq\DD^n(\ww)$ is PPE of degree at most $d^\prime \leq d(n-1)$, by Lemma \ref{lemma:CT_not_disagiato}, and following the same steps as in Lemma \ref{lemma:not_PE_in_DT} I), for all $T, \epsilon>0$ we can find $\tet>0$ such that
	\begin{equation}\label{eq:W_not_PE}
		W(\tau) = E \lambda(\tau) + \tilde{W}(\tau),
	\end{equation}
	where $E\in \R^{nd\times d^\prime}$ stacks the directions in which $\DD^n(\ww)$ is PPE, $\lambda(\tau)\in \R^{d^{\prime}}$ stacks the projections of $\DD^n(\ww)$ along these directions, and $|\tilde{W}(\tau)|\leq \epsilon$ for all $\tau\in [\tet, \tet+T]$.

	\noindent
	II) \emph{Sampled signal sequences are constrained by PPE.}
	
	Pick an arbitrary $\bar{T}\leq T$ and $N\in \N$. We define $\delta\coloneqq \bar{T}/N$.
	Notice that, being $\dd(\WW)$ Lipschitz continuous, it holds that
	\begin{equation}\label{eq:W_derivatives_0}
		W(\tau+\delta) = W(\tau) + \delta \dot{W}(\tau) + r(\delta),
	\end{equation}
	where $|r(\delta)|\leq c \frac{\delta^2}{2}$, $c\coloneqq \|\dd^2(\WW)\|_\infty$.
	
	\noindent
	By recalling that $W(\tau)=(w(\tau), \dw(\tau), \ldots, \dw^{(n-1)}(\tau))$, we can rewrite \eqref{eq:W_derivatives_0} as
	\begin{equation}\label{eq:W_derivatives}
		W(\tau+\delta)=(I+\delta S)W(\tau) + \delta e_d \dw^{(n)}(\tau)+ r(\delta),
	\end{equation}
	where $S\in \R^{nd\times nd}, e_d \in \R^{nd\times d}$ are given by
	\begin{equation}
		S\! =\! 
		\begin{bmatrix}
			0_{(n-1)d\times d} & I_{(n-1)d}\\
			0_{d\times d} & 0_{d\times (n-1)d}
		\end{bmatrix},\;\;
		e_d \!=\! 
		\begin{bmatrix}
			0_{d\times d}  & 0_{d\times d}  & \ldots & I_d
		\end{bmatrix}^{\top}.
	\end{equation}
	By substituting \eqref{eq:W_not_PE} into \eqref{eq:W_derivatives}, we obtain that in the interval $\tau \in [\tet, \tet+T-\delta]$ we can write
	\begin{equation}
		\begin{split}
			E\lambda&(\tau+\delta) + \tilde{W}(\tau + \delta) =\\
			&\!= \!(I_{nd}+\delta S)(E\lambda(\tau) + \tilde{W}(\tau))
			+ \delta e_d \dw^{(n)}(\tau)+ r(\delta).\;\;\;\;\;\\
		\end{split}
	\end{equation}

	Considering $N$ consecutive instants $\tau, \tau+\delta, \ldots, \tau \!+\! \bar{T}\!-\!\delta$, similarly to \eqref{eq:system}  we obtain 
	\begin{equation}\label{eq:system_ct}
		M \hw(\tau) = \tilde{E} \hat{\lambda}(\tau) + \tw(\tau),
	\end{equation}
	where $\hw(\tau)\in \R^{Nd}, \hat{\lambda}(\tau)\in\R^{d^\prime(N+1)}, \tw(\tau)\in \R^{ndN}$ are given by
	\begin{equation}\label{eq:various_vectors}
		\begin{split}
			\hw(\tau) &\coloneqq
			\begin{bmatrix}
				\dw^{(n)}(\tau)\\
				\dw^{(n)}(\tau + \delta)\\
				\vdots\\
				\dw^{(n)}(\tau + \bar{T}-\delta)
			\end{bmatrix},
			\hat{\lambda}(\tau)\coloneqq
			\begin{bmatrix}
				\lambda(\tau)\\
				\lambda(\tau + \delta)\\
				\vdots\\
				\lambda(\tau + \bar{T})
			\end{bmatrix},\\
			\tw(\tau)&\coloneqq
			\begin{bmatrix}
				-(I_{nd}+\delta S)\tilde{W}(\tau) + \tilde{W}(\tau+\delta) + r(\delta)\\
				-(I_{nd}+\delta S)\tilde{W}(\tau+\delta) + \tilde{W}(\tau+2\delta)  + r(\delta)\\
				\vdots\\
				-(I_{nd}+\delta S)\tilde{W}(\tau \!+\! \bar{T}-\delta)  +  \tilde{W}(\tau + \bar{T})+ r(\delta)
			\end{bmatrix},
		\end{split}
	\end{equation}
	and
	\begin{equation}
		\begin{split}
			M& \coloneqq \delta I_{N}\otimes e_d \in \R^{ndN\times Nd}\\
			\tilde{E}&\coloneqq 
			\begin{bmatrix}
				[-(I_{nd}+\delta S)E\;\;  E]\;\;\;0\;\;\;\ldots\;\;\; 0 \\
				0 \;\;\; [-(I_{nd}+\delta S)E\;\;  E] \;\;\; 0 \;\;\;\ldots\\
				\vdots\\
				0\;\;\;\ldots \;\;\;0\;\;\;[-(I_{nd}+\delta S)E\;\;  E]  \\
			\end{bmatrix}\in \R^{ndN\times d^\prime(N+1)}.
		\end{split}
	\end{equation}
	We are interested in characterizing the solutions $(\hw(\tau), \hat{\lambda}(\tau))$ of system \eqref{eq:system_ct}, namely, the possible signals which fulfill the constraints which are imposed by PPE. In other words, we want to solve
	\begin{equation}
		\begin{bmatrix}
			M & -\tilde{E}
		\end{bmatrix}
		\begin{bmatrix}
			\hw(\tau) \\
			\hat{\lambda}(\tau)
		\end{bmatrix}
		= \tw(\tau)
	\end{equation}
	in the interval $[\tet, \tet+T]$.

	\noindent
	III) \emph{A small enough degree of PPE overconstrains the sampled signal sequence.}
	
	Given the right pseudo-inverse 
	$\begin{bmatrix}
		M & -\tilde{E}
	\end{bmatrix}^\dagger$ of $\begin{bmatrix}
		M & -\tilde{E}
	\end{bmatrix}$, any solution $(\hw(\tau), \hat{\lambda}(\tau))$ can be written as
	\begin{equation}\label{eq:lambda_sol}
		\begin{bmatrix}
			\hw(\tau) \\
			\hat{\lambda}(\tau)
		\end{bmatrix}=
		\begin{bmatrix}
			M & -\tilde{E}
		\end{bmatrix}^\dagger \tw(\tau)+ 
		v (\tau),
	\end{equation}
	where $v(\tau) \in \ker\left(\begin{bmatrix} M & -\tilde{E}\end{bmatrix}\right)$. 
	Consider the term $\tw(\tau)$: we show at first it can be made arbitrarily small. By recalling the bounds on $r(\delta)$ (see \eqref{eq:W_derivatives_0}) and $\tilde{W}(\tau)$ (see \eqref{eq:W_not_PE}), it holds that
	\begin{equation}\label{eq:bound_for_tw}
		\begin{split}
			|\tw(\tau)|&\leq \left|
			\begin{bmatrix}
				(I_{nd}+\delta S)\tilde{W}(\tau) \\
				(I_{nd}+\delta S)\tilde{W}(\tau+\delta)\\
				\vdots\\
				(I_{nd}+\delta S)\tilde{W}(\tau + \bar{T}-\delta)
			\end{bmatrix}\right| 
			+
			\left|
			\begin{bmatrix}
				\tilde{W}(\tau+\delta) \\
				\tilde{W}(\tau+2\delta)\\
				\vdots\\
				\tilde{W}(\tau +\bar{T})
			\end{bmatrix}\right|
			+ \left|
			\begin{bmatrix}
				r(\delta)\\
				r(\delta)\\
				\vdots\\
				r(\delta)
			\end{bmatrix}\!
			\right|
			\\
			&\leq (1+\delta)\sqrt{N}\epsilon + \sqrt{N}\epsilon + \sqrt{N}c\frac{\delta^2}{2}
			\\
			& \leq \left(2\sqrt{N}  + \frac{\bar{T}}{\sqrt{N}}\right)\epsilon + \sqrt{N} c\frac{\bar{T}^2}{2N^2}.
		\end{split}
	\end{equation}
	We now show we can always find an arbitrarily long interval in which $\tw(\tau)$ is arbitrarily small. 
	Notice that, given any $\epsilon^\prime>0$, by choosing
	\begin{equation}
		N\geq \bar{N}\coloneqq \sqrt[3]{\frac{\bar{T}^4 c^2}{\epsilon^{\prime2}}} \implies \sqrt{N} c\frac{\bar{T}^2}{2N^2}\leq \frac{\epsilon^\prime}{2}.
	\end{equation}
	Exploting this, and choosing $\epsilon=\frac{\epsilon^\prime}{2(2\sqrt{N} + \bar{T}/\sqrt{N})}$, we have from \eqref{eq:bound_for_tw} that for any $\epsilon^\prime>0, \bar{T}>0, T>\bar{T}$ there exists $\bar{N}\in\N$, $\tet>0$ such that, for all $N\geq \bar{N}$,
	\begin{equation}\label{eq:tw_bound}
		|\tw(\tau)| \leq \epsilon^\prime \;\;\;\forall \tau \in [\tet, \tet+T].
	\end{equation}

	\noindent 
	We move now to the term $v(\tau)\in \ker\left(\begin{bmatrix} M & -\tilde{E}\end{bmatrix}\right)$.
	Notice that, since $\begin{bmatrix} M & -\tilde{E}\end{bmatrix}$ has full row rank, by the rank-nullity theorem it holds
	\begin{equation} 
		\begin{split}
			d^{\prime\prime}&\coloneqq 
			\dim(\ker\left(
			\begin{bmatrix} 
				M & -\tilde{E}
			\end{bmatrix}\right))\\
			&=\dim(\dom\left(\begin{bmatrix} 
				M & -\tilde{E}
			\end{bmatrix}\right))-\dim(\im\left(
			\begin{bmatrix} 
				M & -\tilde{E}
			\end{bmatrix}\right))\\
			&=
			Nd + d^\prime(N+1)-ndN
			\\
			&=N(d+d^\prime-nd) + d^\prime,
		\end{split}
	\end{equation}
	from which it is clear that if $d^\prime \leq d(n-1)$ then $d^{\prime\prime}\leq d^\prime$.
	This means that we can write the solution $\hat{\lambda}(\tau)$ of \eqref{eq:lambda_sol} as
	\begin{equation}\label{eq:lower_dimensionality_ct}
		\hat{\lambda}(\tau) = \begin{bmatrix}
			0 & I_N
		\end{bmatrix}
		G  \nu(\tau) + 
		\begin{bmatrix}
			0 & I_N
		\end{bmatrix}
		\begin{bmatrix}
			M & -\tilde{E}
		\end{bmatrix}^\dagger \tw(\tau),
	\end{equation}
	with $G$ any surjective map
	\begin{equation}
		G: \R^{d^{\prime\prime}} \to \ker\left(
		\begin{bmatrix} 
			M & -\tilde{E}
		\end{bmatrix}\right),
	\end{equation}
	and $\nu(\tau) \in \R^{d^{\prime\prime}}$ such that $v(\tau) = G\nu(\tau)$.
	Using \eqref{eq:W_not_PE} and \eqref{eq:lower_dimensionality_ct}, and recalling the definition of $\hat{\lambda}(\tau)$ in \eqref{eq:various_vectors}, we can reconstruct the vector in which we are interested in:
	\begin{equation}
		\begin{split}
			\begin{bmatrix}
				W(\tau)\\
				W(\tau + \frac{\bar{T}}{N})\\
				\vdots\\
				W(\tau + \bar{T})
			\end{bmatrix}=&
			G^\prime \nu(\tau) + F \tw(\tau) +
			\begin{bmatrix}
				\tilde{W}(\tau)\\
				\tilde{W}(\tau + \frac{\bar{T}}{N})\\
				\vdots\\
				\tilde{W}(\tau + \bar{T})
			\end{bmatrix},
		\end{split}
	\end{equation}
	where 
	\begin{equation}
		\begin{split}
			G^\prime &= (I_N \otimes E) 
			\begin{bmatrix}
				0 & I_N
			\end{bmatrix} G\\
			F &= (I_N \otimes E) 
			\begin{bmatrix}
				0 & I_N
			\end{bmatrix}
			\begin{bmatrix}
				M & -E
			\end{bmatrix}^\dagger.
		\end{split}
	\end{equation}
	
	The proof is complete by recalling that the choices of $N, \epsilon$ are such that $|\tw(\tau)|\leq \epsilon^\prime$ (in \eqref{eq:tw_bound}) and $|(\tilde{W}(\tau), \ldots, \tilde{W}(\tau+\bar{T}))|\leq \epsilon^\prime/2$ for all $\tau \in [\tet, \tet+T]$.

	\subsection{Proof of Lemma \ref{lemma:PE_of_derivatives}}\label{proof:PE_of_derivatives}
	
	By Lemma \ref{lemma:CT_not_disagiato}, if $\ww\notin \Cpe{d}$ then for all $T, \epsilon >0$ we can find $\tet>0$, $z\in \R^d$ such that
	\begin{equation}
		|w(\tau)^\top z|\leq \epsilon, \;\;\; \forall \tau\in [\tet, \tet+T].
	\end{equation}
	Picking any $\tau, \delta>0$ such that $\tau, \tau + \delta \in [\tet, \tet+T]$, we have that
	\begin{equation}\label{eq:signal_differences}
		|(w(\tau+\delta)- w(\tau))^\top  z| \leq |w(\tau+\delta)^\top z| +  |w(\tau)^\top z| \leq 2\epsilon
	\end{equation}
	Expanding $w(\tau+\delta)$ in Taylor series, we obtain that
	\begin{equation}\label{eq:taylor_on_signal}
		w(\tau+\delta)-w(\tau) = \dw(\tau) \delta  + o(\delta),
	\end{equation}
	where $|o(\delta)|\leq M\frac{\delta^2}{2}$, with $M\coloneqq\|\dd^2(\ww)\|_\infty$.
	Using \eqref{eq:taylor_on_signal} in \eqref{eq:signal_differences}, we obtain
	\begin{equation}
		\begin{split}
			|(w(\tau+\delta)- w(\tau))^\top z| &\leq 2\epsilon\\
			|(\dw(\tau) \delta  + o(\delta))^\top z| &\leq 2\epsilon.
		\end{split}
	\end{equation}
	Since, by the triangle inequality, $|(\dw(\tau) \delta  + o(\delta))^\top z|\geq |\dw(\tau)^{\top}z \delta |- |o(\delta)^\top z|$, we have that
	\begin{equation}
		\begin{split}
			|\delta \dw(\tau)^\top z| - |o(\delta)^\top z| &\leq 2\epsilon\\
			|\delta \dw(\tau)^\top z | &\leq 2\epsilon + |o(\delta)|\\
			|\dw(\tau)^\top z| & \leq 2\frac{\epsilon}{\delta} + M\frac{\delta}{2}.
		\end{split}
	\end{equation}

	Choosing a sufficiently small $\epsilon>0$ and $\delta(\epsilon)\coloneqq \sqrt{\epsilon}$, for all $\tau\in [\tet, \tet+T-\delta(\epsilon)]$ we have
	\begin{equation}
		|\dw(\tau)^\top z| \leq \left(2+\frac{M}{2}\right)\sqrt{\epsilon}.
	\end{equation}
	By defining $\gamma(\epsilon) \coloneqq \max(\sqrt{\epsilon},(2+\frac{M}{2})\sqrt{\epsilon})$, we obtain
	\begin{align}\label{eq:bound_on_both_w_and_dw}
		|w(\tau)^\top z|\leq \gamma(\epsilon) &&& 	|\dw(\tau)^\top z| \leq \gamma(\epsilon)
	\end{align}
	for all $\tau\in [\tet, \tet+T-\delta(\epsilon)]$. 
	
	\noindent 
	Since $\gamma(\epsilon)$ is a strictly increasing function of $\epsilon$ such that $\gamma(0)=0$, we may pick an arbitrarily small $\epsilon^\prime$ and find $\epsilon: \epsilon^\prime=\gamma(\epsilon)$;
	consequently, since $\tet$ exists for any choice of $\epsilon, T>0$, for any choice of $\epsilon^\prime$ there exists an interval for which \eqref{eq:bound_on_both_w_and_dw} holds.

	\subsection{Proof of Theorem \ref{thm:nec_mi_dt}}\label{proof:nec_mi_dt}
	
	We prove this result by contraposition, i.e., we show that given $\Sigma \in \sctrbx$, if $\QQ^n(\uu)\in \D{nm}$ is PPE of degree at most $n^\prime \leq n-1$, then for all $x_0\in \R^n$, $\xx=\Sigma(\uu, x_0)\notin \Dpe{n}$ regardless of the initial condition.
	
	\noindent
	I) \emph{$x_\tau$ can be approximated arbitrarily well by a linear function of past inputs.}
	
	Let $x_0=0$. 
	We can write system dynamics as
	\begin{equation}\label{eq:dynamic_for_subsystem_dt_3}
		\begin{split}
			x_{\tau+1}  &= A^{n}x_{\tau-n+1} + RU_\tau,
		\end{split}
	\end{equation}
	where $R$ is the reachability matrix and $U_\tau=(u_{\tau}, \ldots, u_{\tau-n+1})\in \R^{nm}$. By writing $x_{\tau-n+1}$ as a function of the previous inputs, and repeating this recursion for arbitrary $K$ steps, we obtain
	\begin{equation}\label{eq:x_as_fcn_of_U_1_mi}
		\begin{split}
			x_{\tau+1} =& A^{Kn}x_{\tau-Kn+1}\\
			&\!+\!
			\underbrace{
				\begin{bmatrix}
					I & A^n & \ldots & A^{Kn}
				\end{bmatrix}
			}_{\eqqcolon \bar{A}}\!
			\underbrace{
				(I_K \otimes R)
			}_{\eqqcolon \bar{R}}\!
			\begin{bmatrix}
				U_\tau \\
				U_{\tau-n}\\
				\vdots\\
				U_{\tau-(K-1)n}
			\end{bmatrix}.\!\!\!
		\end{split}
	\end{equation}
	Notice that for any $\epsilon>0$, we can choose $K>0$ such that $|A^{Kn} x_{\tau-Kn+1}|\leq \epsilon$ for all $\tau\in \N$, being the signal $\xx$ bounded.
	Notice the vector $(U_\tau, \ldots, U_{\tau-(K-1)n})$ is given by the signal $\QQ^{(K-1)n}(\uu)$ evaluated at time $\tau$.

	\noindent
	II) \emph{The degree of PPE of $\QQ^{(K-1)n}(\uu)$ is limited by the degree of PPE of $\QQ^n(\uu)$.}
	
	If $\QQ^n(\uu)$ is PPE of degree at most $n^\prime \leq n-1$, then $n^\prime \leq m(n-1)$ and we can apply Lemma \ref{lemma:not_PE_in_DT} to ensure that for any $K\geq 1$, $\QQ^{(K-1)n}(\uu)$ is PPE of degree at most $n^\prime \leq n-1$.
	
	\noindent
	III) \emph{If the spanned directions are $n-1$, $\xx\notin \Cpe{n}$.}
	
	Leveraging the lack of PPE demonstrated in the above point, for all $T, K, \epsilon >0$ there exists $\tet>0$ for which we can rewrite \eqref{eq:x_as_fcn_of_U_1_mi} as (see the derivation in Lemma \ref{lemma:not_PE_in_DT})
	\begin{equation}\label{eq:x_as_fcn_of_U_2_mi}
		x_\tau = A^{Kn}x_{\tau-Kn} + \bar{A} \bar{R} E \lambda_t + \bar{A} \bar{R}\tilde{\lambda}_t, 
	\end{equation}
	for all $\tau = \tet, \ldots, \tet + T$, for some $E \in \R^{(K-1)nm\times (n-1)}$ stacking the directions in which there is PPE, $\lambda_t \in \R^{n-1}$ stacking the projections of $(U_t, \ldots, U_{t-(K-1)n})$ along these directions, $\tilde{\lambda}_t\in \R^{(K-1)nm}$ such that $|\tilde{\lambda}_{t_{\epsilon, K}}|\leq \epsilon$. 
	
	Since $\bar{A}\bar{R}E \in \R^{n\times (n-1)}$, there exists $z \in \R^{n}$ such that $z^\top \bar{A}\bar{R}E=0$, which implies that
	\begin{equation}
		z^\top x_\tau = z^\top A^{Kn}x_{\tau-Kn} +z^\top \bar{A} \bar{R}\tilde{\lambda}_t.
	\end{equation}
	Being both $|A^{Kn}x_{\tau-Kn}|$ and $\tilde{\lambda}$ arbitrarily small in the arbitrarily long interval $\tau=\tet+n-1, \tet+T$, we conclude $\xx=\Sigma(\uu, 0)\notin \Dpe{n}$ and thus $\uu \notin \Csr(\Sigma, 0)$. To conclude the proof, since $\Sigma$ is a stable linear system, we have that all solutions $\xx=\Sigma(\uu, x_0)$ converge exponentially to those initialized in $x_0=0$. Since a vanishing term cannot guarantee PE of a signal, we can conclude that for all $x_0\in \R^n$, if $\QQ^{n}(\uu)$ is PPE of degree at most $n^\prime \leq n-1$, then $\xx=\Sigma(\uu, x_0) \notin \Dpe{n}$.

	We pass now to the second statement of the theorem. Since it has an analogous proof, we sketch only the main differences with respect to the proof given before. 
	We want to prove that, given $\Sigma \in \sctrbxu$, if $\QQ^{n+1}(\uu)\in \D{nm}$ is PPE of degree $n^\prime \leq n+m-1$, then for all $x_0\in \R^n$, $\xx=\Sigma(\uu, x_0)\notin \Dpe{n}$ regardless of the intitial condition.
	
	\noindent
	I) \emph{$(x_\tau, u_\tau)$ can be approximated arbitrarily well by a linear function of the previous inputs.}
	
	Similarly as done in \eqref{eq:x_as_fcn_of_U_1_mi}, we obtain
	\begin{equation}\label{eq:xu_as_fcn_of_U_mi}
		\begin{bmatrix}
			u_{\tau+1}\\
			x_{\tau+1}
		\end{bmatrix}
		=\begin{bmatrix}
			0 \\
			A^{Kn}x_{\tau-Kn+1}
		\end{bmatrix}
		+
		\begin{bmatrix}
			I_m & 0\\
			0 & \bar{A} \bar{R}
		\end{bmatrix}
		\begin{bmatrix}
			u_{\tau+1}\\
			U_\tau\\
			\vdots\\
			U_{\tau-(K-1)n}\\
		\end{bmatrix}\!\!,
	\end{equation}
	where $\bar{A}, \bar{R}$ are the same as in \eqref{eq:xu_as_fcn_of_U_mi}, and by choosing an appropriate $K\in \N$, for any $\epsilon>0$ we achieve $|A^{Kn}x_{\tau-Kn+1}|\leq \epsilon$, since $A$ is Shur and $\xx$ is bounded.
	Notice that the vector $(u_{\tau+1}, U_\tau, \ldots, U_{\tau-(K-1)n})$ is given  by the signal $\QQ^{(K-1)n+1}(\uu)$ evaluated at time $\tau+1$.

	\noindent
	II) \emph{The degree of PPE of $\QQ^{(K-1)n+1}(\uu)$ is limited by the degree of PPE of $\QQ^{n+1}(\uu)$.}
	
	Applying Lemma \ref{lemma:not_PE_in_DT}, if $\QQ^{n+1}(\uu)$ is PPE of degree at most $n^\prime \leq n+m-1$, then, since $n+m-1 \leq m(n+1-1)=nm$ for all $n, m\in \N$, $\QQ^{(K-1)n+1}(\uu)$ is PPE of degree at most $n^\prime$.
	
	\noindent
	III) \emph{If the spanned directions are $n\!+\!m\!-\!1$, then $(\xx, \uu)\notin \Dpe{n+m}$.}
	
	This point proceeds exactly as for the previous case, so we omit it.

	\subsection{Proof of Theorem \ref{thm:suff_mi_dt}}\label{proof:suff_mi_dt}
	
	We prove that $\QQ^n(\uu)\in \Dpe{nm} \implies \Sigma(\uu, x_0)\in \Dpe{n}$ for $\Sigma \in \sctrbx$ and for all $x_0\in \R^n$ by contraposition, i.e., we show that for all $x_0\in \R^n$, $\Sigma(\uu, x_0)\notin \Dpe{n} \implies \QQ^n(\uu)\notin \Dpe{nm}$.
	We prove this in four points.
	
	\noindent
	I) \emph{The lack of PE of $\xx$ constrains the system input.}
	
	If $\xx\coloneqq\Sigma(\uu, x_0)\notin \Dpe{n}$, applying Definition \ref{def:PE_DT} we have that for all $T, \epsilon>0$ we can find a direction $z\in \R^n$ and $\tet\in\N$ such that
	\begin{equation}\label{eq:small_x}
		|z^\top x_\tau|\leq \epsilon
	\end{equation}
	for all $\tau =\tet,\ldots, \tet+T$.
	Along direction $z$, system dynamics read
	\begin{equation}
		z^\top x_{\tau+1} = z^\top Ax_\tau + z^\top Bu_\tau.
	\end{equation}
	If $z^\top B \neq 0$, we can find the input $u_\tau$ as
	\begin{equation}\label{eq:closed_loop_u_mi}
		u_\tau = \underbrace{-(z^\top B)^\dagger z^\top A}_{\coloneqq K}x_\tau + \underbrace{(z^\top B)^\dagger z^\top x_{\tau+1}}_{\coloneqq \tu_\tau} + v_\tau,
	\end{equation}
	where $|\tu_\tau|\leq |(z^\top B)^\dagger|\epsilon$, and $v_\tau \in \ker(z^\top B)$.
	
	\noindent
	II) \emph{The closed-loop dynamics depend only on $x_\tau, v_\tau$.}
	
	The dynamics become
	\begin{equation}\label{eq:closed_loop_dynamics_mi}
		\begin{split}
			u_\tau &=Kx_\tau +  \tu_\tau + v_\tau,\\
			x_{\tau+1}&=\underbrace{(A - BK)}_{\coloneqq \tilde{A}}x_\tau + B\tu_\tau + Bv_\tau
		\end{split}
	\end{equation}		
	where, for all $\tau = \tet, \ldots, \tet+T-1$, it holds that $|B\tu_\tau|\leq \epsilon$ and $v_\tau \in \ker(z^\top B)$.
	Consider the signal $\UU(\uu)\coloneqq\QQ^n(\uu)$. Given \eqref{eq:closed_loop_u_mi}, we obtain
	\begin{equation}\label{eq:U_expression}
		U_\tau =
		(I_n \otimes K )
		\begin{bmatrix}
			x_{\tau-n+1}\\
			\vdots\\
			x_{\tau}
		\end{bmatrix}
		+
		\underbrace{\begin{bmatrix}
				\tu_{\tau-n+1}\\
				\vdots\\
				\tu_{\tau}
		\end{bmatrix}}_{\coloneqq \tU_\tau}
		+
		\underbrace{\begin{bmatrix}
				v_{\tau-n+1}\\
				\vdots\\
				v_{\tau}
		\end{bmatrix}}_{\coloneqq V_\tau}.
	\end{equation}
	Using \eqref{eq:closed_loop_dynamics_mi}, it holds that
	\begin{equation}\label{eq:X_expr}
		\begin{bmatrix}
			x_{\tau-n+1}\\
			\vdots\\
			x_{\tau}
		\end{bmatrix}\!=\!
		Fx_{\tau-n+1}
		+ G\left(
		\tU_\tau + V_\tau
		\right),
	\end{equation}
	where 
	\begin{equation}
		\small
		F=
		\begin{bmatrix}
			I\\
			\vdots\\
			\tilde{A}^{n-1}
		\end{bmatrix},\;\;\;
		G = 
		\begin{bmatrix}
			0 & \ldots & \ldots & 0\\
			B & 0 & \ldots & 0\\
			\vdots& \vdots& \vdots & \vdots\\
			A^{n-2}B & \ldots & B & 0
		\end{bmatrix},
	\end{equation}
	and substituting \eqref{eq:X_expr} in \eqref{eq:U_expression}, we obtain
	\begin{equation}
		\begin{split}
			U_\tau &= \underbrace{(I_n \otimes K)F}_{\coloneqq F^\prime}x_{\tau-n+1} + \underbrace{((I_n \otimes K)G+I)}_{G^\prime}
			\left(
			\tU_\tau\!+\!\!
			{V}_\tau
			\right),\\
			&=
			\begin{bmatrix}
				F^\prime & G^\prime
			\end{bmatrix}
			\begin{bmatrix}
				x_{\tau-n+1} \\ 
				V_\tau
			\end{bmatrix}
			+G^\prime\tU_\tau.
		\end{split}
	\end{equation}

	\noindent
	III) \emph{Lack of PE in $\xx$ implies $\QQ^{n}(\uu)$ is not PE.} 
	
	Notice that in the period $\tau=\tet+n-1, \ldots, \tet+T$, it holds that
	\begin{enumerate}
		\item[i)] $v_\tau \in \ker(z^\top B)$, with $\dim(\ker(z^\top B))\leq m-1$, by construction of $v_\tau$.
		\item[ii)] $|z^\top x_\tau|\leq \epsilon$ by lack of PE of $\xx$. 
		\item[iii)] $|\tU_\tau|\leq \sqrt{n}|(z^\top B)^\dagger|\epsilon$ from \eqref{eq:closed_loop_u_mi}.
	\end{enumerate}
	Holding i) and ii), the space persistently spanned by $(x_\tau, V_\tau)$ is at most $(n-1)+n(m-1)=nm-1$ dimensional. In other words, by following the same procedure as in Lemma \ref{lemma:not_PE_in_DT}, we can write for all $\tau=\tet+n-1, \tet+T$
	\begin{equation}
		\begin{bmatrix}
			x_{\tau-n+1} \\ 
			V_\tau
		\end{bmatrix}=
		E \lambda_\tau + \tilde{\lambda}_\tau,
	\end{equation}
	where $E \in \R^{(nm)\times (nm-1)}$ stacks the directions which are persistently spanned by $(x_\tau, V_\tau)$, $\lambda_\tau \in \R^{nm-1}$ stacks the projections of $(x_\tau, V_\tau)$ along these directions, and $\tilde{\lambda}_\tau\in \R^{nm}, |\tilde{\lambda}_\tau|\leq \epsilon$ is an arbitrarily small perturbation.
	Finally, we have
	\begin{equation}
		U_\tau = 
		\underbrace{\begin{bmatrix}
				F^\prime & G^\prime
			\end{bmatrix} E}_{\coloneqq H}\lambda_\tau + 
		\begin{bmatrix}
			F^\prime & G^\prime
		\end{bmatrix} \tilde{\lambda}_\tau
		+G^\prime\tU_\tau,
	\end{equation}
	and since $\begin{bmatrix}
		F^\prime & G^\prime
	\end{bmatrix} E \in \R^{nm\times (nm-1)}$, there exists $z \in \R^{nm}$ such that $z^\top \begin{bmatrix}
		F^\prime & G^\prime
	\end{bmatrix} E=0$, which implies
	\begin{equation}
		z^\top U_\tau = z^\top
		\begin{bmatrix}
			F^\prime & G^\prime
		\end{bmatrix} \tilde{\lambda}_\tau
		+z^\top G^\prime\tU_\tau.
	\end{equation}
	Being both $\tU_\tau$ and $\tilde{\lambda}$ arbitrarily small in the arbitrarily long interval $\tau=\tet+n-1, \ldots, \tet+T$, we conclude $\QQ^n(\uu)\notin \Dpe{nm}$.
	
	\noindent
	IV) \emph{The case of $z^\top B=0$.}
	
	Consider the case where $z^\top B = 0$, namely, the columns $b_1, \ldots, b_m$ of $B$ satisfy $b_1, \ldots, b_m\in \ker(z^\top)$. In that case, along direction $z$, the system dynamics read as
	\begin{equation}\label{eq:constrained_dynamics_mi}
		\begin{split}
			z^\top x_{\tau+1} &= z^\top Ax_\tau + z^\top Bu_\tau\\
			&= z^\top Ax_\tau.
		\end{split}
	\end{equation}
	We can distinguish two cases: either $\ker(z^\top)=\ker(z^\top A)$ or $\ker(z^\top)\neq \ker(z^\top A)$.
	If $\ker(z^\top)=\ker(z^\top A)$, then $\ker(z^\top)$ must be an invariant subspace of $A$ of dimension $n-1$.
	Since $\ker(z^\top)$ is $A$-invariant, and since $b_1, \ldots, b_m\in \ker(z^\top)$, there are at most $n-1$ linearly independent vectors between the columns of $B, \ldots, A^{n-1}B$. This is a contradiction, since we assumed $(A, B)$ controllable, thus, it cannot hold $\ker(z^\top)=\ker(z^\top A)$.
	
	\noindent
	At last, consider the case $\ker(z^\top)\neq \ker(z^\top A)$. 
	Since \eqref{eq:constrained_dynamics_mi} holds for all $\tau = \tet+1, \dots, \tet+T-1$, we can write 
	\begin{equation}
		\begin{split}
			z^\top x_\tau &\leq \epsilon\\
			z^\top x_{\tau + 1}  = z^\top Ax_\tau & \leq \epsilon,
		\end{split}
	\end{equation}
	from which we deduce that we can write $x_\tau = \bx_{\tau}+ \tx_\tau$, with $\bx_\tau \in \ker(z^\top)\cap \ker(z^\top A)$ and $|\tx_{\tau}|\leq \epsilon$. 
	Since $\ker(z^\top)\neq \ker(z^\top A)$ and since the dimension of each kernel is at most $n-1$, then $\dim(\ker(z^\top)\cap \ker(z^\top A))\leq n-2$.
	
	\noindent
	This means that $\bx_\tau$ spans persistently at most $n-2$ directions, namely, there exists an unitary $z_2\in \R^{n}, z_2\perp z \perp (\ker(z^\top)\cap \ker(z^\top A))$, such that 
	\begin{equation}
		|y_2^\top x_\tau | = |y_2^\top \bx_\tau + y_2^\top \tx_\tau| =|y_2^\top \tx_\tau| \leq \epsilon.
	\end{equation}
	We can thus repeat the same procedure as before, checking if $z_2^\top B \neq 0$. If $z_2^\top B \neq 0$, we can repeat the reasoning in points I), II), III); otherwise, we can repeat the above reasoning to find another direction $z_3$ such that $z_3^\top x_\tau \leq \epsilon$ (and repeat the process until we find some $z_i^\top B\neq0$, which must exist since $B \neq 0$.
	In each of these cases, $\QQ^n(\uu)\notin \Dpe{nm}$, which concludes the proof for the first statement of the theorem.
	
	We move to the second statement of the theorem. Since it has an analogous proof, we sketch only the main differences with respect to the proof given before. We want to prove by contraposition that $\QQ^n(\uu)\in \Dpe{(n+1)m} \implies \Sigma(\uu, x_0)\in\Dpe{n+m}$ for $\Sigma \in \sctrbxu$ and for all $x_0\in\R^n$ by showing that for all $x_0\in \R^n$, $\Sigma(\uu, x_0)\notin \Dpe{n+m}\implies \QQ^{n+1}(\uu)\notin \Dpe{(n+1)m}$.

	\noindent
	I) \emph{The lack of PE of $(\xx, \uu)$ constrains the system input.}
	
	If $(\xx, \uu)\notin \Dpe{n+m}$ we have that for all $T, \epsilon>0$ we can find $z=(z_x, z_u)\in \R^{n+m}$, $\tet \in \N$ such that
	\begin{equation}
		z_x^\top x_\tau = -z_u^\top u_\tau + \chi_\tau,
	\end{equation}
	where $\chi_\tau \in \R, |\chi_\tau|\leq \epsilon$ for all $\tau = \tet, \ldots, \tet+T$. By pre-multiplying by $z_x^\top$ the systems dynamics, we obtain the update
	\begin{equation}\label{eq:next_u}
		u_{\tau+1} = -(z_u^\top)^\dagger z_x^\top \left( Ax_\tau + Bu_\tau \right) +(z_u^\top)^\dagger \chi_{\tau+1} + v_{\tau+1},
	\end{equation}
	where $v_{\tau+1}\in \ker(z_u^\top)$.

	\noindent
	II) \emph{The closed-loop dynamics depends only on $x_\tau, v_\tau, u_\tau$.}
	
	Since we can write $x_{\tau+1}=Ax_\tau + Bu_\tau$, in the interval $\tau=\tet, \ldots, \tet+T$ we can use \eqref{eq:next_u} to express each $u_\tau, u_{\tau+1}, \ldots, u_{\tau + n}$ as a linear function of $x_\tau, u_\tau, v_{\tau +1}, \ldots, v_{\tau + n}$ (similarly as done in \eqref{eq:U_expression}) plus an arbitrarily small quantity $\tU_\tau$. In other words, we have
	\begin{equation}\label{eq:U_not_spanning}
		\begin{bmatrix}
			u_\tau\\
			\vdots\\
			u_{\tau + n}
		\end{bmatrix}
		=
		K
		\begin{bmatrix}
			u_\tau\\
			x_\tau\\
			v_{\tau + 1}\\
			\vdots\\
			v_{\tau + n}
		\end{bmatrix}
		+ \tU_\tau.
	\end{equation}

	\noindent
	III) \emph{A lack of PE in $(\xx, \uu)$ implies $\QQ^{n+1}(\uu)$ is not PE.}
	
	Since $(\xx, \uu)$ is not PE, it spans at most $n+m-1$ directions. Since $v_\tau \in \ker(z_u^\top)$, the vector $(v_{\tau+1}, \ldots, v_{\tau+n})$ spans at most $(m-1)n$ directions. Overall, we have that the right-hand side of \eqref{eq:U_not_spanning} spans persistently only $n+m-1 + (m-1)n= (n+1)m-1$ directions, which means that the signal $(u_\tau, \ldots, u_{\tau+n})\in \R^{(n+1)m}$ on the left-hand side of \eqref{eq:U_not_spanning} spans persistently only $(n+1)m-1$ directions, namely, $\QQ^{n+1}(\uu)\notin \Dpe{(n+1)m}$.

	\subsection{Proof of Theorem \ref{thm:nec_mi_ct}}\label{proof:nec_mi_ct}
	
	Since the proof is analogous to the one for discrete-time systems, we only highlight where they differ.
	We show that given $\Sigma \in \hctrbx$, if $\DD^n(\uu)\in \C{nm}$ is PPE of degree at most $n^\prime \leq n-1$, then for all $x(0)\in \R^n$, $\xx=\Sigma(\uu, x_0)\notin \Cpe{n}$ regardless of the initial condition. 
	
	\noindent
	I) \emph{$x(t)$ as a linear function of previous inputs.}
	
	Consider $x_0=0$.
	For all $t$, we can write
	\begin{equation}\label{eq:linsys_evolution}
		\begin{split}
			x(t+\delta)  &= e^{A\delta}x(t) + \int_{t}^{t+\delta} e^{A(t+\delta-\tau)}Bu(\tau)\textup{d}\tau.
		\end{split}
	\end{equation}
	By writing the Taylor expansion for $x(t+\delta)$, it holds also
	\begin{equation}\label{eq:taylor_approx}
		\begin{split}
			x(t+\delta) =& x(t) + \dot{x}(t)\delta + \frac{\ddot{x}}{2!}(t)\delta^2 + \ldots\\
			=&e^{A\delta}x(t) + R U(t) + r(t),
		\end{split}
	\end{equation}
	where
	\begin{equation}
		\begin{split}
			R &\coloneqq
			\begin{bmatrix}
				B\delta & \frac{AB}{2}\delta^2 & \ldots & \frac{A^{n-1}B}{n!}\delta^n
			\end{bmatrix},\\
			U(t)&\!\coloneqq \!(u(t), \ldots, \du^{(n-1)}(t)), \; r(t)\!\coloneqq\! \sum_{i=n}^{\infty}o(\delta^i)\du^{(i)}(t).\;\;\;\;\;
		\end{split}
	\end{equation}
	Notice that, given $M\coloneqq \|\dd^n(\xx)\|_\infty$, the remainder $r(t)$ can be bounded by $|r(t)|\leq M\delta^{n}/(n!)$.
	Confronting \eqref{eq:linsys_evolution} and \eqref{eq:taylor_approx}, we obtain
	\begin{equation}\label{eq:integral_approx}
		\int_{t}^{t+\delta} e^{A(t+\delta-\tau)}Bu(\tau)\textup{d}\tau = R U(t) + r(t)
	\end{equation}
	Given any $\bar{T}>0$ (to be chosen later), we can write $x(t+\bar{T})$ as
	\begin{equation}\label{eq:x_as_fcn_of_old_u}
		x(t+\bar{T}) = e^{A\bar{T}}x(t)+\int_t^{t+\bar{T}}e^{A(t+ \bar{T} -\tau)}Bu(\tau)\textup{d}\tau.
	\end{equation}
	Now, let $N \in \N$ (to be chosen later) and $\delta = \bar{T}/N$. Recalling \eqref{eq:integral_approx} to approximate the integral in \eqref{eq:x_as_fcn_of_old_u}, we obtain
	\begin{equation}\label{eq:x_as_fcn_of_U}
		\begin{split}
			x(t&+N\delta) = e^{AN\delta }x(t)+\!\int_{t}^{t+N\delta} e^{A(t+N\delta-\tau)}Bu(\tau)\textup{d}\tau \\
			=&e^{AN\delta }x(t)+
			\!\sum_{i=0}^{N-1}\int_{t+i\delta}^{t+(i+1)\delta} e^{A(t+N\delta-\tau)}Bu(\tau)\textup{d}\tau \\
			=&e^{AN\delta }x(t)+ \sum_{i=0}^{N-1} r(t+i\delta)+\\
			&
			\underbrace{
				\begin{bmatrix}
					e^0 & e^{A\delta} & \ldots & e^{A(N-1)\delta} 
				\end{bmatrix}
			}_{\coloneqq \bar{A}}
			\underbrace{
				(I_N \otimes R)
			}_{\coloneqq \bar{R}}
			\begin{bmatrix}
				U (t+(N-1)\delta) \\
				U (t+(N-2)\delta)\\
				\vdots\\
				U(t)
			\end{bmatrix}.
		\end{split}
	\end{equation}
	
	\noindent
	II) \emph{$x(t)$ can be approximated arbitrarily well by a linear function of the previous inputs.}
	
	We show now the terms $e^{AN\delta }x(t)$ and $\sum_{i=0}^{N-1} r(t+i\delta)$ can be made arbitrarily small. 
	Pick any $\epsilon>0$. Since $e^A$ is Schur and $\xx$ is bounded, there exists a sufficiently large $\bar{T}=N\delta>0$ such that $|e^{A\bar{T}}x(t)|\leq \epsilon$ for all $t \in \R_{\geq 0}$. 
	Next, we choose $N$. It holds
	\begin{equation}
		\left|\sum_{i=0}^{N-1} r(t+i\delta)\right|\leq N \frac{M\delta^n}{n!}=\frac{M\bar{T}^n}{n! N^{n-1}},
	\end{equation} 
	so, given any $\epsilon$ and picking 
	\begin{equation}
		N\geq \sqrt[n-1]{\frac{M\bar{T}^n}{n! \epsilon}},
	\end{equation}
	we obtain $|\sum_{i=0}^{N-1} r(t+i\delta)|\leq \epsilon$.
	
	\noindent
	III) \emph{If the input derivatives are not PPE of degree $n$, $\xx\notin \Cpe{n}$.}
	
	At last, applying Lemma \ref{lemma:not_PE_in_CT}, if $\DD^n(\uu)\notin \Cppe{nm}{n}$, we have that for any $T, \bar{T}, \epsilon>0$ there exists $\bar{N}, \bar{t}>0$ such that, if $N\geq \max(\bar{N}, \sqrt[n-1]{\frac{M\bar{T}^n}{n! \epsilon}})$, we can rewrite \eqref{eq:x_as_fcn_of_U} as
	\begin{equation}
		\begin{split}
			x(\tau+N\delta) =& e^{AN\delta }x(t)+ \sum_{i=0}^{N-1} r(t+i\delta)\\
			& +\bar{A}\bar{R}G\lambda(\tau) + \bar{A}\bar{R}\tilde{X}(\tau),
		\end{split}
	\end{equation}
	where $\lambda(\tau)\in \R^{n-1}$, $|\tilde{X}(\tau)|\leq \epsilon$ for all $\tau \in [\bar{t}, \bar{t}+T]$.
	We can thus conclude that $\xx=\Sigma(\uu, 0) \notin \Cpe{n}$ similarly as done in Theorem \ref{thm:nec_mi_dt} (since $\xx$ is a sum of a signal spanning only $n-1$ directions plus arbitrarily small perturbations).
	To conclude the proof, since $\Sigma$ is a stable linear system, we have that all solutions $\xx=\Sigma(\uu, x(0))$ converge exponentially to those initialized in $x(0)=0$. Since a vanishing term cannot guarantee PE of a signal, we can conclude that for all $x(0)\in \R^n$, if $\DD^n(\uu)\notin \Cppe{nm}{n}$, then $\xx=\Sigma(\uu, x(0)) \notin \Cpe{n}$.
	We omit the second statement since it combines arguments from previous proofs.

	\subsection{Proof of Theorem \ref{thm:suff_mi_ct}}\label{proof:suff_mi_ct}
	
	We prove that $\DD^n(\uu)\in \Cpe{nm} \implies \Sigma(\uu, x(0))\in \Cpe{n}$ for $\Sigma \in \hctrbx$ and for all $x(0)\in \R^n$ by contraposition, i.e., we show that for all $x(0)\in \R^n$, $\Sigma(\uu, x(0))\notin \Cpe{n} \implies \DD^n(\uu)\notin \Cpe{nm}$.
	We prove this in four points.
	
	\noindent
	I) \emph{The lack of PE of $\xx$ constrains the system input.}
	
	If $\xx\coloneqq\Sigma(\uu, x(0))\notin \Cpe{n}$, by applying Lemma~\ref{lemma:PE_of_derivatives} we have that for all $T, \epsilon>0$ we can find a unitary direction $z\in \R^n$ and $\tet>0$ such that
	
	\begin{align}\label{eq:small_x_ct}
		|z^\top \dot{x}(\tau)|\leq \epsilon &&&  |z^\top x(\tau)|\leq \epsilon 
	\end{align}
	
	for all $\tau \in[\tet, \tet+T]$.
	Along direction $z$, the system dynamics read
	\begin{equation}
		z^\top \dot{x}(\tau) = z^\top Ax(\tau) + z^\top Bu(\tau).
	\end{equation}
	If $z^\top B \neq 0$, we can find the input $u(\tau)$ as
	\begin{equation}\label{eq:closed_loop_u_mi_ct}
		u(\tau) = \underbrace{-(z^\top B)^\dagger z^\top A}_{\coloneqq K}x(\tau) + \underbrace{(z^\top B)^\dagger z^\top \dot{x}(\tau)}_{\coloneqq \tu(\tau)} + v(\tau),
	\end{equation}
	where $|\tu(\tau)|\leq |(z^\top B)^\dagger|\epsilon$, and $v(\tau) \in \ker(z^\top B)$.
	
	\noindent
	II) \emph{The closed-loop dynamics depends only on $x(\tau), v(\tau)$.}
	
	The dynamics become
	\begin{equation}\label{eq:closed_loop_dynamics_mi_ct}
		\begin{split}
			u(\tau) &=Kx(\tau) +  \tu(\tau) + v(\tau),\\
			\dot{x}(\tau)&=\underbrace{(A - BK)}_{\coloneqq \tilde{A}}x(\tau) + B\tu(\tau) + Bv(\tau)
		\end{split}
	\end{equation}		
	where, for all $\tau \in[\tet, \tet+T]$, it holds $|B\tu(\tau)|\leq \epsilon$ and $v(\tau) \in \ker(z^\top B)$.
	Consider the signal $\UU(\uu)\coloneqq\DD^n(\uu)$. Given \eqref{eq:closed_loop_u_mi_ct}, we obtain
	\begin{equation}\label{eq:U_expression_ct}
		U(\tau) =
		(I_n \otimes K )
		\begin{bmatrix}
			\dot{x}^{(n-1)}(\tau)\\
			\vdots\\
			x(\tau)
		\end{bmatrix}
		+
		\underbrace{\begin{bmatrix}
				\dot{\tu}^{(n-1)}(\tau)\\
				\vdots\\
				\tu(\tau)
		\end{bmatrix}}_{\coloneqq \tU(\tau)}
		+
		\underbrace{\begin{bmatrix}
				\dot{v}^{(n-1)}(\tau)\\
				\vdots\\
				v(\tau)
		\end{bmatrix}}_{\coloneqq V(\tau)}.
	\end{equation}
	Using \eqref{eq:closed_loop_dynamics_mi_ct}, it holds
	\begin{equation}
		\begin{bmatrix}
			\dot{x}^{(n-1)}(\tau)\\
			\vdots\\
			x(\tau)
		\end{bmatrix}\!=\!
		F\dot{x}^{(n-1)}(\tau)
		+ G\left(
		\tU(\tau) + V(\tau)
		\right),
	\end{equation}
	where 
	\begin{equation}
		F=
		\begin{bmatrix}
			I\\
			\vdots\\
			\tilde{A}^{n-1}
		\end{bmatrix},\;\;\;
		G = 
		\begin{bmatrix}
			0 & \ldots & \ldots & 0\\
			B & 0 & \ldots & 0\\
			\vdots& \vdots & \vdots & \vdots\\
			A^{n-2}B & \ldots & B & 0
		\end{bmatrix},
	\end{equation}
	and substituting in \eqref{eq:U_expression_ct}, we obtain
	\begin{equation}
		\begin{split}
			U(\tau) &\!=\! \underbrace{(I_n \!\otimes\! K)F}_{\coloneqq F^\prime}\dot{x}^{(n-1)}(\tau)\! +\! \underbrace{((I_n \!\otimes \! K)G\!+\!I)}_{G^\prime}
			\left(\!
			\tU(\tau)\!+\!\!
			V(\tau)
			\!\right)\!,\\
			&=
			\begin{bmatrix}
				F^\prime & G^\prime
			\end{bmatrix}
			\begin{bmatrix}
				\dot{x}^{(n-1)}(\tau) \\ 
				V(\tau)
			\end{bmatrix}
			+G^\prime\tU(\tau).
		\end{split}
	\end{equation}
	
	\noindent
	III) \emph{A lack of PE in $\xx$ implies $\DD^{n}(\uu)$ is not PE.}

	Notice that in the period $\tau\in[\tet, \tet+T]$, it holds 
	\begin{enumerate}
		\item[i)] $v(\tau) \in \ker(z^\top B)$, with $\dim(\ker(z^\top B))\leq m-1$, as per \eqref{eq:closed_loop_u_mi_ct}.
		\item[ii)] $|z^\top x(\tau)|\leq \epsilon$ by assumption. 
		\item[iii)] $|\tU(\tau)|\leq \sqrt{n}|(z^\top B)^\dagger|\epsilon$, from \eqref{eq:closed_loop_u_mi_ct}.
	\end{enumerate}
	Holding i) and ii), the space persistently spanned by $(x(\tau), V(\tau))$ is at most $(n-1)+n(m-1)=nm-1$ dimensional. In other words, by following the same procedure as in Lemma \ref{lemma:not_PE_in_CT}, we can write for all $\tau\in[\tet, \tet+T]$
	\begin{equation}
		\begin{bmatrix}
			\dot{x}^{(n-1)}(\tau) \\ 
			V(\tau)
		\end{bmatrix}=
		E \lambda(\tau) + \tilde{\lambda}(\tau),
	\end{equation}
	where $E \in \R^{(nm)\times (nm-1)}$ stacks the directions which are persistently spanned by $(x(\tau), V(\tau))$, $\lambda(\tau) \in \R^{nm-1}$ stacks the projections of $(x(\tau), V(\tau))$ along these directions, and $\tilde{\lambda}(\tau)\in \R^{nm}, |\tilde{\lambda}(\tau)|\leq \epsilon$ is an arbitrarily small perturbation.
	We have
	\begin{equation}
		U(\tau) = 
		\underbrace{\begin{bmatrix}
				F^\prime & G^\prime
			\end{bmatrix} E}_{\coloneqq H}\lambda(\tau) + 
		\begin{bmatrix}
			F^\prime & G^\prime
		\end{bmatrix} \tilde{\lambda}(\tau)
		+G^\prime\tU(\tau),
	\end{equation}
	and since $\begin{bmatrix}
		F^\prime & G^\prime
	\end{bmatrix} E \in \R^{nm\times (nm-1)}$, there exists $z\in \R^{nm}$ such that $z^\top \begin{bmatrix}
		F^\prime & G^\prime
	\end{bmatrix} E=0$, which reads
	\begin{equation}
		z^\top U(\tau) = z^\top
		\begin{bmatrix}
			F^\prime & G^\prime
		\end{bmatrix} \tilde{\lambda}(\tau)
		+z^\top G^\prime\tU(\tau).
	\end{equation}
	Being both $\tU(\tau)$ and $\tilde{\lambda}$ arbitrarily small in the arbitrarily long interval $\tau\in[\tet, \tet+T]$, we conclude $\DD^n(\uu)\notin \Cpe{nm}$.
	
	\noindent
	IV) \emph{The case of $z^\top B=0$.}
	
	We omit the analysis of this case since it follows the same steps done in the fourth point of Proof \ref{proof:suff_mi_dt}.
	
	We move now to the second statement of the theorem. Since it has an analogous  proof, we sketch only the main differences with respect to the proof given before. We want to prove by contraposition that $\DD^n(\uu)\in \Cpe{(n+1)m} \implies \Sigma(\uu, x(0) \in \Cpe{n+m}$ for $\Sigma \in \hctrbxu$ and for all $x(0)\in\R^n$ by showing that for all $x(0)\in \R^n$, $\Sigma(\uu, x(0))\notin \Cpe{n+m}\implies \DD^{n+1}(\uu)\notin \Cpe{(n+1)m}$.
	
	\noindent
	I) \emph{The lack of PE of $(\xx, \uu)$ constrains the system input.}
	
	If $(\xx, \uu)\notin \Cpe{n+m}$ we have that for all $T, \epsilon>0$ we can find $z=(z_x, z_u)\in \R^{n+m}$, $\tet >0$ such that
	\begin{equation}
		z_x^\top x(\tau) = -z_u^\top u(\tau) + \chi(\tau),
	\end{equation}
	where $\chi(\tau) \in \R, |\chi(\tau)|\leq \epsilon$ for all $\tau \in[\tet, \tet+T]$. By isolating $u(\tau)$ and deriving it, we obtain
	\begin{equation}\label{eq:next_u_ct}
		\dot{u}(\tau) = -(z_u^\top)^\dagger z_x^\top \left( Ax(\tau) + Bu(\tau) \right) +(z_u^\top)^\dagger \dot{\chi}(\tau) + \dot{v}(\tau),
	\end{equation}
	where $\dot{v}(\tau)\in \ker(z_u^\top)$. 
	
	\noindent
	II) \emph{The closed-loop dynamics depends only on $x(\tau), v(\tau), u(\tau)$.}
	
	Since we can write $\dot{x}(\tau)=Ax(\tau) + Bu(\tau)$, in the interval $\tau\in[\tet, \tet+T]$ we can use \eqref{eq:next_u_ct} to express each $u(\tau), \dot{u}(\tau), \ldots, \dot{u}^{(n)}(\tau)$ as a linear function of $x(\tau), u(\tau), \dot{v}(\tau), \ldots, \dot{v}^{(n)}(\tau)$ (similarly to \eqref{eq:U_expression_ct}) plus an arbitrarily small quantity $\tU(\tau)$. In other words, we have
	\begin{equation}\label{eq:U_not_spanning_ct}
		\begin{bmatrix}
			u(\tau)\\
			\vdots\\
			\dot{u}^{(n)}(\tau)
		\end{bmatrix}
		=
		K
		\begin{bmatrix}
			u(\tau)\\
			x(\tau)\\
			\dot{v}(\tau)\\
			\vdots\\
			\dot{v}^{(n)}(\tau)
		\end{bmatrix}
		+ \tU(\tau)
	\end{equation}
	
	\noindent
	III) \emph{A lack of PE in $(\xx, \uu)$ implies $\DD^{n+1}(\uu)$ not PE.}
	
	Since $(\xx, \uu)$ is not PE, it spans at most $n+m-1$ directions. Since $v(\tau) \in \ker(z_u^\top)$, the vector $(\dot{v}(\tau), \ldots, \dot{v}^{(n)}(\tau))$ spans at most $(m-1)n$ directions. Overall, we have that the right-hand side of \eqref{eq:U_not_spanning_ct} spans persistently only $n+m-1 + (m-1)n= (n+1)m-1$ directions, which means that the signal $(u(\tau), \ldots, \dot{u}^{(n)}(\tau))\in \R^{(n+1)m}$ on the left-hand side of \eqref{eq:U_not_spanning_ct} spans persistently only $(n+1)m-1$ directions, namely, $\DD^{n+1}(\uu)\notin \Cpe{(n+1)m}$.

	\bibliography{note_on_SR_bib}

\begin{thebibliography}{10}
\providecommand{\url}[1]{#1}
\csname url@samestyle\endcsname
\providecommand{\newblock}{\relax}
\providecommand{\bibinfo}[2]{#2}
\providecommand{\BIBentrySTDinterwordspacing}{\spaceskip=0pt\relax}
\providecommand{\BIBentryALTinterwordstretchfactor}{4}
\providecommand{\BIBentryALTinterwordspacing}{\spaceskip=\fontdimen2\font plus
\BIBentryALTinterwordstretchfactor\fontdimen3\font minus
  \fontdimen4\font\relax}
\providecommand{\BIBforeignlanguage}[2]{{%
\expandafter\ifx\csname l@#1\endcsname\relax
\typeout{** WARNING: IEEEtran.bst: No hyphenation pattern has been}%
\typeout{** loaded for the language `#1'. Using the pattern for}%
\typeout{** the default language instead.}%
\else
\language=\csname l@#1\endcsname
\fi
#2}}
\providecommand{\BIBdecl}{\relax}
\BIBdecl

\bibitem{morgan1977stability}
A.~Morgan and K.~Narendra, ``On the stability of nonautonomous differential
  equations $\dot{\textup{x}}$=[{A}+{B}(t)]x, with skew symmetric matrix
  {B}(t),'' \emph{SIAM Journal on Control and Optimization}, vol.~15, no.~1,
  pp. 163--176, 1977.

\bibitem{morgan1977uniform}
------, ``On the uniform asymptotic stability of certain linear nonautonomous
  differential equations,'' \emph{SIAM Journal on Control and Optimization},
  vol.~15, no.~1, pp. 5--24, 1977.

\bibitem{anderson1977exponential}
B.~Anderson, ``Exponential stability of linear equations arising in adaptive
  identification,'' \emph{IEEE Transactions on Automatic Control}, vol.~22,
  no.~1, pp. 83--88, 1977.

\bibitem{narendra2012stable}
K.~S. Narendra and A.~M. Annaswamy, \emph{Stable adaptive systems}.\hskip 1em
  plus 0.5em minus 0.4em\relax Courier Corporation, 2012.

\bibitem{sastry2011adaptive}
S.~Sastry and M.~Bodson, \emph{Adaptive control: stability, convergence and
  robustness}.\hskip 1em plus 0.5em minus 0.4em\relax Courier Corporation,
  2011.

\bibitem{ioannou1996robust}
P.~A. Ioannou and J.~Sun, \emph{Robust adaptive control}.\hskip 1em plus 0.5em
  minus 0.4em\relax PTR Prentice-Hall Upper Saddle River, NJ, 1996, vol.~1.

\bibitem{bitmead1980lyapunov}
R.~Bitmead and B.~Anderson, ``Lyapunov techniques for the exponential stability
  of linear difference equations with random coefficients,'' \emph{IEEE
  Transactions on Automatic Control}, vol.~25, no.~4, pp. 782--787, 1980.

\bibitem{narendra1987persistent}
K.~S. Narendra and A.~M. Annaswamy, ``Persistent excitation in adaptive
  systems,'' \emph{International Journal of Control}, vol.~45, no.~1, pp.
  127--160, 1987.

\bibitem{loria1999new}
A.~Lor{\'\i}a, E.~Panteley, and A.~Teel, ``A new notion of
  persistency-of-excitation for {UGAS} of {NLTV} systems: Application to
  stabilisation of nonholonomic systems,'' in \emph{1999 European Control
  Conference (ECC)}.\hskip 1em plus 0.5em minus 0.4em\relax IEEE, 1999, pp.
  1363--1368.

\bibitem{loria2000ugas}
------, ``{UGAS} of nonlinear time-varying systems: A $\delta$-persistency of
  excitation approach,'' in \emph{Proceedings of the 39th IEEE Conference on
  Decision and Control}, vol.~4.\hskip 1em plus 0.5em minus 0.4em\relax IEEE,
  2000, pp. 3489--3494.

\bibitem{panteley2001relaxed}
E.~Panteley, A.~Loria, and A.~Teel, ``Relaxed persistency of excitation for
  uniform asymptotic stability,'' \emph{IEEE Transactions on Automatic
  Control}, vol.~46, no.~12, pp. 1874--1886, 2001.

\bibitem{loria2002uniform}
A.~Lor{\i}a and E.~Panteley, ``Uniform exponential stability of linear
  time-varying systems: revisited,'' \emph{Systems \& Control Letters},
  vol.~47, no.~1, pp. 13--24, 2002.

\bibitem{loria2003persistency}
A.~Loria, E.~Panteley, D.~Popovic, and A.~R. Teel, ``Persistency of excitation
  for uniform convergence in nonlinear control systems,'' \emph{arXiv preprint
  math/0301335}, 2003.

\bibitem{loria2005nested}
A.~Lor{\'\i}a, E.~Panteley, D.~Popovic, and A.~R. Teel, ``A nested matrosov
  theorem and persistency of excitation for uniform convergence in stable
  nonautonomous systems,'' \emph{IEEE Transactions on Automatic Control},
  vol.~50, no.~2, pp. 183--198, 2005.

\bibitem{saoud2024hybrid}
A.~Saoud, M.~Maghenem, A.~Loria, and R.~G. Sanfelice, ``Hybrid persistency of
  excitation in adaptive estimation for hybrid systems,'' \emph{IEEE
  Transactions on Automatic Control}, 2024.

\bibitem{aastrom1965numerical}
K.-J. {\AA}str{\"o}m and B.~Torsten, ``Numerical identification of linear
  dynamic systems from normal operating records,'' \emph{IFAC Proceedings
  Volumes}, vol.~2, no.~2, pp. 96--111, 1965.

\bibitem{aastrom1971system}
K.~J. {\AA}str{\"o}m and P.~Eykhoff, ``System identification—a survey,''
  \emph{Automatica}, vol.~7, no.~2, pp. 123--162, 1971.

\bibitem{ljung1990adaptation}
L.~Ljung and S.~Gunnarsson, ``Adaptation and tracking in system
  identification—a survey,'' \emph{Automatica}, vol.~26, no.~1, pp. 7--21,
  1990.

\bibitem{sondhi1976new}
M.~Sondhi and D.~Mitra, ``New results on the performance of a well-known class
  of adaptive filters,'' \emph{Proceedings of the IEEE}, vol.~64, no.~11, pp.
  1583--1597, 1976.

\bibitem{weiss1979digital}
A.~Weiss and D.~Mitra, ``Digital adaptive filters: Conditions for convergence,
  rates of convergence, effects of noise and errors arising from the
  implementation,'' \emph{IEEE Transactions on Information Theory}, vol.~25,
  no.~6, pp. 637--652, 1979.

\bibitem{goodwin1980discrete}
G.~Goodwin, P.~Ramadge, and P.~Caines, ``Discrete-time multivariable adaptive
  control,'' \emph{IEEE Transactions on Automatic Control}, vol.~25, no.~3, pp.
  449--456, 1980.

\bibitem{anderson1982exponential}
B.~D. Anderson and C.~R. Johnson~Jr, ``Exponential convergence of adaptive
  identification and control algorithms,'' \emph{Automatica}, vol.~18, no.~1,
  pp. 1--13, 1982.

\bibitem{elliott1985global}
H.~Elliott, R.~Cristi, and M.~Das, ``Global stability of adaptive pole
  placement algorithms,'' \emph{IEEE Transactions on Automatic Control},
  vol.~30, no.~4, pp. 348--356, 1985.

\bibitem{kreisselmeier1977adaptive}
G.~Kreisselmeier, ``Adaptive observers with exponential rate of convergence,''
  \emph{IEEE Transactions on Automatic Control}, vol.~22, no.~1, pp. 2--8,
  1977.

\bibitem{anderson1977approach}
B.~D. Anderson, ``An approach to multivariable system identification,''
  \emph{Automatica}, vol.~13, no.~4, pp. 401--408, 1977.

\bibitem{boyd1983parameter}
S.~Boyd and S.~Sastry, ``On parameter convergence in adaptive control,''
  \emph{Systems \& Control Letters}, vol.~3, no.~6, pp. 311--319, 1983.

\bibitem{boyd1986necessary}
S.~Boyd and S.~S. Sastry, ``Necessary and sufficient conditions for parameter
  convergence in adaptive control,'' \emph{Automatica}, vol.~22, no.~6, pp.
  629--639, 1986.

\bibitem{kudva1974identification}
P.~Kudva and K.~Narendra, ``An identification procedure for discrete
  multivariable systems,'' \emph{IEEE Transactions on Automatic Control},
  vol.~19, no.~5, pp. 549--552, 1974.

\bibitem{bitmead1984persistence}
R.~Bitmead, ``Persistence of excitation conditions and the convergence of
  adaptive schemes,'' \emph{IEEE Transactions on Information Theory}, vol.~30,
  no.~2, pp. 183--191, 1984.

\bibitem{possieri2022value}
C.~Possieri and M.~Sassano, ``Value iteration for continuous-time linear
  time-invariant systems,'' \emph{IEEE Transactions on Automatic Control},
  vol.~68, no.~5, pp. 3070--3077, 2022.

\bibitem{carnevale2023onpolicy}
L.~Sforni, G.~Carnevale, I.~Notarnicola, and G.~Notarstefano, ``On-policy
  data-driven linear quadratic regulator via combined policy iteration and
  recursive least squares,'' in \emph{2023 62nd IEEE Conference on Decision and
  Control (CDC)}, 2023, pp. 5047--5052.

\bibitem{borghesi2024mr}
M.~Borghesi, A.~Bosso, and G.~Notarstefano, ``{MR}-{ARL}: Model reference
  adaptive reinforcement learning for robustly stable on-policy data-driven
  lqr,'' \emph{arXiv preprint arXiv:2402.14483}, 2024.

\bibitem{willems2005note}
J.~C. Willems, P.~Rapisarda, I.~Markovsky, and B.~L. De~Moor, ``A note on
  persistency of excitation,'' \emph{Systems \& Control Letters}, vol.~54,
  no.~4, pp. 325--329, 2005.

\bibitem{de2019formulas}
C.~De~Persis and P.~Tesi, ``Formulas for data-driven control: Stabilization,
  optimality, and robustness,'' \emph{IEEE Transactions on Automatic Control},
  vol.~65, no.~3, pp. 909--924, 2019.

\bibitem{van2020data}
H.~J. Van~Waarde, J.~Eising, H.~L. Trentelman, and M.~K. Camlibel, ``Data
  informativity: A new perspective on data-driven analysis and control,''
  \emph{IEEE Transactions on Automatic Control}, vol.~65, no.~11, pp.
  4753--4768, 2020.

\bibitem{berberich2020data}
J.~Berberich, J.~K{\"o}hler, M.~A. M{\"u}ller, and F.~Allg{\"o}wer,
  ``Data-driven model predictive control with stability and robustness
  guarantees,'' \emph{IEEE Transactions on Automatic Control}, vol.~66, no.~4,
  pp. 1702--1717, 2020.

\bibitem{van2020willems}
H.~J. Van~Waarde, C.~De~Persis, M.~K. Camlibel, and P.~Tesi, ``Willems’
  fundamental lemma for state-space systems and its extension to multiple
  datasets,'' \emph{IEEE Control Systems Letters}, vol.~4, no.~3, pp. 602--607,
  2020.

\bibitem{de2023learning}
C.~De~Persis and P.~Tesi, ``Learning controllers for nonlinear systems from
  data,'' \emph{Annual Reviews in Control}, p. 100915, 2023.

\bibitem{lopez2023efficient}
V.~G. Lopez, M.~Alsalti, and M.~A. M{\"u}ller, ``Efficient off-policy
  q-learning for data-based discrete-time lqr problems,'' \emph{IEEE
  Transactions on Automatic Control}, vol.~68, no.~5, pp. 2922--2933, 2023.

\bibitem{ljung1971characterization}
L.~Ljung, ``Characterization of the concept of'persistently exciting'in the
  frequency domain,'' 1971.

\bibitem{yuan1977probing}
J.~Yuan and W.~Wonham, ``Probing signals for model reference identification,''
  \emph{IEEE Transactions on Automatic Control}, vol.~22, no.~4, pp. 530--538,
  1977.

\bibitem{nordstrom1987persistency}
N.~Nordstr{\"o}m and S.~S. Sastry, ``Persistency of excitation in possibly
  unstable continuous time systems and parameter convergence in adaptive
  identification,'' in \emph{Adaptive Systems in Control and Signal Processing
  1986}.\hskip 1em plus 0.5em minus 0.4em\relax Elsevier, 1987, pp. 347--352.

\bibitem{moore1983persistence}
J.~Moore, ``Persistence of excitation in extended least squares,'' \emph{IEEE
  Transactions on Automatic Control}, vol.~28, no.~1, pp. 60--68, 1983.

\bibitem{bai1985persistency}
E.-W. Bai and S.~S. Sastry, ``Persistency of excitation, sufficient richness
  and parameter convergence in discrete time adaptive control,'' \emph{Systems
  \& Control Letters}, vol.~6, no.~3, pp. 153--163, 1985.

\bibitem{green1986persistence}
M.~Green and J.~B. Moore, ``Persistence of excitation in linear systems,''
  \emph{Systems \& Control Letters}, vol.~7, no.~5, pp. 351--360, 1986.

\bibitem{mareels1988persistency}
I.~M. Mareels and M.~Gevers, ``Persistency of excitation criteria for linear,
  multivariable, time-varying systems,'' \emph{Mathematics of Control, Signals
  and Systems}, vol.~1, pp. 203--226, 1988.

\bibitem{padoan2017geometric}
A.~Padoan, G.~Scarciotti, and A.~Astolfi, ``A geometric characterization of the
  persistence of excitation condition for the solutions of autonomous
  systems,'' \emph{IEEE Transactions on Automatic Control}, vol.~62, no.~11,
  pp. 5666--5677, 2017.

\bibitem{markovsky2023persistency}
I.~Markovsky, E.~Prieto-Araujo, and F.~D{\"o}rfler, ``On the persistency of
  excitation,'' \emph{Automatica}, vol. 147, p. 110657, 2023.

\bibitem{coulson2022quantitative}
J.~Coulson, H.~J. Van~Waarde, J.~Lygeros, and F.~D{\"o}rfler, ``A quantitative
  notion of persistency of excitation and the robust fundamental lemma,''
  \emph{IEEE Control Systems Letters}, vol.~7, pp. 1243--1248, 2022.

\bibitem{berberich2023quantitative}
J.~Berberich, A.~Iannelli, A.~Padoan, J.~Coulson, F.~D{\"o}rfler, and
  F.~Allg{\"o}wer, ``A quantitative and constructive proof of willems'
  fundamental lemma and its implications,'' in \emph{2023 American Control
  Conference (ACC)}.\hskip 1em plus 0.5em minus 0.4em\relax IEEE, 2023, pp.
  4155--4160.

\bibitem{rapisarda2022persistency}
P.~Rapisarda, M.~K. Camlibel, and H.~Van~Waarde, ``A persistency of excitation
  condition for continuous-time systems,'' \emph{IEEE Control Systems Letters},
  vol.~7, pp. 589--594, 2022.

\bibitem{lopez2022continuous}
V.~G. Lopez and M.~A. M{\"u}ller, ``On a continuous-time version of
  {W}illems’ lemma,'' in \emph{2022 IEEE 61st Conference on Decision and
  Control (CDC)}.\hskip 1em plus 0.5em minus 0.4em\relax IEEE, 2022, pp.
  2759--2764.

\bibitem{sastry1990adaptive}
S.~Sastry and M.~Bodson, ``Adaptive control: stability, convergence, and
  robustness,'' 1990.

\bibitem{Wonham74}
W.~M. Wonham, \emph{Linear multivariable control: a geometric approach}.\hskip
  1em plus 0.5em minus 0.4em\relax Berlin; New York: Springer-Verlag, 1974.

\end{thebibliography}
	\bibliographystyle{IEEEtran}

\vfill\null

\end{document}